\newcommand\E{\mathbb E}
\newtheorem{defn}{Definition}
\newtheorem{theorem}{Theorem}
\newtheorem{theorem2}{Theorem}
\newtheorem{proposition}{Proposition}
\newtheorem{assumption}{Assumption}
\newtheorem{lemm}{Lemma}
\title{Decentralized Mean Field Games}
\author {
    Sriram Ganapathi Subramanian\textsuperscript{\rm 1, \rm 3},
    Matthew E. Taylor\textsuperscript{\rm 2, \rm 4},
    Mark Crowley\textsuperscript{\rm 1},
    Pascal Poupart\textsuperscript{\rm 1, \rm 3}
}
\begin{document}

\maketitle

\begin{abstract}
Multiagent reinforcement learning algorithms have not been widely adopted in large scale environments with many agents as they often scale poorly with the number of agents. Using mean field theory to aggregate agents has been proposed as a solution to this problem. However, almost all previous methods in this area make a strong assumption of a centralized system where all the agents in the environment learn the same policy and are effectively indistinguishable from each other. In this paper, we relax this assumption about indistinguishable agents and propose a new mean field system known as \emph{Decentralized Mean Field Games}, where each agent can be quite different from others. All agents learn independent policies in a decentralized fashion, based on their local observations. We define a theoretical solution concept for this system and provide a fixed point guarantee for a $Q$-learning based algorithm in this system. A practical consequence of our approach is that we can address a `chicken-and-egg' problem in empirical mean field reinforcement learning algorithms. Further, we provide $Q$-learning and actor-critic algorithms that use the decentralized mean field learning approach and give stronger performances compared to common baselines in this area. In our setting, agents do not need to be clones of each other and learn in a fully decentralized fashion. Hence, for the first time, we show the application of mean field learning methods in fully competitive environments, large-scale continuous action space environments, and other environments with heterogeneous agents. Importantly, we also apply the mean field method in a ride-sharing problem using a real-world dataset. We propose a decentralized solution to this problem, which is more practical than existing centralized training methods.
\end{abstract}

\section{Introduction}
Most multiagent reinforcement learning (MARL) algorithms are not tractable when applied to environments with many agents (infinite in the limit) as these algorithms are exponential in the number of agents \cite{busoniu2006multi}. One exception is a class of algorithms that use the mean field theory \citep{stanley1971phase} to approximate the many agent setting to a two agent setting, where the second agent is a mean field distribution of all agents representing the average effect of the population. This makes MARL algorithms tractable since, effectively, only two agents are being modelled. 
\citet{lasry2007mean} introduced the framework of a \emph{mean field game} (MFG), which incorporates the mean field theory in MARL. 
In MARL, the mean field can be a population state distribution \citep{huang2010large} or action  distribution \citep{pmlr-v80-yang18d}  of all the other agents in the environment. 

MFGs have three common assumptions. First, each agent does not have access to the local information of the other agents. However, it has access to accurate global information regarding the mean field of the population. Second, all agents in the environment are independent, homogeneous, and indistinguishable. Third, all agents maintain interactions with others only through the mean field. These assumptions (especially the first two) severely restrict the potential of using mean field methods in real-world environments. The first assumption is impractical, while the second assumption implies that all agents share the same state space, action space, reward function, and have the same objectives. 
Further, given these assumptions, prior works use centralized learning methods, where all agents learn and update a shared centralized policy. These two assumptions are only applicable to cooperative environments with extremely similar agents. Theoretically, the agent indices are omitted since all agents are interchangeable \citep{lasry2007mean}.
We will relax the first two assumptions. In our case, the agents are not interchangeable and can each formulate their own policies during learning that differs from others. Also, we will not assume the availability of the immediate global mean field. Instead, agents only have local information and use modelling techniques (similar to opponent modelling common in MARL \citep{hernandez2019survey}) to effectively model the mean field during the training process. 
We retain the assumption that each agent's impact on the environment is infinitesimal \cite{huang2010large}, and hence agents calculate best responses only to the mean field. Formulating best responses to each individual agent is intractable and unnecessary \citep{lasry2007mean}.


The solution concepts proposed by previous mean field methods have been either the centralized Nash equilibrium \citep{pmlr-v80-yang18d} or a closely related mean field equilibrium \citep{lasry2007mean}. These solution concepts are centralized as they require knowledge of the current policy of all other agents or other global information, even in non-cooperative environments. Verifying their existence is infeasible in many practical environments \citep{neumann1928theorie}. 

This work presents a new kind of mean field system, \emph{Decentralized Mean Field Games} (DMFGs), which uses a decentralized information structure. This new formulation of the mean field system relaxes the assumption of agents' indistinguishability and makes mean field methods applicable to numerous real-world settings. Subsequently, we provide a decentralized solution concept for learning in DMFGs, which will be more practical than the centralized solution concepts considered previously. We also provide a fixed point guarantee for a $Q$-learning based algorithm in this system. 

A `chicken-and-egg' problem exists in empirical mean field reinforcement learning algorithms where the mean field requires agent policies, yet the policies cannot be learned without the global mean field \citep{Yang2020overview}. We show that our formulation can address this problem, and we provide practical algorithms to learn in DMFGs. We test our algorithms in different types of many agent environments. We also provide an example of a ride-sharing application that simulates demand and supply based on a real-world dataset.


\section{Background}\label{sec:background}

\textbf{Stochastic Game}: An $N$-player stochastic (Markovian) game can be represented as a tuple $\langle \mathcal{S}, \mathcal{A}^1, \ldots, \mathcal{A}^N, r^1, \ldots, r^N, p, \gamma \rangle$, where $\mathcal{S}$ is the state space, $\mathcal{A}^j$ represents the action space of the agent $j \in \{1, \ldots, N\}$, and $r^j: \mathcal{S} \times \mathcal{A}^1 \times \cdots \times \mathcal{A}^N \xrightarrow{} \mathcal{R}$ represents the reward function of $j$. Also, $p: \mathcal{S} \times \mathcal{A}^1 \times \cdots \times \mathcal{A}^N \xrightarrow{} \Omega(\mathcal{S})$ represents the transition probability that determines the next state given the current state and the joint action of all agents. Here $\Omega(\mathcal{S})$ is a probability distribution over the state space. In the stochastic game, agents aim to maximize the expected discounted sum of rewards, with discount factor $\gamma \in [0,1)$.

Each agent $j$ in the stochastic game formulates a policy, which is denoted by $\pi^j: \mathcal{S} \xrightarrow{} \Omega(\mathcal{A}^j) $ where the joint policy is represented as $\boldsymbol{\pi} = [\pi^1, \ldots, \pi^N]$ for all $s \in \mathcal{S}$. Given an initial state $s$, the value function of the agent $j$ under the joint policy $\boldsymbol{\pi}$ can be represented as $v^j(s | \boldsymbol{\pi}) = \sum_{t=0}^\infty \gamma^t \E [r^j_t | s_0 = s, \boldsymbol{\pi}]$. The solution concept is the \emph{Nash equilibrium} (NE) \citep{hu2003nash}. This is  represented by a joint policy $\boldsymbol{\pi}_* = [\pi^1_*, \ldots, \pi^N_*]$, such that, for all $\pi^j$,  $v^j(s|\pi^j_*, \boldsymbol{\pi}^{-j}_*) \geq v^j(s|\pi^j, \boldsymbol{\pi}^{-j}_*)$, for all agents $j \in \{1, \ldots, N\}$ and all states $s \in \mathcal{S}$. The notation $\boldsymbol{\pi}^{-j}_*$ represents the joint policy of all agents except the agent $j$.

\textbf{Mean Field Game}: MFG was introduced as a framework to solve the stochastic game when the number of agents $N$ is very large \citep{lasry2007mean, huang2006large}. In this setting, calculating the best response to each individual opponent is intractable, so each agent responds to the aggregated state distribution  $\mu_t (s) \triangleq \lim_{N \xrightarrow{} \infty} \frac{\sum_{j=1}^N \boldsymbol{1} (s^j_t = s)}{N}$, known as the mean field. Here $\boldsymbol{1}$ denotes the indicator function, i.e. $\boldsymbol{1}(x) = 1$, if $x$ is true and $0$ otherwise. Let $\boldsymbol{\mu} \triangleq \{ \mu_t \}^\infty_{t=0}$. MFG assumes that all agents are identical (homogeneous), indistinguishable, and interchangeable \citep{lasry2007mean}. Given this assumption, the environment changes to a single-agent stochastic control problem where all agents share the same policy \citep{saldi2018discrete}. Hence, $\pi^1 = \cdots = \pi^N = \boldsymbol{\pi}$. The theoretical formulation focuses on a representative player, and the solution of this player (optimal policy) obtains the solution for the entire system. The value function of a representative agent can be given as $V(s, \boldsymbol{\pi}, \boldsymbol{\mu}) \triangleq \E_{\boldsymbol{\pi}} \big[ \sum_{t=0}^\infty \gamma^t r(s_t, a_t, \mu_t) | s_0 = s \big]
$, where $s$ and $a$ are the state and action of the representative agent, respectively. The transition dynamics is represented as $P(s_t, a_t, \mu_t)$, where the dependence is on the mean field distribution. All agents share the same reward function $r(s_t, a_t, \mu_t)$. The action comes from the policy $\pi_t(a_t | s_t, \mu_t)$.

The solution concept for the MFG is the mean field equilibrium (MFE). A tuple $(\boldsymbol{\pi}^*_{MFG}, \boldsymbol{\mu}^*_{MFG})$ is a mean field equilibrium if, for any policy $\boldsymbol{\pi}$, an initial state $s \in \mathcal{S}$ and given mean field $\boldsymbol{\mu}^*_{MFG}$, $
    V(s, \boldsymbol{\pi}^*_{MFG}, \boldsymbol{\mu}^*_{MFG}) \geq V(s, \boldsymbol{\pi}, \boldsymbol{\mu}^*_{MFG})$. 
Additionally, $\boldsymbol{\mu}^*_{MFG}$ is the mean field obtained when all agents play the same $\boldsymbol{\pi}^*_{MFG}$ at each $s \in \mathcal{S}$.


\textbf{Mean Field Reinforcement Learning} (MFRL): MFRL, introduced by \citet{pmlr-v80-yang18d}, is another approach for learning in stochastic games with a large number of agents. Here, the empirical mean action is used as the mean field, which each agent then uses to update its $Q$-function and Boltzmann policy. Each agent is assumed to have access to the global state, and it takes local action from this state. 
In MFRL, the $Q$-function of the agent $j$ is updated as, 
\begin{equation}  \label{eq:MFQ}
\begin{array}{l}
Q^j(s_t,a^j_t, \mu^a_{t})  \\   \quad \quad  = (1-\alpha) Q^j(s_t,a^j_t, \mu^a_{t}) + \alpha[r^j_t + \gamma v^j(s_{t+1})]
\end{array}
\end{equation}
\begin{align}
&\nonumber \mbox{where } \\ &
v^{j}(s_{t+1}) = \sum_{a^j_{t+1}}\pi^j(a^j_{t+1}|s_{t+1},\mu^a_{t})  Q^j(s_{t+1},a^j_{t+1}, \mu^a_{t}) \label{eq:yangvalueupdate} \\
& 
\mu^a_t = \frac{1}{\mathcal{N}} \sum_{j} a^j_t, a^j_t \sim \pi^j(\cdot|s_t,\mu^a_{t-1}) \label{eq:updatemeana} \\
& \pi^j(a^j_t|s_t, \mu^a_{t-1}) = \frac{\exp(-\hat{\beta} Q^j(s_t,a^j_t, \mu^a_{t-1})}{\sum_{a^{j'}_t\in A^j}\exp(-\hat{\beta} Q^j(s_t,a^{j'}_t,\mu^a_{t-1}))} \label{eq:updatepolicy}
\end{align}

\noindent where $s_t$ is the global old state, $s_{t+1}$ is  the global resulting state, $r^j_t$ is the reward of $j$ at time $t$, $v^j$ is the value function of $j$, $\mathcal{N}$ is the total number of agents, and $\hat{\beta}$ represents the Boltzmann parameter. The action $a^j$ is assumed to be discrete and represented using one-hot encoding. Like stochastic games, MFRL uses the NE as the solution concept. The global mean field $\mu^a_{t}$ captures the action distribution of all agents. To address this global limitation, \citet{pmlr-v80-yang18d} specify mean field calculation over certain neighbourhoods for each agent. However, an update using Eq.~\ref{eq:updatemeana} requires each agent to have access to all other agent policies or the global mean field to use the centralized concept (NE). 
We omit an expectation in Eq.~\ref{eq:yangvalueupdate} since \citet{pmlr-v80-yang18d} guaranteed that their updates will be greedy in the limit with infinite exploration (GLIE). 

From Eq.~\ref{eq:updatemeana} and Eq.~\ref{eq:updatepolicy}, it can be seen that the current action depends on the mean field and the mean field depends on the current action. To resolve this `chicken-and-egg' problem, \citet{pmlr-v80-yang18d} simply use the previous mean field action to decide the current action, as in Eq.~\ref{eq:updatepolicy}. This can lead to a loss of performance since the agents are formulating best responses to the previous mean field action $\mu^a_{t-1}$, while they are expected to respond to the current mean field action $\mu^a_{t}$. 

\section{Related Work}

MFGs were first proposed in \citet{huang2003individual}, while a comprehensive development of the system and principled application methods were given later in \citet{lasry2007mean}. Subsequently, learning algorithms were proposed for this framework. \citet{subramanian2019reinforcement} introduce a restrictive form of MFG (known as stationary MFG) and provide a model-free policy-gradient \citep{sutton1999policy, konda1999actor} algorithm along with convergence guarantees to a local Nash equilibrium. On similar lines, \citet{guo2019learning} provide a model-free $Q$-learning algorithm \citep{watkins1992q} for solving MFGs, also in the stationary setting. The assumptions in these works are difficult to verify in real-world environments. Particularly, \citet{guo2019learning} assume the presence of a game engine (simulator) that accurately provides mean field information to all agents at each time step, which is not practical in many environments. 
Further, other works depend on fictitious play updates for the mean field parameters \citep{hadikhanloo2019finite, elie2020convergence}, 
which involves the strong assumption that opponents play stationary strategies. All these papers use the centralized setting for the theory
and the experiments.

Prior works \citep{gomes2010discrete, adlakha2015equilibria, saldi2018discrete} have established the existence of a (centralized) mean field equilibrium in the discrete-time MFG under a discounted cost criterion, in finite and infinite-horizon settings. Authors have also studied the behaviour of iterative algorithms and provided theoretical analysis for learning of the non-stationary (centralized) mean field equilibrium in infinite-horizon settings \citep{wikecek2015stationary, wikecek2020discrete, anahtarci2019value}. We provide similar guarantees in the decentralized setting with possibly heterogeneous agents.

\citet{pmlr-v80-yang18d} introduces MFRL that uses a mean field approximation through the empirical mean action, and provides two practical algorithms that show good performance in large MARL settings. The approach is model-free and the algorithms do not need strong assumptions regarding the nature of the environment. However, they assume access to global information that needs a centralized setting for both theory and experiments. 
MFRL has been extended to multiple types \citep{Srirammtmfrl2020} and partially observable environments \citep{sriram2021partially}. However, unlike us, \citet{Srirammtmfrl2020} assumes that agents can be divided into a finite set of types, where agents within a type are homogeneous. 
\citet{sriram2021partially} relaxes the assumption of global information in MFRL, however, it still uses the Nash equilibrium as the solution concept.
Further, that work contains some assumptions regarding the existence of conjugate priors, which is hard to verify in real-world environments. Additional related work is provided in Appendix~\ref{sec:morerealtedwork}.

\section{Decentralized Mean Field Game}

The DMFG model is specified by $\langle \boldsymbol{\mathcal{S}}, \boldsymbol{\mathcal{A}}, p, \boldsymbol{R}, \mu_0 \rangle$, where $\boldsymbol{\mathcal{S}} = \mathcal{S}^1 \times \cdots \times \mathcal{S}^N$ represents the state space and $\boldsymbol{\mathcal{A}} =  \mathcal{A}^1 \times \cdots \times \mathcal{A}^N$ represents the joint action space. Here, $\mathcal{S}^j$ represents the state space of an agent $j \in \{ 1, \ldots, N \}$ and $\mathcal{A}^j$ represents the action space of $j$. As in MFGs, we are considering the infinite population limit of the game, where the set of agents $N$, satisfy $N \xrightarrow{} \infty$. 
Similar to the MFG formulation in several prior works \citep{lasry2007mean, huang2006large, saldi2018discrete}, we will specify that both the state and action spaces are Polish spaces. Particularly, $\mathcal{A}^j$ for all agents $j$, is a compact subset of a finite dimensional Euclidean space $\Re^d$ with the Euclidean distance norm $||\cdot||$. Since all agents share the same environment, for simplicity, we will also assume that the state spaces of all the agents are the same $\mathcal{S}^1 = \cdots = \mathcal{S}^N = \mathcal{S}$ and are locally compact. 
Since the state space is a complete separable metric space (Polish space), it is endowed with a metric $d_X$. The transition function $p: \mathcal{S} \times \mathcal{A}^j \times \mathcal{P}(\mathcal{S)} \xrightarrow{} \mathcal{P}(\mathcal{S})$ determines the next state of any $j$ given the current state and action of $j$, and the probability distribution of the state in the system (represented by the mean field). The reward function is represented as a set $\boldsymbol{R} = \{ R^1, \ldots, R^N \}$, where, $R^j: \mathcal{S} \times \mathcal{A}^j \times \mathcal{P}(\mathcal{S}) \xrightarrow{} [0, \infty)$ is the reward function of $j$.

 Recall that a DMFG has two major differences as compared to MFG and MFRL. 1) DMFG does not assume that the agents are indistinguishable and homogeneous (agent indices are retained). 2) DMFG does not assume that each agent can access the global mean field of the system. However, each agents' impact on the environment is infinitesimal, and therefore all agents formulate best responses only to the mean field of the system (no per-agent modelling is required).



As specified, in DMFG, the transition and reward functions for each agent depends on the mean field of the environment, represented by $\boldsymbol{\mu} \triangleq (\mu_t)_{t \geq 0}$, with the initial mean field represented as $\mu_0$. For DMFG the mean field can either correspond to the state distribution $\mu_t (s) \triangleq \lim_{N \xrightarrow{} \infty} \frac{\sum_{j=1}^N \boldsymbol{1} (s^j_t = s)}{N}$, or the action distribution $\mu^a_t$ as in Eq.~\ref{eq:updatemeana} (for discrete settings, or a mixture of Dirac measures for continuous spaces as used in \citet{anahtarci2019value}). Without a loss of generality, we use the mean field as the state distribution $\mu_t$ (represented by $\mathcal{P}(\mathcal{S})$), as done in prior works \citep{lasry2007mean, elliott2013discrete}. However, our setting and theoretical results will hold for the mean field as action distribution $\mu^a_t$ as well. 

In the DMFG, each agent $j$ will not have access to the true mean field of the system and instead use appropriate techniques to actively model the mean field through exploration. The agent $j$ holds an estimate of the actual mean field represented by $\mu^j_t$. Let $\boldsymbol{\mu}^j \triangleq (\mu^j_t)_{t \geq 0}$. Let, $\mathcal{M}$ be used to denote the set of mean fields $\{ \boldsymbol{\mu^j} \in \mathcal{P}(\mathcal{S})\}$. A Markov policy for an agent $j$ is a stochastic kernel on the action space $\mathcal{A}^j$ given the immediate local state ($s^j_t$) and the agent's current estimated mean field $\mu^j_t$, i.e. $\pi_t^j: \mathcal{S} \times \mathcal{P}(\mathcal{S}) \xrightarrow{} \mathcal{P}(\mathcal{A}^j)$, for each $t \geq 0$. Alternatively, a non-Markov policy will depend on the entire state-action history of game play. 
We will use $\Pi^j$ to denote a set of all policies (both Markov and non-Markov) for the agent $j$. 
Let $s^j_t$ represent the state of an agent $j$ at time $t$ and $a^j_t$ represent the action of $j$ at $t$. Then an agent $j$ tries to maximize the objective function given by the following equation (where $r^j$ denotes the immediate reward obtained by the agent $j$ and $\beta \in [0, 1)$ denotes the discount factor), 

\begin{equation}\label{eq:valuefunction}
    \begin{array}{l}
         J^j_{\boldsymbol{\mu}}(\pi^j) \triangleq \E^{\pi^j} [ \sum_{t=0}^{\infty} \beta^t r^j(s^j_t, a^j_t, \mu_t)].
    \end{array}
\end{equation}



In line with prior works \citep{pmlr-v80-yang18d, Srirammtmfrl2020}, we assume that each agent's sphere of influence is restricted by its neighbourhood, where it conducts exploration. Using this assumption, we assert that, after a finite $t$, the mean field estimate will accurately reflect the true mean field for $j$, in its neighbourhood denoted as $\mathcal{N}^j$. This assumption specifies that agents have full information in their neighbourhood, and they can use modelling techniques to obtain accurate mean field information within the neighbourhood (also refer to flocking from \citet{sarah2021mean}).

\begin{assumption}\label{assumption:limitofbelief}
There exists a finite time $T$ and a neighbourhood $\mathcal{N}^j$, such that for all $t > T$, the mean field estimate of an agent $j \in {1, \ldots, N}$ satisfies ($\forall s^j \in \mathcal{N}^j$) $\boldsymbol{\mu}^j(s^j) = \boldsymbol{\mu}(s^j)$. Also, $\forall s^j \in \mathcal{N}^j$, we have, $p(\cdot|s^j_t, a^j_t, \mu^j_t) = p(\cdot| s^j_t, a^j, \mu_t)$ and $r^j(\cdot|s^j_t, a^j_t, \mu^j_t) = r^j(\cdot|s^j_t, a^j_t, \mu_t)$.
\end{assumption}   

Let us define a set $\Phi: \mathcal{M} \xrightarrow{} 2^\Pi$ as $\Phi(\boldsymbol{\mu^j}) = \{ \pi^j \in \Pi^j: \pi^j \textrm{ is optimal for } \boldsymbol{\mu^j} \}$. Conversely, for $j$, we define a mapping $\Psi: \boldsymbol{\Pi} \xrightarrow{} \mathcal{M}$ as, given a policy $\pi^j \in \Pi^j$, the mean field state estimate $\boldsymbol{\mu}^j \triangleq \Psi(\pi^j)$ can be constructed as, 

\begin{equation}\label{eq:meanfielddetermination}
    \begin{array}{l}
         \mu^j_{t+1}(\cdot) = \bigintss_{\mathcal{S} \times \mathcal{A}^j}  p(\cdot|s^j_t, a^j_t, \mu_t)  \mathcal{P}^{\pi^j}(a^j_t| s^j_t, \mu^j_t) \mu^j_t(s^j_t).
   \end{array}
\end{equation}

\noindent Here $\mathcal{P}^{\pi^j}$ is a probability measure induced by $\pi^j$. 
Later (in Theorem~\ref{theorem:markovpolicies}) we will prove that restricting ourselves to Markov policies is sufficient in a DMFG, and hence $\mathcal{P}^{\pi^j} = \pi^j$. 




Now, we can define the \emph{decentralized mean field equilibrium} (DMFE) which is the solution concept for this game.

\begin{defn}\label{def:DMFE}
The decentralized mean field equilibrium of an agent $j$ is represented as a pair $(\pi^j_*$, $\mu^j_*) \in \Pi^j \times \mathcal{M}$ if $\pi^j_* \in \Phi(\mu^j_*)$ and $\mu^j_* = \Psi(\pi^j_*)$. Here $\pi^j_*$ is the best response to $\mu^j_*$ and $\mu^j_*$ is the mean field estimate of $j$ when it plays $\pi^j_*$.
\end{defn}

The important distinction between DMFE and centralized concepts, such as NE and MFE, is that DMFE does not rely on the policy information of other agents. MFE requires all agents to play the same policy, and NE requires all agents to have access to other agents' policies. DMFE has no such constraints. 
Hence, this decentralized solution concept is more practical
than NE and MFE.  
In Appendix~\ref{sec:differencesofmeanfieldsettings}, we summarize the major differences between the DMFG, MFG and MFRL. 

\section{Theoretical Results}\label{sec:theoriticalresults}

We provide a set of theorems that will first guarantee the existence of the DMFE in a DMFG. Further, we will show that a simple $Q$-learning update will converge to a fixed point representing the DMFE. 
We will borrow relevant results from prior works in centralized MFGs in our theoretical guarantees. Particularly, we aim to adapt the results and proof techniques in works by \citet{saldi2018discrete}, \citet{lasry2007mean}, and \citet{anahtarci2019value} to the decentralized setting. 
The statements of all our theorems are given here, while the proofs are in Appendices~\ref{sec:markovpolicy} -- \ref{sec:convergencetoequilibrium}.  

Similar to an existing result from centralized MFG \citep{lasry2007mean}, in the DMFG, restricting policies to only Markov policies would not lead to any loss of optimality. We use $\Pi^j_M$ to denote the set of Markov policies for the agent $j$.

\begin{theorem}\label{theorem:markovpolicies}
For any mean field, $\boldsymbol{\mu} \in \mathcal{M}$, and an agent $j \in \{1, \ldots, N\}$, we have,
\begin{equation}\label{eq:markovequation}
    \sup_{\pi^j \in \Pi^j} J^j_{\boldsymbol{\mu}}(\pi^j) = \sup_{\pi^j \in \Pi^j_M} J^j_{\boldsymbol{\mu}} (\pi^j). 
\end{equation}
\end{theorem}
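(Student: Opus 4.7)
The plan is to reduce the theorem to a classical sufficiency-of-Markov-policies result for single-agent Markov decision processes. The crucial observation is that, once the mean-field trajectory $\boldsymbol{\mu} = (\mu_t)_{t \ge 0}$ is \emph{fixed} exogenously (as in the statement), agent $j$ no longer couples to any other player through $\boldsymbol{\mu}$: the kernel $p(\cdot \mid s,a,\mu_t)$ and the reward $r^j(s,a,\mu_t)$ become an ordinary time-inhomogeneous transition/reward pair $p_t(\cdot \mid s,a)$, $r^j_t(s,a)$ on the Polish state space $\mathcal{S}$ and compact action space $\mathcal{A}^j$, with discount $\beta \in [0,1)$. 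Agent $j$ therefore faces a standard non-stationary single-agent MDP, and the claim is precisely the well-known fact that for such MDPs Markov policies are as good as arbitrary history-dependent ones. The inclusion $\Pi^j_M \subseteq \Pi^j$ gives $\sup_{\pi^j \in \Pi^j_M} J^j_{\boldsymbol{\mu}}(\pi^j) \le \sup_{\pi^j \in \Pi^j} J^j_{\boldsymbol{\mu}}(\pi^j)$ trivially, so only the reverse direction requires argument.

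For the reverse direction, I would fix an arbitrary $\pi^j \in \Pi^j$, together with the probability measure $\mathbb{P}^{\pi^j}$ it induces on the trajectory space $(\mathcal{S} \times \mathcal{A}^j)^{\infty}$ from the initial distribution. Because $\mathcal{S}$ and $\mathcal{A}^j$ are Polish, the disintegration theorem lets me define, for every $t \ge 0$, a Markov kernel $\tilde\pi^j_t(\cdot \mid s) := \mathbb{P}^{\pi^j}(a^j_t \in \cdot \mid s^j_t = s)$, yielding a Markov policy $\tilde\pi^j := (\tilde\pi^j_t)_{t \ge 0} \in \Pi^j_M$. A routine induction on $t$ then shows that the one-dimensional marginal law of $(s^j_t, a^j_t)$ under $\tilde\pi^j$ agrees with that under $\pi^j$: the base case holds by construction at $t=0$, and the inductive step uses the tower property together with the fact that $p_t(\cdot \mid s^j_t, a^j_t)$ depends on the history only through the current state-action pair.

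Given equality of one-dimensional marginals, and since $r^j(\cdot,\cdot,\mu_t)$ is a measurable function of $(s^j_t, a^j_t)$ for the frozen $\mu_t$, we obtain $\mathbb{E}^{\pi^j}[r^j(s^j_t,a^j_t,\mu_t)] = \mathbb{E}^{\tilde\pi^j}[r^j(s^j_t,a^j_t,\mu_t)]$ for every $t$. Under the usual bounded-reward hypothesis standard in the centralized MFG literature \citep{saldi2018discrete}, Fubini–Tonelli lets me exchange sum and expectation in \eqref{eq:valuefunction} and conclude $J^j_{\boldsymbol{\mu}}(\pi^j) = J^j_{\boldsymbol{\mu}}(\tilde\pi^j)$. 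Taking the supremum over $\pi^j \in \Pi^j$ on the left and noting $\tilde\pi^j \in \Pi^j_M$ on the right closes the inequality.

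The main obstacle is not conceptual but regulatory: making the disintegration step and the marginal-matching induction rigorous requires that the conditional distributions $\mathbb{P}^{\pi^j}(a^j_t \in \cdot \mid s^j_t = \cdot)$ are genuine Markov kernels (which is where the Polish assumption on $\mathcal{S}$ and $\mathcal{A}^j$ is used) and that the reward is integrable enough for the discounted sum to be well-defined and Fubini to apply. These are mild, standard conditions already present in the MFG papers the authors cite (particularly \citet{saldi2018discrete} and \citet{anahtarci2019value}), so after invoking them the argument reduces to bookkeeping on marginals and a direct computation on the discounted objective.
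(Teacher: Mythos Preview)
Your proposal is correct and rests on the same classical marginal-matching argument the paper uses; the difference is only in packaging. The paper aggregates over time via the discounted occupancy measure $\nu^{\pi^j}(s,a\mid\mu)=\sum_{t\ge 0}\beta^t\,\mathbb{P}^{\pi^j}(s^j_t=s,\,a^j_t=a\mid\mu_t=\mu)$, defines the replacement Markov policy as the ratio $\hat\pi^j(a\mid s,\mu)=\nu^{\pi^j}(s,a\mid\mu)/\tilde\nu^{\pi^j}(s\mid\mu)$, and proves in its Lemma~\ref{lemm:occupancymeasure} that $\nu^{\hat\pi^j}=\nu^{\pi^j}$ by a recursion on the state-occupancy. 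You instead keep the time index explicit, set $\tilde\pi^j_t(\cdot\mid s)=\mathbb{P}^{\pi^j}(a^j_t\in\cdot\mid s^j_t=s)$, and match the per-$t$ joint laws $\mathcal{L}(s^j_t,a^j_t)$ by induction on $t$. Both routes yield equality of the expected stage reward at every $t$ and hence of $J^j_{\boldsymbol{\mu}}$; your time-indexed version is arguably more transparent for the non-stationary problem induced by a varying $\mu_t$, while the paper's occupancy formulation folds the time dependence into the $\mu$-argument of the policy. The regularity conditions you flag (Polish spaces for the disintegration, reward integrability for Fubini) are precisely those the paper inherits from the references it cites.
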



Next, we show the existence of a DMFE under a set of assumptions similar to those previously used in the centralized MFG \citep{lasry2007mean, saldi2018discrete}. The assumptions pertain to bounding the reward function and imposing restrictions on the nature of the mean field (formal statements in Appendix~\ref{sec:existence}). We do not need stronger assumptions than those previously considered for the MFGs.

\begin{theorem}\label{theorem:existence}
An agent $j \in \{1, \ldots, N\}$ in the DMFG admits a decentralized mean field equilibrium $(\pi^j_*, \mu^j_*) \in \Pi^j \times \mathcal{M}$. 
\end{theorem}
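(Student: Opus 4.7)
The plan is to formulate the existence of a DMFE as a fixed point problem and invoke a Kakutani-type fixed point theorem, adapting the proof technique used by \citet{saldi2018discrete} for centralized MFGs to the decentralized setting. By Theorem~\ref{theorem:markovpolicies}, it suffices to search for the best response within $\Pi^j_M$, so I restrict attention to Markov policies throughout. Fix any agent $j$ and write $\Gamma^j(\boldsymbol{\mu}^j) \triangleq \Psi(\Phi(\boldsymbol{\mu}^j))$; a DMFE is exactly a fixed point of $\Gamma^j$ in $\mathcal{M}$, so the goal becomes verifying the hypotheses of the Kakutani--Fan--Glicksberg theorem for $\Gamma^j$.

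First, I would topologise the mean field trajectory space. Since $\mathcal{S}$ is Polish and locally compact, $\mathcal{P}(\mathcal{S})$ is Polish under the weak topology, and $\mathcal{M} \subseteq \mathcal{P}(\mathcal{S})^{\mathbb{N}}$ inherits the product topology. Using the boundedness assumption on the reward together with Assumption~\ref{assumption:limitofbelief}, I would show (following \citet{saldi2018discrete}, Lemma~3.3-type arguments) that the reachable set of mean field trajectories lies in a tight, hence relatively compact, subset of $\mathcal{M}$, and that this subset can be taken convex and closed. This yields a non-empty, convex, compact domain for the fixed point map.

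Next, I would establish the structural properties of $\Phi$ and $\Psi$. For $\Phi$: under the continuity of $p$ and $r^j$ in $(s,a,\mu)$ and compactness of $\mathcal{A}^j$, the Bellman operator of the associated MDP (with $\boldsymbol{\mu}^j$ treated as an exogenous, time-varying input) is a $\beta$-contraction on bounded measurable value functions, so a unique optimal value function exists; Berge's maximum theorem then gives that the set $\Phi(\boldsymbol{\mu}^j)$ of optimal Markov policies is non-empty, compact, and upper hemicontinuous in $\boldsymbol{\mu}^j$. Convexity can be obtained by taking the closed convex hull of randomised optimal actions (using the fact that $\mathcal{A}^j$ is a compact subset of $\mathbb{R}^d$ so $\mathcal{P}(\mathcal{A}^j)$ is convex). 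For $\Psi$: the recursion in Eq.~\ref{eq:meanfielddetermination} is linear in $\mu^j_t$ and continuous in $\pi^j$ under weak convergence of kernels, so $\Psi$ is single-valued and continuous. Composing an upper hemicontinuous correspondence with a continuous map preserves upper hemicontinuity, so $\Gamma^j$ is upper hemicontinuous with non-empty, convex, compact values on a compact convex set; Kakutani--Fan--Glicksberg then yields $\boldsymbol{\mu}^j_* \in \Gamma^j(\boldsymbol{\mu}^j_*)$, and picking $\pi^j_* \in \Phi(\boldsymbol{\mu}^j_*)$ with $\Psi(\pi^j_*) = \boldsymbol{\mu}^j_*$ gives the required pair from Definition~\ref{def:DMFE}.

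The main obstacle I anticipate is verifying the continuity/upper hemicontinuity of $\Phi$ in the decentralized formulation, because the agent's own estimate $\boldsymbol{\mu}^j$ appears both inside the transition/reward (Assumption~\ref{assumption:limitofbelief} equates it with the true $\boldsymbol{\mu}$ only asymptotically and only inside $\mathcal{N}^j$) and as the argument of the policy $\pi^j_t(\cdot\mid s^j_t,\mu^j_t)$. Handling the pre-$T$ regime, where estimate and truth can diverge outside the neighbourhood, requires a careful splitting argument: I would bound the contribution of the first $T$ terms by a uniform constant (using boundedness of $r^j$) that is continuous in $\boldsymbol{\mu}^j$ by weak-convergence of finite-dimensional marginals, and then apply Assumption~\ref{assumption:limitofbelief} to the tail so that the tail value function depends continuously on $\boldsymbol{\mu}$ through the closed neighbourhood dynamics. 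Once this is in place, the fixed-point argument closes exactly as in the centralized case.
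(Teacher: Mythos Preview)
Your overall plan---reduce DMFE existence to a Kakutani-type fixed point and invoke \citet{saldi2018discrete}---is the same strategy the paper uses. The substantive difference, and the place where your argument has a genuine gap, is the convexity of the values of the fixed-point correspondence.

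You set $\Gamma^j(\boldsymbol{\mu}^j)=\Psi(\Phi(\boldsymbol{\mu}^j))$ on $\mathcal{M}$ and argue convexity by observing that $\mathcal{P}(\mathcal{A}^j)$ is convex, hence the set of optimal randomized policies $\Phi(\boldsymbol{\mu}^j)$ is convex. That is true, but it does not yield convexity of $\Gamma^j(\boldsymbol{\mu}^j)$: the map $\Psi$ of Eq.~\ref{eq:meanfielddetermination} is \emph{not} affine in $\pi^j$, because $\mu^j_{t+1}$ depends on $\mu^j_t$, which already depends on $\pi^j_{0:t-1}$. A convex combination $\lambda\pi_1+(1-\lambda)\pi_2$ of two optimal policies does not in general produce the trajectory $\lambda\Psi(\pi_1)+(1-\lambda)\Psi(\pi_2)$, so convexity of $\Phi$ alone says nothing about the image under $\Psi$. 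The paper sidesteps this by lifting the fixed-point map to the product space $\mathcal{D}=\prod_{t\ge 0}\mathcal{T}^t_w(\mathcal{S}\times\mathcal{A}^j)$ of state--action measures and defining $\tau(\boldsymbol{\nu})=C(\boldsymbol{\nu})\cap B(\boldsymbol{\nu})$: the consistency set $C(\boldsymbol{\nu})$ pins the state marginals $\nu'_{t+1,1}$ \emph{from the input} $\nu_t$ (not from $\nu'_t$), which is an affine constraint on $\boldsymbol{\nu}'$, and the optimality set $B(\boldsymbol{\nu})$ is a support constraint, which is convex. Convexity of $\tau(\boldsymbol{\nu})$ is then immediate, and a fixed point $\boldsymbol{\nu}\in\tau(\boldsymbol{\nu})$ is disintegrated into $(\pi^j,\boldsymbol{\nu}_1)$ to recover the DMFE (Lemma~\ref{lemm:sufficiency}). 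If you insist on staying in $\mathcal{M}$, you would have to show directly that any convex combination of two trajectories in $\Gamma^j(\boldsymbol{\mu}^j)$ is induced by \emph{some} optimal Markov policy; carrying that out essentially reproduces the state--action measure lifting.

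A secondary point: the splitting argument you sketch for the pre-$T$ regime of Assumption~\ref{assumption:limitofbelief} is not used by the paper and is not needed. The existence proof runs entirely with the true $\mu_t$ inside $p$ and $r^j$; decentralization enters only through the agent index on $r^j$ and the per-agent reading of the fixed point, not through any finite-horizon truncation of the estimate.
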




We use $\mathcal{C}$ to denote a set containing bounded functions in $\mathcal{S}$. Now, we define a decentralized mean field operator ($H$),

\begin{equation}\label{eq:DMFGoperatordefinition}
\begin{array}{l}
    H : \mathcal{C} \times \mathcal{P}(\mathcal{S}) \ni (Q^j, \boldsymbol{\mu}^j) \\ \\ \quad \quad  \quad \xrightarrow{} (H_1(Q^j, \boldsymbol{\mu}^j), H_2(Q^j, \boldsymbol{\mu}^j)) \in \mathcal{C} \times \mathcal{P}(\mathcal{S})
    \end{array}
\end{equation}

\noindent where 

\begin{equation}
\begin{array}{l}
H_1 (Q^j, \boldsymbol{\mu}^j) (s^j_t,a^j_t)  \triangleq r^j(s^j_t,a^j_t,\mu_t) \\ \\ \quad   + \beta \bigintss_{\mathcal{S}} Q^j_{\max_{a^j}} (s^j_{t+1}, a^j, \mu^j_{t+1}) p(s^j_{t+1} | s^j_t, a^j_t, \mu_t)
\end{array}
\end{equation}



\begin{equation}\label{eq:H2definition}
\begin{array}{l}
H_2 (Q^j, \boldsymbol{\mu}^j) (\cdot)   \\ \quad \quad  \quad  \quad  \triangleq \bigintss_{\mathcal{S} \times \mathcal{A}^j} p \Big(\cdot| s^j_t, \pi^j(s^j_t,Q^j_t,\mu^j_t), \mu_t \Big) \mu^j_t (s)
\end{array}
\end{equation}

\noindent for an agent $j$. Here, $\pi^j$ is a maximiser of the operator $H_1$. 

For the rest of the theoretical results, we consider a set of assumptions different from those needed for Theorem~\ref{theorem:existence}. Here we assume that the reward and transition functions are Lipschitz continuous with a suitable Lipschitz constant. The Lipschitz continuity assumption is quite common in the mean field literature \citep{pmlr-v80-yang18d, lasry2007mean, Srirammtmfrl2020}. We also consider some further assumptions regarding the nature of gradients of the value function (formal statements in Appendix~\ref{sec:proofofoperatorexistence}). All assumptions are similar to those considered before for the analysis in the centralized MFGs \citep{lasry2007mean, anahtarci2019value, huang2010large}. 
First, we provide a theorem regarding the nature of operator $H$. Then, we provide another theorem showing that $H$ is a contraction.

\begin{theorem}\label{theorem:operatormapping}
The decentralized mean field operator $H$ is well-defined, i.e., this operator maps $ \mathcal{C} \times \mathcal{P}(\mathcal{S})$ to itself. 
\end{theorem}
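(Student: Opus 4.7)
The plan is to verify the two components of $H$ separately: first show that $H_1(Q^j, \boldsymbol{\mu}^j)$ is a bounded measurable function on its domain, and then show that $H_2(Q^j, \boldsymbol{\mu}^j)$ is a Borel probability measure on $\mathcal{S}$. The overall strategy mirrors the standard analysis of the Bellman operator and the induced transition kernel in centralized MDP/MFG settings as in \citet{saldi2018discrete}, adapted to the per-agent decentralized formulation.

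For $H_1$, I would begin by invoking the Lipschitz continuity assumption on $r^j$, which together with compactness of the action set $\mathcal{A}^j \subset \Re^d$ gives a uniform bound $\sup \lvert r^j(\cdot,\cdot,\mu_t) \rvert \le M_r$. Since $Q^j \in \mathcal{C}$ is bounded, the pointwise maximum $s^j_{t+1} \mapsto \max_{a^j} Q^j(s^j_{t+1}, a^j, \mu^j_{t+1})$ is bounded by $\sup \lvert Q^j \rvert$; its measurability follows from Berge's maximum theorem applied to $Q^j$ viewed as a parametric optimization problem over a compact action space. Integration against the transition kernel $p(\cdot \mid s^j_t, a^j_t, \mu_t)$, which is a probability measure in its first argument, then preserves boundedness, yielding a sup-norm bound of $M_r + \beta \sup \lvert Q^j \rvert < \infty$. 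Hence $H_1(Q^j, \boldsymbol{\mu}^j) \in \mathcal{C}$.

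For $H_2$, I need to show the resulting object is a probability measure on $\mathcal{S}$. Non-negativity is immediate because $p$ is a probability kernel and $\mu^j_t$ together with the action distribution induced by $\pi^j$ are non-negative. To verify total mass one, I would apply Fubini's theorem to interchange integration over $\mathcal{S}$ of the measure $H_2(Q^j, \boldsymbol{\mu}^j)$ with the outer integration over $\mathcal{S} \times \mathcal{A}^j$ in Eq.~\ref{eq:H2definition}, then use the fact that $p(\cdot \mid s^j_t, a^j_t, \mu_t)$ integrates to one for each fixed $(s^j_t, a^j_t)$, and that the induced joint distribution of $(s^j_t, a^j_t)$ under $\mu^j_t$ and $\pi^j(\cdot \mid s^j_t, Q^j_t, \mu^j_t)$ is itself a probability measure. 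This yields $H_2(Q^j, \boldsymbol{\mu}^j)(\mathcal{S}) = 1$.

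The main obstacle I expect is establishing that a maximiser $\pi^j$ of $H_1$ exists as a \emph{measurable} selector, so that the integrand in Eq.~\ref{eq:H2definition} is well-defined and the resulting measure is guaranteed to be Borel. I plan to invoke a standard measurable selection theorem (Kuratowski--Ryll-Nardzewski, or Berge's maximum theorem in the Carath\'eodory setting), leveraging the compactness of $\mathcal{A}^j$ and the continuity of $Q^j$ in the action argument that will be inherited from the Lipschitz assumptions on $r^j$ and $p$ stated in Appendix~\ref{sec:proofofoperatorexistence}. Continuity of the model components in the mean field argument, which is already needed for the subsequent contraction theorem, also ensures that the selector behaves well under perturbations of $(Q^j, \boldsymbol{\mu}^j)$. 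Once this selector is in place, the remainder of the verification reduces to routine applications of Fubini's theorem and sup-norm bounds.
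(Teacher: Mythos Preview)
Your proposal treats $\mathcal{C}$ as the set of all bounded measurable functions, but in the appendix the paper gives $\mathcal{C}$ a much sharper definition:
\[
\mathcal{C} = \Big\{ Q^j \ge 0 : \|Q^j_{\max}\|_w \le \tfrac{M}{1-\beta\alpha} \ \text{and}\ \|Q^j_{\max}\|_{Lip} \le \tfrac{L_2}{1-\beta K_2} \Big\}.
\]
The entire content of the theorem is that $H_1$ preserves these \emph{specific} quantitative constraints, because the contraction argument in Theorem~\ref{theorem:contractionofH} uses them explicitly (the Lipschitz bound on $Q^j_{\max}$ is what lets you control the $W_1$ term through Assumption~\ref{assumption:transitionboundondistance}). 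Your sup-norm estimate $M_r + \beta \sup|Q^j|$ shows boundedness but not invariance of the prescribed $w$-norm ball, and more importantly you never address the Lipschitz side at all. The paper's proof computes both: it shows $\|H_1(Q^j,\boldsymbol{\mu})\|_w \le M + \beta\alpha \cdot \tfrac{M}{1-\beta\alpha} = \tfrac{M}{1-\beta\alpha}$ using Assumption~\ref{assumption:globalrewardbound}, and separately bounds $|H_1(Q^j,\boldsymbol{\mu})_{\max}(s) - H_1(Q^j,\boldsymbol{\mu})_{\max}(\hat s)|$ via Assumptions~\ref{assumption:rewardboundondistance} and~\ref{assumption:transitionboundondistance} to recover the Lipschitz constant $\tfrac{L_2}{1-\beta K_2}$. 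Without this second calculation, you have not shown $H_1(Q^j,\boldsymbol{\mu}) \in \mathcal{C}$.

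On the measurable-selector issue: your plan to use Kuratowski--Ryll-Nardzewski or Berge is reasonable in general, but the paper sidesteps it entirely by imposing $\rho$-strong convexity of $F(s^j,v^j,\mu,\cdot)$ (Assumption~\ref{assumption:strongconvexity}), which forces the maximiser $\pi^j(s^j,Q^j,\mu^j)$ to be unique; uniqueness plus continuity of $F$ gives a well-defined selector without appealing to selection theorems. Your $H_2$ verification (non-negativity, total mass one via Fubini) is fine and is essentially what the paper leaves implicit when it says ``we know that $H_2(Q^j,\boldsymbol{\mu}) \in \mathcal{P}(\mathcal{S})$''; the paper spends no effort there and puts all the work into $H_1$.
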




\begin{theorem}\label{theorem:contractionofH}
Let $\mathcal{B}$ represent the space of bounded functions in $\mathcal{S}$. Then the mapping $H: \mathcal{C} \times \mathcal{P} (\mathcal{S}) \xrightarrow{} \mathcal{C} \times \mathcal{P}(\mathcal{S})$ is a contraction in the norm of $\mathcal{B}(\mathcal{S})$. 
\end{theorem}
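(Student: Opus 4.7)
The plan is to equip the product space $\mathcal{C} \times \mathcal{P}(\mathcal{S})$ with a weighted metric of the form $\rho\bigl((Q,\boldsymbol{\mu}),(\tilde Q,\tilde{\boldsymbol{\mu}})\bigr) = \|Q-\tilde Q\|_\infty + \kappa\, d_W(\boldsymbol{\mu},\tilde{\boldsymbol{\mu}})$, where $d_W$ is a suitable Wasserstein-type metric on $\mathcal{P}(\mathcal{S})$ (uniform over $t$, weighted by $\beta^t$ if necessary) and $\kappa>0$ is a free constant to be chosen at the end. The idea is to bound $\|H_1(Q,\boldsymbol{\mu}) - H_1(\tilde Q,\tilde{\boldsymbol{\mu}})\|_\infty$ and $d_W\bigl(H_2(Q,\boldsymbol{\mu}),H_2(\tilde Q,\tilde{\boldsymbol{\mu}})\bigr)$ each in terms of $\|Q-\tilde Q\|_\infty$ and $d_W(\boldsymbol{\mu},\tilde{\boldsymbol{\mu}})$, then tune $\kappa$ and exploit the discount $\beta<1$ to make the sum strictly smaller than $\rho$.

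First I would handle $H_1$. Split the difference into three familiar pieces: (i) the difference in immediate rewards $r^j(s,a,\mu_t)-r^j(s,a,\tilde\mu_t)$, which is bounded by $L_r\, d_W(\mu_t,\tilde\mu_t)$ by the Lipschitz assumption on $r$; (ii) the difference of $\max$ operators applied to $Q^j$ versus $\tilde Q^j$, which is controlled by $\|Q-\tilde Q\|_\infty$ by the standard nonexpansiveness of $\max$; and (iii) the difference coming from integrating against two different transition kernels $p(\cdot|s,a,\mu_t)$ and $p(\cdot|s,a,\tilde\mu_t)$, bounded via Lipschitz continuity of $p$ in the mean field argument together with boundedness of $Q^j_{\max}$. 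Because of the $\beta$ in front of the transition term, this piece yields a contractive factor $\beta\|Q-\tilde Q\|_\infty$ plus extra terms proportional to $d_W(\boldsymbol{\mu},\tilde{\boldsymbol{\mu}})$.

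The harder half is $H_2$. Unpacking Eq.~\ref{eq:H2definition}, three sources of error appear after one step: the greedy policy $\pi^j(\cdot,Q^j,\mu^j)$ versus $\pi^j(\cdot,\tilde Q^j,\tilde\mu^j)$; the kernel $p(\cdot|s,a,\mu_t)$ versus $p(\cdot|s,a,\tilde\mu_t)$; and the input measures $\mu^j_t$ versus $\tilde\mu^j_t$. The first requires Lipschitz stability of the argmax/softmax associated with $H_1$ in both $Q$ and $\mu$; this is exactly where I would invoke the assumption on the gradients of the value function referenced for Appendix~\ref{sec:proofofoperatorexistence}, to ensure that the policy is Lipschitz with constants $L_\pi^Q$ and $L_\pi^\mu$. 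The second is direct from the Lipschitz hypothesis on $p$ in its measure argument. The third uses the fact that a Lipschitz kernel maps measures close in $d_W$ to measures close in $d_W$. Combining these gives an estimate of the form $d_W(H_2(Q,\boldsymbol{\mu}),H_2(\tilde Q,\tilde{\boldsymbol{\mu}})) \le c_Q\|Q-\tilde Q\|_\infty + c_\mu\, d_W(\boldsymbol{\mu},\tilde{\boldsymbol{\mu}})$ with explicit $c_Q,c_\mu$ in terms of $L_r, L_p, L_\pi^Q, L_\pi^\mu$.

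Finally I would assemble the two bounds: $\rho\bigl(H(Q,\boldsymbol{\mu}),H(\tilde Q,\tilde{\boldsymbol{\mu}})\bigr) \le (\beta + \kappa c_Q)\|Q-\tilde Q\|_\infty + (L_r + \beta L_p M_Q + \kappa c_\mu)\, d_W(\boldsymbol{\mu},\tilde{\boldsymbol{\mu}})$, where $M_Q$ bounds $Q^j_{\max}$. Choosing $\kappa$ small enough that $\beta+\kappa c_Q<1$ and large enough that $L_r+\beta L_p M_Q + \kappa c_\mu < \kappa$ (feasible precisely when $c_\mu<1$ and the Lipschitz constants are small, which is the quantitative version of the assumptions from Appendix~\ref{sec:proofofoperatorexistence}) yields a common contraction factor $\alpha<1$. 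The main obstacle, and the only step that is not bookkeeping, is establishing Lipschitz continuity of the greedy policy in $(Q,\mu)$: without the smoothness/gradient assumptions the argmax can jump discontinuously. Under the stated regularity this is handled by implicit-function or softmax-style arguments, as in \citet{anahtarci2019value}, after which the contraction follows from the Banach calculation above.
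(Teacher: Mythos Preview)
Your proposal is correct and follows essentially the same route as the paper: bound $H_1$ by splitting into reward, $\max$, and kernel pieces; obtain Lipschitz continuity of the greedy policy from the $\rho$-strong convexity of $F$ together with the gradient Lipschitz bound (Assumptions~\ref{assumption:strongconvexity}--\ref{assumption:gradientbound}), which is exactly the ``implicit-function'' step you flag; then bound $W_1(H_2,H_2)$ by the same three-way decomposition. The only cosmetic differences are that the paper works in the weighted $w$-norm rather than $\|\cdot\|_\infty$ and, instead of introducing a tunable $\kappa$, simply imposes Assumption~\ref{assumption:kdefinition} so that the resulting combined constant $k$ is strictly below $1$.
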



Since $H$ is a contraction, by the Banach fixed point theorem \citep{shukla2016generalized}, we can obtain that fixed point using the $Q$ iteration algorithm given by Alg.~\ref{alg:Qldmfg}. We provide this result in the next theorem.

\begin{algorithm}[ht]
\caption{Q-learning for DMFG}
\label{alg:Qldmfg}
\begin{algorithmic}[1]
\STATE For each agent $j \in \{1, \ldots, N \}$, start with initial $Q$-function $Q^j_0$ and the initial mean field state estimate $\mu^j_0$
\WHILE{$(Q^j_n, \mu^j_n) \neq (Q^j_{n-1}, \mu^j_{n-1}) $}
\STATE $(Q^j_{n+1}, \mu^j_{n+1}) = H (Q^j_n, \mu^j_n)$
\ENDWHILE
\STATE Return the fixed point $(Q^j_*, \mu^j_*)$ of $H$
\end{algorithmic}
\end{algorithm}


\begin{theorem}\label{theorem:convergencetoequilibrium}
Let the $Q$-updates in Algorithm~\ref{alg:Qldmfg} converge to $(Q^j_*, \mu_*^j)$ for an agent $j \in \{1, \ldots, N\}$. Then, we can construct a policy $\pi^j_*$ from $Q^j_*$ using the relation, 
\begin{equation}
    \pi^j_*(s^j) = \arg \max_{a^j \in \mathcal{A}^j} Q^j_*(s^j, a^j, \mu^j_*).
\end{equation}
Then the pair $(\pi^j_*, \mu_*^j)$ is a DMFE. 
\end{theorem}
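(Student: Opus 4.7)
The plan is to verify the two conditions in Definition~\ref{def:DMFE} directly from the fixed-point equation $H(Q^j_*,\mu^j_*) = (Q^j_*,\mu^j_*)$. Since Theorem~\ref{theorem:contractionofH} established that $H$ is a contraction on $\mathcal{C}\times\mathcal{P}(\mathcal{S})$, the limit of Algorithm~\ref{alg:Qldmfg} exists and coincides with the unique fixed point by Banach's theorem; so the existence of $(Q^j_*,\mu^j_*)$ is not the issue. What remains is to translate ``fixed point of $H$'' into ``DMFE''. Decomposing the equation coordinate-wise yields $Q^j_* = H_1(Q^j_*,\mu^j_*)$ and $\mu^j_* = H_2(Q^j_*,\mu^j_*)$, and each identity will deliver exactly one half of Definition~\ref{def:DMFE}.

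First, I would handle the optimality condition $\pi^j_*\in\Phi(\mu^j_*)$. The equation $Q^j_* = H_1(Q^j_*,\mu^j_*)$ reads
\begin{equation*}
Q^j_*(s^j,a^j) = r^j(s^j,a^j,\mu^j_*) + \beta\int_{\mathcal{S}} \max_{a'}Q^j_*(s',a',\mu^j_*)\,p(s'|s^j,a^j,\mu^j_*)\,ds',
\end{equation*}
which is the Bellman optimality equation for the single-agent discounted MDP obtained by freezing the mean field at $\mu^j_*$. Since the state and action spaces are Polish and the reward is bounded (from the hypotheses carried over from Theorem~\ref{theorem:existence}), standard MDP theory (see, e.g., the measurable-selection results invoked in \citet{saldi2018discrete}) ensures that the greedy policy $\pi^j_*(s^j) = \arg\max_{a^j}Q^j_*(s^j,a^j,\mu^j_*)$ attains the supremum in Eq.~\eqref{eq:valuefunction} against $\mu^j_*$. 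By Theorem~\ref{theorem:markovpolicies}, restricting attention to Markov policies entails no loss, so $\pi^j_*$ is optimal within all of $\Pi^j$; hence $\pi^j_*\in\Phi(\mu^j_*)$.

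Second, I would verify the consistency condition $\mu^j_* = \Psi(\pi^j_*)$. From $\mu^j_* = H_2(Q^j_*,\mu^j_*)$ and the definition of $H_2$ in Eq.~\eqref{eq:H2definition}, together with the fact that the maximiser of $H_1$ is precisely $\pi^j_*$ constructed above, we get
\begin{equation*}
\mu^j_*(\cdot) = \int_{\mathcal{S}\times\mathcal{A}^j} p\bigl(\cdot\,|\,s^j,\pi^j_*(s^j),\mu^j_*\bigr)\,\mu^j_*(s^j),
\end{equation*}
which is exactly the fixed-point form of the recursion defining $\Psi(\pi^j_*)$ in Eq.~\eqref{eq:meanfielddetermination}. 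Combined with Assumption~\ref{assumption:limitofbelief}, which aligns the agent's local estimate with the true transition kernel inside its neighbourhood, this identifies $\mu^j_*$ as the stationary mean field induced by playing $\pi^j_*$. Together the two parts yield $(\pi^j_*,\mu^j_*)\in\Pi^j\times\mathcal{M}$ satisfying both clauses of Definition~\ref{def:DMFE}.

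The main obstacle I anticipate is the first step: rigorously extracting a measurable greedy selector $\pi^j_*$ from $Q^j_*$ and arguing that this selector indeed maximises the infinite-horizon objective $J^j_{\boldsymbol{\mu}}$. Because the action space is a compact subset of $\Re^d$ rather than finite, the $\arg\max$ must be justified via a measurable-selection theorem (continuity of $Q^j_*$ in $a^j$, inherited from the Lipschitz assumptions on $r^j$ and $p$ used in Theorems~\ref{theorem:operatormapping} and \ref{theorem:contractionofH}, makes this clean). The consistency step is more mechanical once the correct $\pi^j_*$ is in hand, since it is just a rereading of the $H_2$ equation.
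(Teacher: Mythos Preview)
Your proposal is correct and follows essentially the same route as the paper: decompose the fixed-point identity $H(Q^j_*,\mu^j_*)=(Q^j_*,\mu^j_*)$ into its two components, read $Q^j_*=H_1(Q^j_*,\mu^j_*)$ as the Bellman optimality equation for the frozen-$\mu^j_*$ MDP to obtain $\pi^j_*\in\Phi(\mu^j_*)$, and read $\mu^j_*=H_2(Q^j_*,\mu^j_*)$ as the $\Psi$-recursion to obtain $\mu^j_*=\Psi(\pi^j_*)$, invoking Assumption~\ref{assumption:limitofbelief} to reconcile the estimate with the true mean field. One small correction: the boundedness of the reward you cite comes from the Lipschitz/boundedness assumptions used for Theorems~\ref{theorem:operatormapping}--\ref{theorem:contractionofH}, not from the (distinct) assumption set of Theorem~\ref{theorem:existence}; and your care about measurable selection of the $\arg\max$ is warranted but is handled in the paper's framework by the strong-convexity assumption on $F$, which guarantees a unique maximiser.
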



Hence, from the above results, we have proved that a DMFG admits a DMFE, and an iterative algorithm using $Q$-updates can arrive at this equilibrium. This provides a fixed point for Alg.~\ref{alg:Qldmfg}, to which the $Q$-values converge.






\section{Algorithms}\label{sec:algorithms}
We will apply the idea of decentralized updates to the model-free MFRL framework. We modify the update equations in MFRL (Section~\ref{sec:background}) and make them decentralized, where agents only observe their local state and do not have access to the immediate global mean field. 
Our new updates are: 

\begin{equation}
  \begin{array}{l}
Q^j(s^j_t, a^j_t, \mu^{j,a}_{t}) \\ \quad  = (1-\alpha) Q^j(s^j_t,a^j_t, \mu^{j,a}_{t}) + \alpha[r^j_t + \gamma v^j(s^j_{t+1})] \label{eq:DMFQ}
  \end{array}
\end{equation}
\begin{align}
&\nonumber \mbox{where } \\ & v^{j}(s^j_{t+1}) = \sum_{a^j_{t+1}}\pi^j(a^j_{t+1}|s^j_{t+1},\mu^{j,a}_{t+1})  Q^j(s^j_{t+1},a^j_{t+1}, \mu^{j,a}_{t+1}) \label{eq:Dvalueupdate} \\
& 
\mu^{j,a}_{t} = f^j(s^j_t, \hat{\mu}^{j,a}_{t-1}) \label{eq:Dupdatemeana} \\
& \mbox{and } \pi^j(a^j_t|s^j_t, \mu^{j,a}_{t}) = \frac{\exp(-\hat{\beta} Q^j(s^j_t,a^j_t, \mu^{j,a}_{t}))}{\sum_{a^{j'}_t\in A^j}\exp(-\hat{\beta} Q^j(s^j_t,a^{j'}_t,\mu^{j,a}_{t}))} \label{eq:Dupdatepolicy}
\end{align}

\noindent Here, $s^j_t$ is the local state and $\mu^{j,a}_{t}$ is the mean field action estimate for the agent $j$ at time $t$ and $\hat{\mu}^{j,a}_{t-1}$ is the observed local mean field action of $j$ at $t-1$. Other variables have the same meaning as Eq.~\ref{eq:MFQ} -- Eq.~\ref{eq:updatepolicy}. In Eq.~\ref{eq:Dupdatemeana}, the mean field estimate for $j$ is updated using a function of the current state and the previous local mean field. Opponent modelling techniques commonly used in MARL \citep{hernandez2019survey} can be used here. We use the technique of \citet{he2016opponent}, that used a neural network to model the opponent agent(s). In our case, we use a fully connected neural network (2 Relu layers of 50 nodes and an output softmax layer) to model the mean field action. The network takes the current state and previous mean field action as inputs and outputs the estimated current mean field. This network is trained using a mean square error between the estimated mean field action from the network (Eq.~\ref{eq:Dupdatemeana}) and the observed local mean field (local observation of actions of other agents $\hat{\mu}^{j,a}_{t}$) after action execution. The policy in Eq.~\ref{eq:Dupdatepolicy} does not suffer from the `chicken-and-egg' problem of Eq.~\ref{eq:updatepolicy} since it depends on the current mean field estimate, unlike MFRL which used the previous global mean field in Eq.~\ref{eq:updatepolicy}.

We provide a neural network-based $Q$-learning implementation for our update equations, namely \emph{Decentralized Mean Field Game $Q$-learning} (DMFG-QL), and an actor-critic implementation, \emph{Decentralized Mean Field Game Actor-Critic} (DMFG-AC). 
Detailed description of the algorithms are in Appendix~\ref{sec:algorithmimplementation} (see Algs.~\ref{alg:POGMFGQL} and \ref{alg:POGMFGAC}). A complexity analysis is in Appendix~\ref{sec:complexityanalysisappendix}, and hyperparameter details are in Appendix~\ref{sec:Hyperparameters}. 

\section{Experiments and Results}\label{sec:experiments}

In this section we study the performance of our algorithms. The code for experiments has been open-sourced \citep{sourcecode}. We provide the important elements of our domains here, while the complete details are in Appendix~\ref{sec:experimentaldetails}.

The first five domains belong to the MAgent environment \citep{zheng2018magent}. We run the experiments in two phases, training and execution. Analogous to experiments conducted in previous mean field studies \citep{pmlr-v80-yang18d, Srirammtmfrl2020}, all agents train against other agents playing the same algorithm for 2000 games. This is similar to multiagent training using self-play \citep{shoham2003multi}. The trained agents then enter into an execution phase, where the trained policies are simply executed.
The execution is run for 100 games, where algorithms may compete against each other. We consider three baselines, independent $Q$-learning  (IL) \citep{tan1993multi}, mean field $Q$-learning (MFQ) \citep{pmlr-v80-yang18d}, and mean field actor-critic (MFAC) \citep{pmlr-v80-yang18d}. Each agent in our implementations learns in a decentralized fashion, where it maintains its own networks and learns from local experiences. This is unlike centralized training in prior works \citep{pmlr-v80-yang18d, guo2019learning}. 
We repeat experiments 30 times, and report the mean and standard deviation.  Wall clock times are given in Appendix~\ref{sec:wallclocktimes}.




First, we consider the mixed cooperative-competitive Battle game \citep{zheng2018magent}. 
This domain consists of two teams of 25 agents each. Each agent is expected to cooperate with agents within the team and compete against agents of the other team to win the battle. 
 For the training phase, we plot the cumulative rewards per episode obtained by the agents of the first team for each algorithm in Fig.~\ref{fig:battleresults}(a). The performance of the second team is also similar (our environment is not zero-sum). From the results, we see that DMFG-QL performs best while the others fall into a local optimum and do not get the high rewards. The DMFG-AC algorithm comes second. It has been noted previously \citep{pmlr-v80-yang18d} that $Q$-learning algorithms often perform better compared to their actor-critic counterparts in mean field environments. 
 MFQ and MFAC (using the previous mean field information) performs poorly compared to DMFG-QL and DMFG-AC (using the current estimates). 
 Finally, IL loses out to others due to its independent nature.
 In execution, one team trained using one algorithm competes against another team from a different algorithm. We plot the percentage of games won by each algorithm in a competition against DMFG-QL and DMFG-AC. A game is won by the team that kills more of its opponents. 
The performances are in Fig.~\ref{fig:battleresults}(b) and (c), where DMFG-QL performs best. In Appendix~\ref{sec:meanfieldmodelling}, we show that DFMG-QL can accurately model the true mean field in the Battle game. 

\begin{figure}[ht!]
	\subfloat[ Training]{{\includegraphics[width=0.45\textwidth, height=3cm]{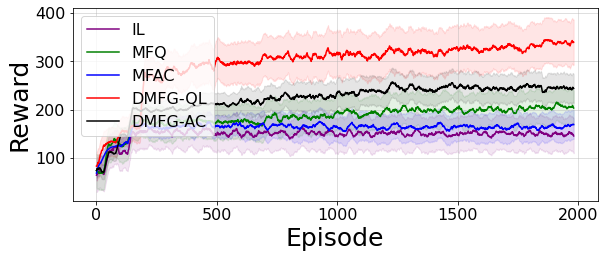} }}
	\\
	\subfloat[Execution vs. DMFG-QL]{{\includegraphics[width=0.23\textwidth, height=3cm]{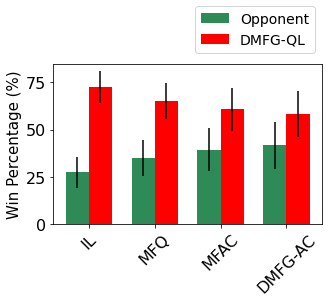} }}
	\subfloat[Execution vs. DMFG-AC]{{\includegraphics[width=0.22\textwidth, height=3cm]{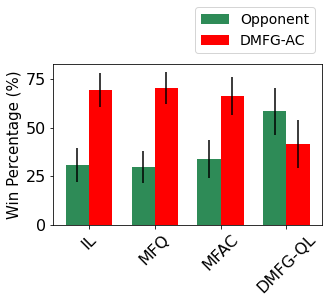} }}

  \caption{Battle results. In (a) solid lines show average and shaded regions represent standard deviation. In (b) and (c) bars are average and black lines represent standard deviation.}%
	\label{fig:battleresults}
\end{figure}

The second domain is the heterogeneous Combined Arms environment. This domain is a mixed setting similar to Battle except that each team consists of two different types of agents, ranged and melee, with distinct action spaces. Each team has 15 ranged and 10 melee agents. This environment is different from those considered in \citet{Srirammtmfrl2020}, which formulated each team as a distinct type, where agents within a team are homogeneous. The ranged agents are faster and attack further, but can be killed quickly. The melee agents are slower but are harder to kill. 
We leave out MFQ and MFAC for this experiment, since both these algorithms require the presence of fully homogeneous agents. The experimental procedure is the same as in Battle. From the results we see that DMFG-QL performs best
in both phases
(see Fig.~\ref{fig:combinedarmsresults}).

\begin{figure}[ht!]
	\subfloat[ Training]{{\includegraphics[width=0.31\textwidth, height=3cm]{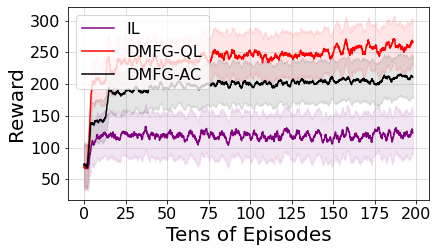} }}
		\subfloat[Execution ]{{\includegraphics[width=0.15\textwidth, height=3cm]{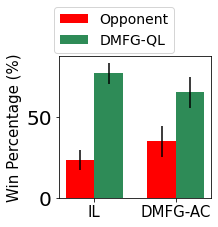} }}

  \caption{Combined Arms results}%
	\label{fig:combinedarmsresults}
\end{figure}

Next is the fully competitive Gather environment. This contains 30 agents trying to capture limited food. All the agents compete against each other for capturing food and could resort to killing others when the food becomes scarce. 
We plot the average rewards obtained by each of the five algorithms 
in the training phase (Fig.~\ref{fig:gatherresults}(a)). 
DMFG-QL once again obtains the maximum performance. In competitive environments, actively formulating the best responses to the current strategies of opponents is crucial for good performances. Predictably, the MFQ and MFAC algorithms (relying on previous information) lose out. 
For execution, we sample (at random) six agents from each of the five algorithms to make a total of 30. We plot the percentage of games won by each algorithm in a total of 100 games. A game is determined to have been won by the agent obtaining the most rewards.
Again, DMFG-QL shows the best performance during execution (Fig.~\ref{fig:gatherresults}(b)). 

\begin{figure}[t]
	\subfloat[ Training]{{\includegraphics[width=0.24\textwidth, height=3cm]{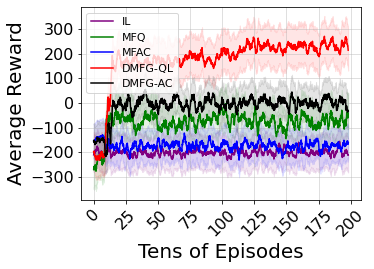} }}
		\subfloat[Execution ]{{\includegraphics[width=0.20\textwidth, height=3cm]{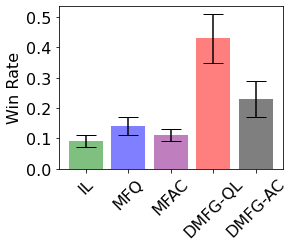} }}

  \caption{Gather results}%
	\label{fig:gatherresults}
\end{figure}


The next domain is a fully cooperative Tiger-Deer environment. In this environment, a team of tigers aims to kill deer. The deer are assumed to be part of the environment moving randomly, while the tigers are agents that learn to coordinate with each other to kill the deer. At least two tigers need to attack a deer in unison to gain large rewards. Our environment has 20 tigers and 101 deer. 
In the training phase, we plot the average reward obtained by the tigers (Fig.~\ref{fig:tigerresults}(a)).  The performance of MFQ almost matches that of DMFG-QL and the performance of DMFG-AC matches MFAC. In a cooperative environment, best responses to actively changing strategies of other agents are not as critical as in competitive environments. 
Here all agents aid each other and using the previous time information (as done in MFQ and MFAC) does not hurt performance as much. 
For execution, a set of 20 tigers from each algorithm execute their policy for 100 games. We plot the average number of deer killed by the tigers for each algorithm. 
DMFG-QL gives the best 
performance
(Fig.~\ref{fig:tigerresults}(b)). 

\begin{figure}[ht!]
	\subfloat[ Training]{{\includegraphics[width=0.22\textwidth, height=3cm]{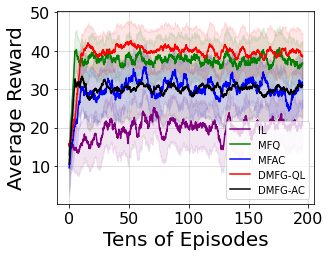} }}
		\subfloat[ Execution]{{\includegraphics[width=0.22\textwidth, height=3cm]{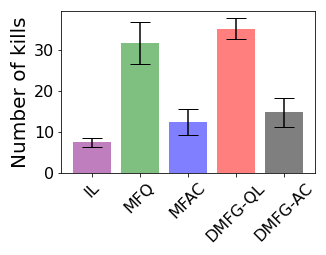} }}

  \caption{Tiger-Deer results}%
	\label{fig:tigerresults}
\end{figure}

The next domain is the continuous action Waterworld domain, first introduced by \citet{gupta2017cooperative}. This is also a fully cooperative domain similar to Tiger-Deer, where a group of 25 pursuer agents aim to capture a set of food in the environment while actively avoiding poison. The action space corresponds to a continuous thrust variable. 
We implement 
DMFG-AC in this domain, where the mean field is a mixture of Dirac deltas of actions taken by all agents. The experimental procedure is the same as Tiger-Deer. 
For this continuous action space environment, we use proximal policy optimization (PPO) \citep{schulman2017proximal} and deep deterministic policy gradient (DDPG) \citep{lillicrap2015continuous}, as baselines. 
We see that DMFG-AC obtains the best performance in both phases (refer to  Fig.~\ref{fig:waterworldresults}(a) and (b)). 

\begin{figure}[t]
	\subfloat[  Training]{{\includegraphics[width=0.22\textwidth, height=3cm]{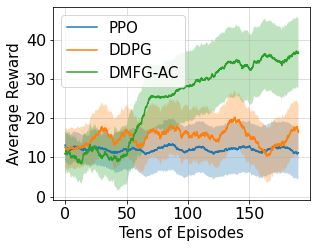} }}
		\subfloat[ Execution]{{\includegraphics[width=0.22\textwidth, height=3cm]{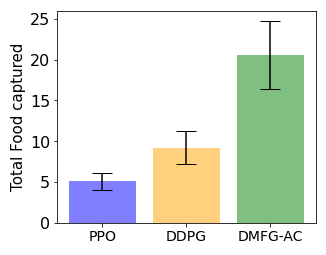} }}
  \caption{Waterworld results}%
	\label{fig:waterworldresults}
\end{figure}


Our final environment is a real-world \emph{Ride-pool Matching Problem} (RMP) introduced by \citet{javier2017ondemand}. This problem pertains to improving the efficiency of vehicles satisfying ride requests as part of ride-sharing platforms such as UberPool. In our environment, ride requests come from the open source New York Yellow Taxi dataset \citep{newyork2016}. The road network (represented as a grid with a finite set of nodes or road intersections) contains a simulated set of vehicles (agents) that aim to serve the user requests. Further details about this domain are in Appendix~\ref{sec:ridesharingdetails}. 
We consider two baselines in this environment. The first is the method from \citet{javier2017ondemand}, which used a constrained optimization (CO) approach to match ride requests to vehicles. This approach is hard to scale and is myopic in assigning requests (it does not consider future rewards). The second baseline is the Neural Approximate Dynamic Programming (NeurADP) method from \citet{shah2020neural}, which used a (centralized) DQN algorithm to learn a value function for effective mapping of requests. This approach assumes all agents are homogenous (i.e., having the same capacity and preferences), which is impractical. To keep comparisons fair, we consider a decentralized version of NeurADP as our baseline. 
Finally, we implement DMFG-QL for this problem where the mean-field corresponds to the distribution of ride requests at every node in the environment.  

Similar to prior approaches \citep{lowalekar2019zac, shah2020neural}, we use the service rate (total percentage of requests served) as the comparison metric. We train NeurADP and DMFG-QL using a set of eight consecutive days of training data and test all the performances in a previously unseen test set of six days. 
The test results are reported as an average (per day) of performances in the test set pertaining to three different hyperparameters. The first is the capacity of the vehicle ($c$) varied from 8 to 12, the second is the maximum allowed waiting time ($\tau$) varied from 520 seconds to 640 seconds, and the last is the number of vehicles ($N$), varied from 80 to 120. The results in Figs.~\ref{fig:ridesharing}(a-c) show that DMFG-QL outperforms the baselines in all our test cases. 
The mean field estimates in DMFG-QL help predict the distribution of ride requests in the environment, based on which agents can choose ride requests strategically. If agents choose orders that lead to destinations with a high percentage of requests, they will be able to serve more requests in the future. Thus, DMFG-QL outperforms the NeurADP method (which does not maintain a mean field). We note that the service rate for all algorithms is low in this study (10\% -- 30\%), since we are considering fewer vehicles compared to prior works \citep{shah2020neural} due to the computational requirements of being decentralized. In practice our training is completely parallelizable and this is not a limitation of our approach. Also, from Fig.~\ref{fig:ridesharing}(c), an increase in the number of vehicles, increases the service rate.
In Fig.~\ref{fig:ridesharing}(d) we plot the performance of the three algorithms for a single test day (24 hours --- midnight to midnight). During certain times of the day (e.g., 5 am), the ride demand is low, and all approaches satisfy a large proportion of requests. However, during the other times of the day, when the demand is high, the DMFG-QL satisfies more requests than the baselines, showing its relative superiority.


\begin{figure}
\subfloat[\# of Vehicles]{{\includegraphics[width=0.21\textwidth, height=3cm]{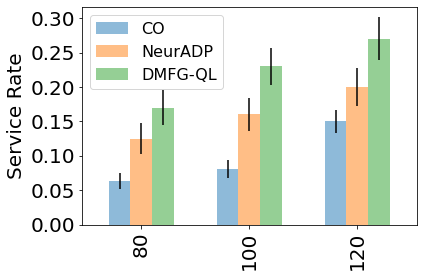} }} 
	\quad
	\subfloat[Maximum Pickup Delay]{{\includegraphics[width=0.21\textwidth, height=3cm]{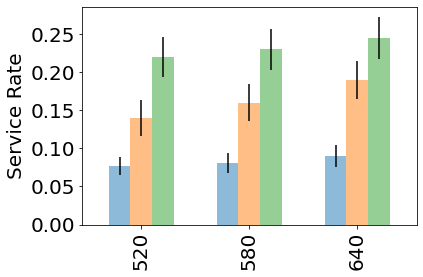} }}
	\\
	\subfloat[Capacity]{{\includegraphics[width=0.21\textwidth, height=3cm]{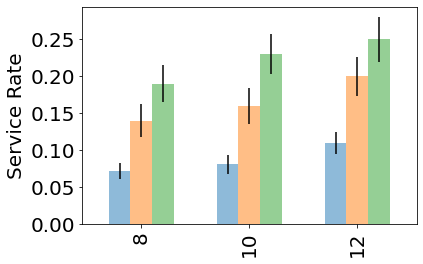} }}
	\quad
	\subfloat[Single Day Test]{{\includegraphics[width=0.21\textwidth, height=3cm]{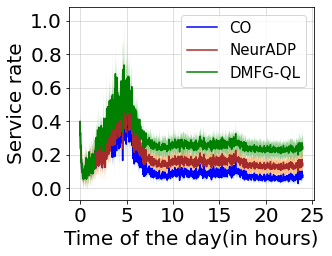} }}
  \caption{Results for the ride-sharing experiment. For (a), (b) and (c), we start with a prototypical configuration of $c$=10, $\tau$ = 580, and $N$ = 100, and then vary the different parameters. Figures (a), (b) and (c) share the same legend given in (a). 
  }%
	\label{fig:ridesharing}
\end{figure}

\section{Conclusion}

In this paper, 
we relaxed two strong assumptions in prior work on using mean field methods in RL. We introduced the DMFG framework, where agents are not assumed to have global information and are not homogeneous. All agents learn in a decentralized fashion, which contrasts with centralized procedures in prior work.
Theoretically, we proved that the DMFG will have a suitable solution concept, DMFE. Also, we proved that a $Q$-learning based algorithm will find the DMFE. Further, we provided a principled method to address the `chicken-and-egg' problem in MFRL, and demonstrated performances in a variety of environments (including RMP). 

For future work, we would like to extend our theoretical analysis to the function approximation setting and analyze the convergence of policy gradient algorithms. Empirically, we could consider other real-world applications like autonomous driving and problems on demand and supply optimization.

\section*{Acknowledgements}
Resources used in preparing this research at the University of Waterloo were provided by the province of Ontario and the government of Canada through CIFAR, NRC, NSERC and companies sponsoring the Vector Institute. 
Part of this work has taken place in the Intelligent Robot Learning (IRL) Lab at the University of Alberta, which is supported in part by research grants from the Alberta Machine Intelligence Institute (Amii); a Canada CIFAR AI Chair, CIFAR; Compute Canada; and NSERC.

\bibliography{aaai22.bib}

\appendix

\newpage

\section{Proof of Theorem~\ref{theorem:markovpolicies}}\label{sec:markovpolicy}

For the proof, we will follow the idea of using discounted occupancy measures, as discussed in \citet{putterman1994markov}. Let us consider an agent $j \in \{1, \ldots, N \}$. Let $\boldsymbol{\mu} \in \mathcal{M}$ and $\pi^j \in \Pi^j$ be arbitrary. Then from the Ionescu-Tulcea theorem (see Theorem 3.3 in \citet{lattimore2020bandit}), a mean field distribution $\boldsymbol{\mu}$ on $\mathcal{S}$ and a policy $\pi^j$ defines a unique probability measure $\mathcal{P}^{\pi^j}$ on $\mathcal{S} \times \mathcal{A}^j$. 

We will begin with a definition of a discounted occupancy measure for some mean field $\boldsymbol{\mu}$. Consider an arbitrary policy $\pi^j \in \Pi^j$. Then, the discounted occupancy measure $\nu^{\pi^j}_{\boldsymbol{\mu}} \in \mathcal{F}(\mathcal{S} \times \mathcal{A})$ induced by this policy on the set $\mathcal{S} \times \mathcal{A}^j$ with mean field $\boldsymbol{\mu}$ is given by the equation 

\begin{equation}\label{eq:occupancydefinition}
    \nu^{\pi^j}(s^j, a^j | \mu) = \sum_{t=0}^\infty \beta^t \mathcal{P}^{\pi^j} (s^j_t = s^j , a^j_t = a^j | \mu_t = \mu).
\end{equation}

The discounted occupancy measure assigns a measure to the given state-action pair, which is the infinite discounted sum of the process that has a mean field of $\boldsymbol{\mu}$ and follows the policy $\pi^j$, with the agent $j$ hitting at various times the state $s^j$ and taking the action $a^j$ at that state. 

The discounted measure has a property that the value function can be represented as a product of the immediate reward and the discounted occupancy measure. To prove this, consider (from Eq.~\ref{eq:valuefunction}), 

\begin{equation}\label{eq:relationwithoccupancymeasure}
    \begin{array}{l}
         J^j_{\boldsymbol{\mu}}(\pi^j) = \E^{\pi^j} [ \sum_{t=0}^{\infty} \beta^t r^j(s^j_t, a^j_t, \mu_t)]
         \\ \\
          = \bigintss_{(s^j \in \mathcal{S}, a^j \in \mathcal{A}^j)} \sum_{t=0}^{\infty} \beta^t \E^{\pi^j} [r^j(s^j_t, a^j_t, \mu_t)
         \\ 
         \quad \quad \quad \quad \quad \quad \quad \quad \quad \quad  \mathcal{I}(s^j = s^j_t , a^j = a^j_t, \mu_t = \mu)]
         \\ \\ 
          \overset{1}{=}  \bigintss_{(s^j \in \mathcal{S}, a^j \in \mathcal{A}^j)} \sum_{t=0}^{\infty} \beta^t \E^{\pi^j} [r^j(s^j, a^j, \mu)\\ \quad \quad \quad \quad \quad \quad \quad \quad  \quad \quad  \mathcal{I}((s^j_t = s^j , a^j_t = a^j, \mu_t = \mu)]
         \\ \\ 
         \overset{2}{=}  \bigintss_{(s^j \in \mathcal{S}, a^j \in \mathcal{A}^j)} r^j(s^j, a^j, \mu) \\ \quad \quad \quad \quad \quad  \sum_{t=0}^{\infty} \beta^t  \E^{\pi^j}  [\mathcal{I}(s^j_t = s^j , a^j_t = a^j, \mu_t = \mu)]
         \\ \\ 
         =  \bigintss_{(s^j \in \mathcal{S}, a^j \in \mathcal{A}^j)} r^j(s^j, a^j, \mu) \\ \quad \quad \quad \quad \quad \quad   \sum_{t=0}^{\infty} \beta^t \mathcal{P}^{\pi^j}(s^j_t = s^j , a^j_t = a^j | \mu_t = \mu)
         \\ \\ 
         =  \bigintss_{(s^j \in \mathcal{S}, a^j \in \mathcal{A}^j)} r^j(s^j, a^j, \mu)\nu^{\pi^j}(s^j, a^j | \mu)
         \\ \\
    \end{array}
\end{equation}

Here the notation $\mathcal{I}(x)$ is an indicator function, which equals 1 if the $x$ is true and 0 if $x$ is false. The sum of all indicators for all the state-action pairs at a given $t$ equals 1. The step (1) uses this property of the indicator function, to drop the time indices of the reward function.  The step (2) is from Lebesgue's dominated convergence theorem \citep{bartle2014elements}.  

Now we state a lemma needed for our proof. 

\begin{lemm}\label{lemm:occupancymeasure}
For an agent $j \in \{1, \ldots, N \}$ and policy $\pi^j$, given mean field $\boldsymbol{\mu}$, there exists a Markov policy $\hat{\pi}^j \in \Pi^j_M$ such that 
\begin{equation}
    \nu^{\hat{\pi}^j} (s^j, a^j | \mu) = \nu^{\pi^j}(s^j, a^j | \mu) 
\end{equation}
\end{lemm}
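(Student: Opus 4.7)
The plan is to adapt the classical construction from discounted MDP theory (Chapter 5 of \citet{putterman1994markov}): once the mean field $\boldsymbol{\mu}$ is fixed, the agent--$j$ dynamics form a (possibly time-inhomogeneous) MDP, and the lemma reduces to the standard fact that every history-dependent policy in such an MDP can be matched, step-by-step in state--action marginals, by a time-indexed Markov policy. Summing those marginals with the geometric weights $\beta^t$ then gives the required equality of discounted occupancy measures.

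Concretely, fix $\boldsymbol{\mu} \in \mathcal{M}$ and $\pi^j \in \Pi^j$, and for every $t \geq 0$ let
$$\eta_t^{\pi^j}(ds^j, da^j) \triangleq \mathcal{P}^{\pi^j}\!\bigl(s_t^j \in ds^j,\; a_t^j \in da^j \,\big|\, \mu_t = \mu\bigr)$$
denote the time-$t$ state--action law under $\pi^j$, and $\rho_t^{\pi^j}(ds^j)$ its $\mathcal{S}$-marginal. Since $\mathcal{S}$ and $\mathcal{A}^j$ are Polish, the disintegration theorem yields a stochastic kernel $\hat\pi_t^j(\cdot \mid s^j, \mu_t)$ with
$$\eta_t^{\pi^j}(ds^j, da^j) \;=\; \rho_t^{\pi^j}(ds^j)\, \hat\pi_t^j(da^j \mid s^j, \mu_t),$$
and I define it arbitrarily on the $\rho_t^{\pi^j}$-null set. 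The family $\{\hat\pi_t^j\}_{t \geq 0}$ is then a Markov policy in the sense of the paper's definition, so $\hat\pi^j \in \Pi^j_M$. Next, I argue by induction on $t$ that $\rho_t^{\hat\pi^j} = \rho_t^{\pi^j}$. The base case $t = 0$ is immediate since both policies start from the same initial distribution $\mu_0$. For the step, the inductive hypothesis gives $\eta_t^{\hat\pi^j} = \rho_t^{\hat\pi^j}\,\hat\pi_t^j = \rho_t^{\pi^j}\,\hat\pi_t^j = \eta_t^{\pi^j}$, and since the transition kernel $p(\cdot \mid s^j, a^j, \mu_t)$ depends only on the current $(s^j, a^j, \mu_t)$, integrating both equal measures against $p$ yields $\rho_{t+1}^{\hat\pi^j} = \rho_{t+1}^{\pi^j}$. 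Multiplying the time-$t$ equality by $\beta^t$ and summing, exactly as in Eq.~\ref{eq:occupancydefinition}, produces $\nu^{\hat\pi^j}(s^j, a^j \mid \mu) = \nu^{\pi^j}(s^j, a^j \mid \mu)$.

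The main obstacle is measure-theoretic rather than conceptual: one has to verify that the disintegration produces a kernel that is jointly measurable in its arguments, so that $\hat\pi^j_t$ is genuinely a stochastic kernel in the required sense, and that the interchange of integration and summation used to pass from step-wise equalities to the equality of $\nu$'s is legitimate. Both points are standard for Polish state and action spaces (regular conditional probabilities exist, and Tonelli's theorem applies because all integrands are non-negative), but they are worth stating carefully because the paper's setting is continuous. Once this machinery is in place, the remainder of the argument is a routine telescoping that leans only on the Markov property of $p$ and the finite discount factor $\beta < 1$.
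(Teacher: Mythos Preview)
Your proof is correct and follows the standard Puterman-style route, but it differs from the paper's approach in one structural respect worth noting. You construct a \emph{time-indexed} Markov policy $\hat\pi_t^j$ by disintegrating the time-$t$ state--action marginal $\eta_t^{\pi^j}$, then prove equality of the per-step marginals $\eta_t^{\hat\pi^j} = \eta_t^{\pi^j}$ by induction on $t$, and finally sum with weights $\beta^t$. The paper instead defines a single (time-invariant in $(s^j,\mu)$) kernel $\hat\pi^j(a^j\mid s^j,\mu) = \nu^{\pi^j}(s^j,a^j\mid\mu)/\tilde\nu^{\pi^j}(s^j\mid\mu)$ directly from the \emph{discounted} occupancy ratio, and then argues that the discounted state occupancy $\tilde\nu$ under both policies satisfies the same linear recursion (Eqs.~\ref{eq:expressionforstatemarkov}--\ref{eq:expressionforstate}), concluding equality from uniqueness of that recursion's solution.

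Your route is cleaner for two reasons: the induction is elementary and avoids any appeal to uniqueness of the fixed point of the occupancy recursion, and it makes the role of the Markov transition kernel $p$ completely explicit at the inductive step. The paper's route has the minor advantage of producing a policy that does not carry an explicit $t$-index beyond the dependence on $\mu_t$, which aligns slightly better with how Markov policies are used downstream in Lemma~\ref{lemm:sufficiency}. Both constructions are admissible under the paper's definition of $\Pi_M^j$ (which allows $\pi_t^j$ to depend on $t$), so either argument closes the lemma.
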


\begin{proof}

For a mean field $\boldsymbol{\mu}$, we define an occupancy measure over the state space $\Tilde{\nu}^{\pi^j}(s^j| \mu) \triangleq \bigintss_{a^j} \nu^{\pi^j} (s^j, a^j| \mu)$. 

Now consider a Markov policy $\hat{\pi}^j$ as follows, 
\begin{equation}\label{eq:definitionofmarkovpolicy}
\begin{array}{l}
    \hat{\pi}^j(a^j|s^j, \mu) = \frac{\nu^{\pi^j}(s^j, a^j |  \mu)}{\Tilde{\nu}^{\pi^j}(s^j | \mu)}, \quad \textrm{ if }  \Tilde{\nu}^{\pi^j}(s^j, \mu) \neq 0
    \\ \\ \textrm{and} \\ \\ 
    \hat{\pi}^j(a^j|s^j, \mu) = \pi^j_0(a^j), \quad \textrm{ if }  \Tilde{\nu}^{\pi^j}(s^j, \mu) = 0
    \end{array}
\end{equation}

Here, $\pi^j_0(a^j) \in \Pi^j_M$ is an arbitrary policy.

Consider, 

\begin{equation}\label{eq:expressionforstatemarkov}
    \begin{array}{l}
        \Tilde{\nu}^{\hat{\pi}^j}(s^j | \mu) = \bigintss_{a^j} \nu^{\hat{\pi}^j}(s^j, a^j| \mu)
        \\ \\ 
        = \bigintss_{a^j}\sum_{t=0}^\infty \beta^t \mathcal{P}^{\hat{\pi}^j} (s^j_t = s^j , a^j_t = a^j | \mu_t = \mu)
        \\ \\ 
        \overset{1}{=} \sum_{t=0}^\infty \beta^t \mathcal{P}^{\hat{\pi}^j} ( s^j_t = s^j | \mu_t = \mu)
        \\ \\ 
        = \nu_0(s^j) + \sum_{t=1}^\infty \beta^t  \mathcal{P}^{\hat{\pi}^j} ( s^j_t = s^j | \mu_t = \mu)
        \\ \\ 
        = \nu_0(s^j) + \beta \sum_{t=1}^\infty \beta^{t-1}  \bigintss_{s'^j \in \mathcal{S}} \bigintss_{a'^j \in \mathcal{A}^j}
        \\ \\ 
         \mathcal{P}^{\hat{\pi}^j}[s_{t-1}^j = s'^j, a_{t-1}^j = a'^j| \mu_{t-1} = \mu']p(s^j|s'^j,a'^j, \mu') 
        \\ \\ 
        = \nu_0(s^j) + \beta \sum_{t=1}^\infty \beta^{t-1}   \bigintss_{s'^j \in \mathcal{S}}  \\ \\  \quad \quad \quad \quad  \quad \quad \quad \quad   \mathcal{P}^{\hat{\pi}^j}[s_{t-1} = s'|  \mu_{t-1} = \mu'] \\ \\ \quad \quad   \quad \quad \quad \quad   \bigintss_{a'^j \in \mathcal{A}^j} \hat{\pi}^j(a'^j|s'^j, \mu') p(s^j|s'^j,a'^j, \mu') 
        \\ \\ 
        = \nu_0(s^j) \\ \\ + \beta \bigintss_{s'^j \in \mathcal{S}} \sum_{t=1}^\infty \beta^{t-1} \mathcal{P}^{\hat{\pi}^j}[s^j_{t-1} = s'^j | \mu_{t-1} = \mu']  \\ \\ 
         \bigintss_{a'^j \in \mathcal{A}^j} \hat{\pi}^j
         (a'^j|s'^j, \mu') p(s^j|s'^j,a'^j, \mu_{t-1} = \mu') 
       \\ \\ 
        \overset{2}{=} \nu_0(s^j) + \beta \bigintss_{s'^j \in \mathcal{S}}  \Tilde{\nu}^{\hat{\pi}^j}(s'^j | \mu') \\ \\ 
         \quad \quad \quad \quad \bigintss_{a'^j \in \mathcal{A}^j} \hat{\pi}^j(a'^j|s'^j, \mu') p(s^j|s'^j,a'^j, \mu') 
    \end{array}
\end{equation}

Here $\nu_0 (s^j)$ is the initial distribution of the state of the agent $j$. To obtain expression (2), see that this follows from expression (1).

Now consider, 

\begin{equation}\label{eq:expressionforstate}
    \begin{array}{l}
     \Tilde{\nu}^{\pi^j}(s^j| \mu) =   \sum_{t=0}^\infty \beta^t \mathcal{P}^{\pi^j} ( s^j_t = s^j | \mu_t = \mu)
        \\ \\ 
    = \nu_0(s^j) + \sum_{t=1}^\infty \beta^t  \mathcal{P}^{\pi^j} ( s^j_t = s^j | \mu_t = \mu)
        \\ \\ 
        = \nu_0(s^j) + \beta \sum_{t=1}^\infty \beta^{t-1}  \bigintss_{s'^j \in \mathcal{S}} \bigintss_{a'^j \in \mathcal{A}^j}
        \\ \\ 
         \mathcal{P}^{\pi^j}[s^j_{t-1} = s'^j, a^j_{t-1} = a'^j| \mu_{t-1} = \mu']p(s^j|s'^j,a'^j, \mu') 
                \\ \\ 
        = \nu_0(s^j) + \beta \bigintss_{s'^j \in \mathcal{S}} \bigintss_{a'^j \in \mathcal{A}^j}\sum_{t=1}^\infty \beta^{t-1}  
        \\ \\ 
         \mathcal{P}^{\pi^j}[s^j_{t-1} = s'^j, a^j_{t-1} = a'^j| \mu_{t-1} = \mu']p(s^j|s'^j,a'^j, \mu') 
                        \\ \\ 
        \overset{1}{=} \nu_0(s^j)  \\ \\  \quad  \quad   +  \beta \bigintss_{s'^j \in \mathcal{S}} \bigintss_{a'^j \in \mathcal{A}^j}
        \nu^{\pi^j}(s'^j, a'^j | \mu') p(s^j|s'^j,a'^j, \mu')
        \\ \\ 
        \overset{2}{=} \nu_0(s^j) + \beta \bigintss_{s'^j \in \mathcal{S}} \bigintss_{a'^j \in \mathcal{A}^j} \\ \\ 
        \quad  \quad  \quad  \quad \quad  \hat{\pi}^j(a'^j|s'^j, \mu') \Tilde{\nu}^{\pi^j}(s'^j | \mu')  p(s^j|s'^j,a'^j, \mu') 
      \\ \\ 
        = \nu_0(s^j)   + \beta \bigintss_{s'^j \in \mathcal{S}} \Tilde{\nu}^{\pi^j}(s'^j | \mu')  \\ \\ \quad  \quad  \quad  \quad \quad \quad  \quad  \quad \bigintss_{a'^j \in \mathcal{A}^j}
        \hat{\pi}^j(a'^j|s'^j, \mu')  p(s^j|s'^j,a'^j, \mu') 
        
    \end{array}
\end{equation}

The (1) is from Eq.~\ref{eq:occupancydefinition} and (2) is from Eq.~\ref{eq:definitionofmarkovpolicy}. From both Eq.~\ref{eq:expressionforstate} and Eq.~\ref{eq:expressionforstatemarkov}, we find that both the discounted state occupation frequency can be recursively expressed as a term depending on the state occupation frequency at the previous time step and a few other terms that are the same for both the Eq.~\ref{eq:expressionforstatemarkov} and Eq.~\ref{eq:expressionforstate}. Since the initial state distribution is the same for both policies, we can conclude that  $ \Tilde{\nu}^{\hat{\pi}^j}(s^j | \mu)  =  \Tilde{\nu}^{\pi^j}(s^j| \mu)  $. 

Now, consider 

\begin{equation}\label{eq:nuforstateactionmarkov}
    \begin{array}{l}
        \nu^{\hat{\pi}^j}(s^j, a^j | \mu) = \sum_{t=0}^\infty \beta^t \mathcal{P}^{\hat{\pi}^j}(s^j_t = s^j , a^j_t = a^j | \mu_t = \mu) 
        \\ \\ 
         =  \sum_{t=0}^\infty \beta^t \mathcal{P}^{\hat{\pi}^j} (a^j_t = a^j | s^j_t = s^j, \mu_t = \mu) \\ \\ \quad \quad \quad \quad \quad \quad \quad \quad \quad \quad \mathcal{P}^{\hat{\pi}^j} ( s^j_t = s^j | \mu_t = \mu)
         \\ \\ 
         = \sum_{t=0}^\infty \beta^t  \hat{\pi}^j (a^j_t = a^j | s^j_t = s^j, \mu_t = \mu) \\ \\ 
         \quad \quad \quad \quad \quad \quad \quad \quad \quad \quad \mathcal{P}^{\hat{\pi}^j} ( s^j_t = s^j | \mu_t = \mu)
    \end{array}
\end{equation}

Also, from Eq.~\ref{eq:definitionofmarkovpolicy}, we have 
\begin{equation}\label{eq:nuforstateactionnonmarkov}
    \begin{array}{l}
      \nu^{\pi^j}(s^j, a^j |  \mu) =  \hat{\pi}^j(a^j|s^j, \mu) \times  \Tilde{\nu}^{\pi^j}(s^j | \mu)  
      \\ \\ 
      = \hat{\pi}^j(a^j|s^j, \mu) \times  \Tilde{\nu}^{\hat{\pi}^j}(s^j | \mu)
      \\ \\ 
      =  \hat{\pi}^j(a^j|s^j, \mu)   \sum_{t=0}^\infty \beta^t \mathcal{P}^{\hat{\pi}^j} ( s^j_t = s^j | \mu_t = \mu)
       \\ \\ 
      =   \sum_{t=0}^\infty \beta^t \hat{\pi}^j(a^j|s^j, \mu)   \mathcal{P}^{\hat{\pi}^j} ( s^j_t = s^j | \mu_t = \mu)
    \end{array}
\end{equation}

From Eq.~\ref{eq:nuforstateactionmarkov} and Eq.~\ref{eq:nuforstateactionnonmarkov} we find that $\nu^{\hat{\pi}^j}(s^j, a^j | \mu) = \nu^{\pi^j}(s^j, a^j |  \mu) $, due to the property of the Markov policy $\hat{\pi}^j$.



       
\end{proof}

\begin{theorem2}
For any mean field, $\boldsymbol{\mu} \in \mathcal{M}$, and an agent $j \in \{1, \ldots, N\}$, we have,
\begin{equation*}
    \sup_{\pi^j \in \Pi^j} J^j_{\boldsymbol{\mu}}(\pi^j) = \sup_{\pi^j \in \Pi^j_M} J^j_{\boldsymbol{\mu}} (\pi^j).
\end{equation*}

\begin{proof}

Now, we aim to show that  $\sup_{\pi^j \in \Pi^j} J^j_{\boldsymbol{\mu}}(\pi^j) = \sup_{\pi^j \in \Pi^j_M} J^j_{\boldsymbol{\mu}} (\pi^j) $. Since $\Pi^j_M \subset \Pi^j$, we know that $  \sup_{\pi^j \in \Pi^j_M} J^j_{\boldsymbol{\mu}} (\pi^j)  \leq \sup_{\pi^j \in \Pi^j} J^j_{\boldsymbol{\mu}}(\pi^j) $. To show the equality we aim to prove  $ \sup_{\pi^j \in \Pi^j_M} J^j_{\boldsymbol{\mu}} (\pi^j)  \geq \sup_{\pi^j \in \Pi^j} J^j_{\boldsymbol{\mu}}(\pi^j) $ as well.

To show this result we use Lemma~\ref{lemm:occupancymeasure}. Consider a policy $\pi^j$ and a Markov policy $\hat{\pi}^j$, such that $\nu^{\hat{\pi}^j}_{\boldsymbol{\mu}} = \nu^{\pi^j}_{\boldsymbol{\mu}}$. Now using the Eq.~\ref{eq:relationwithoccupancymeasure}, we have 

\begin{equation}\label{eq:supermumrelation}
    \begin{array}{l}
       J^j_{\boldsymbol{\mu}}(\pi^j) = \bigintss_{(s^j \in \mathcal{S}, a^j \in \mathcal{A}^j)} r^j(s^j, a^j, \mu)\nu^{\pi^j}(s^j, a^j | \mu)
       \\ \\ 
        \overset{1}{=} \bigintss_{(s^j \in \mathcal{S}, a^j \in \mathcal{A}^j)} r^j(s^j, a^j, \mu)\nu^{\hat{\pi}^j}(s^j, a^j | \mu)
        \\ \\
        \leq \sup_{\hat{\pi}^j \in \Pi^j_M} \bigintss_{(s^j \in \mathcal{S}, a^j \in \mathcal{A}^j)} r^j(s^j_t, a^j_t, \mu_t)\nu^{\hat{\pi}^j}(s^j, a^j | \mu) 
        \\ \\ 
        = \sup_{\hat{\pi}^j \in \Pi^j_M} J^j_{\boldsymbol{\mu}}(\hat{\pi}^j)
    \end{array}
\end{equation}

Here (1) is from Lemma~\ref{lemm:occupancymeasure}. Now, from Eq.~\ref{eq:supermumrelation}, and taking supermum on both sides, we get that $\sup_{\pi^j \in \Pi^j} J^j_{\boldsymbol{\mu}}(\pi^j)  \leq \sup_{\pi^j \in \Pi^j_M} J^j_{\boldsymbol{\mu}} (\pi^j) $, which concludes our proof. 

\end{proof}
\newpage


\end{theorem2}

\section{Proof of Theorem~\ref{theorem:existence}}\label{sec:existence}

The proof of this theorem follows the Theorem~1 in \citet{saldi2018discrete}. We aim to extend it to the decentralized setting as mentioned in Section~\ref{sec:theoriticalresults}. 

Before starting the proof, we will state some assumptions, 

\begin{assumption}\label{assumption:rewardfunction}
The reward function for all agents $j \in \{1, \ldots, N\}$ is bounded and continuous. 
\end{assumption}


Let us define a function $w: \mathcal{S} \xrightarrow{} [1, \infty)$, such that there exists an increasing sequence of compact subsets $\{ K_n\}_{n \geq 1}$ of $\mathcal{S}$ where 
\begin{equation}
    \lim_{n \xrightarrow{} \infty} \inf_{s \in \mathcal{S} \backslash K_n} w(s) = \infty.
\end{equation}

That is, the value of $w(s)$ gets close to infinity in this limit. Here, the $w$ can be regarded as a continuous moment function \citep{saldi2018discrete}. 

\begin{assumption}\label{assumption:transitionbound}
There exists a positive $\alpha$ such that the following holds (for all agents $j \in \{1, \ldots, N\}$),
\begin{equation}
    \sup_{(a^j,\mu^j) \in \mathcal{A}^j \times P(\mathcal{S})} \bigintss_{\mathcal{S}} w(s^j_{t+1}) p(s^j_{t+1}|s^j_t,a^j_t,\mu_t) \leq \alpha w(s^j_t)
\end{equation}

Also, consider two different moment functions $w$ and $v$, then $\alpha$ satisfies, 

\begin{equation}
    \begin{array}{l}
\sup_{(a^j,\mu^j) \in \mathcal{A}^j \times P(\mathcal{S})} \Big |\bigintss_{\mathcal{S}} w(s^j_{t+1}) p(s^j_{t+1}|s^j_t,a^j_t,\mu_t) \\ \\ - \bigintss_{\mathcal{S}} v(s^j_{t+1}) p(s^j_{t+1}|s^j_t,a^j_t,\mu_t)) \Big| \leq \alpha | w(s^j_t) - v(s^j_t)|
\end{array}
\end{equation}

\end{assumption}

\begin{assumption}\label{assumption:initialmeanfield}
The initial mean field $\mu_0$ satisfies 
\begin{equation}
    \int_\mathcal{S} w(s^j) \mu_0(s^j
    ) \triangleq M < \infty
\end{equation}
\end{assumption}

\begin{assumption}\label{assumption:rewardbound}
There exist a $\gamma \geq 1$ and a positive scalar $\mathcal{R}$ such that for all $t \geq 0$, we define $M_t \triangleq \gamma^t \mathcal{R}$, then 
\begin{equation}
     \sup_{(a^j,\mu) \in \mathcal{A}^j \times P(\mathcal{S})} r^j(s^j_t,a^j_t,\mu_t) \leq M_t w(s^j_t)
\end{equation}

\end{assumption}

Now, we will state a few definitions required for our proof. For any function $g: \mathcal{S} \xrightarrow{} \Re$, we define the $w$-norm as: 
\begin{equation}\label{eq:vnormsimple}
    ||g||_w \triangleq \sup_{s \in \mathcal{S}} \frac{|g(s)|}{w(s)}
\end{equation}

Let $B_w(\mathcal{S})$ be the Banach space of all real valued measurable functions on $\mathcal{S}$ with a finite $w$-norm. 

Also, for any signed measure $\mu$ in the space of $\mathcal{S}$, the $w$-norm can be defined as 

\begin{equation}\label{eq:signedvnorm}
    ||\mu||_w \triangleq \sup_{g \in B_w(\mathcal{S}): ||g||_w \leq 1} \Big| \int_\mathcal{S} g(s) \mu (s)  \Big|
\end{equation}

Further, we define another set 

\begin{equation}
    \mathcal{T}_w(\mathcal{S}) \triangleq \{  \mu \in \mathcal{P}(\mathcal{S}): ||\mu||_w < \infty \}
\end{equation}

Also, for $t \geq 0$, we define 
\begin{equation}
    \mathcal{T}^t_w(\mathcal{S}) \triangleq \Big\{ \mu \in \mathcal{T}_w(\mathcal{S}): \int_\mathcal{S} w(s) \mu(s) \leq \alpha^t M   \Big\}
\end{equation}

\noindent where the constant $M$ is obtained from Assumption~\ref{assumption:initialmeanfield}. Like our notation $\mathcal{P}(\mathcal{S})$, we are using the notation $\mathcal{P}(\mathcal{S} \times \mathcal{A}^j)$ to denote the probability of a state-action pair. Let us consider an element $\nu \in \mathcal{P}(\mathcal{S} \times \mathcal{A}^j)$, and use the notation $\nu_1 \triangleq \nu(\cdot \times \mathcal{A}^j)$ to denote the state marginal of $\nu$. Now we define a new set, 

\begin{equation}\label{eq:thedefinitionofT}
    \mathcal{T}^t_w(\mathcal{S} \times \mathcal{A}^j) \triangleq \Big\{ \nu \in \mathcal{P}(\mathcal{S} \times \mathcal{A}^j): \nu_1 \in \mathcal{T}^t_w(\mathcal{S}) \Big\}. 
\end{equation}

For each $t \geq 0$, we will define 
\begin{equation}
    L_t \triangleq \sum_{k=t}^\infty (\beta \alpha)^{k-t} M_k
\end{equation}

Here the constants $\alpha$ and $M_k$ are obtained from Assumption~\ref{assumption:transitionbound} and Assumption~\ref{assumption:rewardbound} respectively. 

It follows that the following equation holds, 

\begin{equation}\label{eq:ltequation}
    L_t = M_t + (\beta \alpha) L_{t+1}
\end{equation}

Let $C_w(S)$ denote the Banach space of all real valued bounded measurable functions on $\mathcal{S}$ with a finite $w$-norm.

Let us define a set
\begin{equation} \label{eq:vnormbound}
    C^t_w(\mathcal{S}) \triangleq \{u \in C_w(\mathcal{S}): ||u||_w \leq L_{t}\}
\end{equation}

For an agent $j \in \{1, \ldots, N \}$, let us consider an operator 

\begin{equation}
\begin{array}{l}
  T^{\boldsymbol{\mu}}_t u^j(s_t) =  \max_{a_t^j \in \mathcal{A}^j} \Big [r(s_t^j,a_t^j,\mu_{t})  \\ \\ \quad \quad \quad \quad \quad \quad \quad  + \beta \bigintss_{\mathcal{S}} u^j(s^j_{t+1})p(s^j_{t+1}|s^j_t, a^j_t, \mu_{t})\Big ] 
  \end{array}
\end{equation}

\noindent where $u^j: \mathcal{S} \xrightarrow{} \Re$



We will first prove the following lemma. 

\begin{lemm}\label{lemma:operatorlemma}
For all $t \geq 0$ and a given mean field $\boldsymbol{\mu}$, the operator $T_t^{\boldsymbol{\mu}}$ maps $C^{t+1}_w(\mathcal{S})$ into $C^{t}_w(\mathcal{S}).$ Also, this operator will satisfy 
\begin{equation}\label{eq:contractionequation}
    || T^{\boldsymbol{\mu}}_t u - T^{\boldsymbol{\mu}}_t x||_w \leq \alpha \beta ||u - x||_w 
\end{equation}
for any $u,x \in C_w (\mathcal{S})$. 
\end{lemm}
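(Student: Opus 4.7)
The lemma asserts two separate properties of the Bellman-type operator $T_t^{\boldsymbol{\mu}}$: a range condition (it maps $C_w^{t+1}(\mathcal{S})$ into $C_w^{t}(\mathcal{S})$) and a $\beta\alpha$-contraction estimate in the $w$-norm. Both will follow from pointwise domination arguments using Assumption~\ref{assumption:transitionbound} (bounded integration of $w$ against $p$) and Assumption~\ref{assumption:rewardbound} (growth of $r$ controlled by $M_t w$), combined with the recursion $L_t = M_t + \beta\alpha L_{t+1}$ from Eq.~\ref{eq:ltequation}. The general strategy is the same template used to prove contraction of discounted Bellman operators in weighted sup-norm spaces (see, e.g., the treatment in \citet{saldi2018discrete}), adapted here to the DMFG operator.

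For the range property, I would fix $u \in C_w^{t+1}(\mathcal{S})$, so that $|u(s')| \leq L_{t+1}\, w(s')$ for all $s' \in \mathcal{S}$. Then for any $s \in \mathcal{S}$ and $a \in \mathcal{A}^j$, bound inside the maximum as
\begin{equation*}
r(s, a, \mu_t) + \beta \int_{\mathcal{S}} u(s') p(s'|s,a,\mu_t) \leq M_t w(s) + \beta L_{t+1} \int_{\mathcal{S}} w(s') p(s'|s,a,\mu_t) \leq \bigl(M_t + \beta \alpha L_{t+1}\bigr) w(s) = L_t w(s),
\end{equation*}
where the first inequality uses Assumption~\ref{assumption:rewardbound} together with the pointwise $w$-bound on $u$, and the second uses Assumption~\ref{assumption:transitionbound}. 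Taking the max over $a \in \mathcal{A}^j$ (attained since $\mathcal{A}^j$ is compact and the integrand is continuous in $a$ by Assumption~\ref{assumption:rewardfunction}), I get $|T_t^{\boldsymbol{\mu}} u(s)| \leq L_t w(s)$, i.e. $\|T_t^{\boldsymbol{\mu}} u\|_w \leq L_t$.

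For the contraction estimate, I would use the elementary fact that $|\sup_a f(a) - \sup_a g(a)| \leq \sup_a |f(a) - g(a)|$ to cancel the reward terms, giving
\begin{equation*}
|T_t^{\boldsymbol{\mu}} u(s) - T_t^{\boldsymbol{\mu}} x(s)| \leq \beta \sup_{a \in \mathcal{A}^j} \int_{\mathcal{S}} |u(s') - x(s')|\, p(s'|s,a,\mu_t) \leq \beta \|u-x\|_w \sup_{a} \int_{\mathcal{S}} w(s')\, p(s'|s,a,\mu_t) \leq \beta \alpha \|u-x\|_w w(s).
\end{equation*}
Dividing by $w(s)$ and taking the supremum over $s$ yields $\|T_t^{\boldsymbol{\mu}} u - T_t^{\boldsymbol{\mu}} x\|_w \leq \beta \alpha \|u-x\|_w$, as required.

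The main technical obstacle, in my view, is not the weighted-norm algebra above (which is essentially bookkeeping with the recursion for $L_t$) but showing that $T_t^{\boldsymbol{\mu}} u$ actually lives in $C_w(\mathcal{S})$, i.e.\ is measurable (or continuous, depending on how the Banach space was defined in the text). This requires invoking Assumption~\ref{assumption:rewardfunction} on continuity of $r$, the compactness of $\mathcal{A}^j$, and an appropriate (weak-)Feller-type continuity of $s \mapsto \int u(s') p(s'|s,a,\mu_t)$ in order to apply a measurable-selection or Berge-type maximum theorem so that the pointwise maximum over $a$ yields a measurable function of $s$. Once this regularity is in place, the $w$-norm bound derived above confirms $T_t^{\boldsymbol{\mu}} u \in C_w^t(\mathcal{S})$, completing the proof.
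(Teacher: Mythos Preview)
Your proposal is correct and follows essentially the same approach as the paper: bound the $w$-norm of $T_t^{\boldsymbol{\mu}}u$ using Assumptions~\ref{assumption:rewardbound} and~\ref{assumption:transitionbound} together with the recursion $L_t = M_t + \beta\alpha L_{t+1}$, and then obtain the contraction by cancelling the reward terms and applying Assumption~\ref{assumption:transitionbound} again. Your use of the inequality $|\sup_a f - \sup_a g| \le \sup_a |f-g|$ is in fact cleaner than the paper's somewhat informal ``without loss of generality $u_0 \ge x_0$'' reduction, and the measurability/continuity concern you raise at the end is not addressed explicitly in the paper's proof.
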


\begin{proof}
Let $u \in C^{t+1}_w(\mathcal{S})$. We have the following relation, 

\begin{equation}
\begin{array}{l}
    ||T^{\boldsymbol{\mu}}_t u||_w
    \leq \sup_{(s^j_t,a^j_t) \in \mathcal{S} \times \mathcal{A}^j} \\ \\  \quad \quad \quad \quad \quad \frac{ \Big| r^j(s^j_t, a^j_t, \mu_t) + \beta \bigintss_S u(s^j_{t+1}) p(s^j_{t+1}|s^j_t,a^j_t,\mu_t) \Big|}{w(s^j_t)}
    \\ \\ 
    \leq \sup_{(s^j_t,a^j_t) \in \mathcal{S} \times \mathcal{A}^j} \frac{M_t w(s^j_t) + \beta \alpha L_{t+1} w(s^j_t)}{w(s^j_t)}
    \\ \\
    = M_t + \beta \alpha L_{t+1} 
    \\ \\
    = L_t
\end{array}    
\end{equation}

The second last step is from Assumption~\ref{assumption:rewardbound} and Assumption~\ref{assumption:transitionbound}. Also, we use the bound on the $w$-norm from Eq.~\ref{eq:vnormbound}. The last step is from Eq.~\ref{eq:ltequation}. This proves the first statement. 

To prove the second statement, without loss of generality, let us assume that $u_0 \geq x_0$. Now consider, 

\begin{equation}
\begin{array}{l}
    || T^{\boldsymbol{
    \mu}}_t u - T^{\boldsymbol{\mu}}_t x||_w   =
    \\ \\
    \sup_{(s^j_t) \in \mathcal{S}} 
    \frac{ \max_{a^j_t} \Big| r^j(s^j_t,a^j_t,\mu_t) + \beta \bigintsss_{\mathcal{S}} u(s^j_{t+1}) p(s^j_{t+1}| s^j_t,a^j_t,\mu_t)}{w(s^j_t)}
    \\ \\
    \quad \quad \quad \quad \quad \quad  \frac{- r^j(s^j_t,a^j_t, \mu_t) - \beta \bigintsss_{\mathcal{S}} x(s^j_{t+1}) p(s^j_{t+1}| s^j_t, a^j_t, \mu_t) \Big|}{w(s^j_t)}
    \\ \\
    = 
   \sup_{(s^j_t) \in \mathcal{S}}  \frac{\max_{a^j_t} \Big| \beta \Big(\bigintsss_{\mathcal{S}} u(s^j_{t+1}) p(s^j_{t+1}| s^j_t,a^j_t,\mu_t)}{w(s^j_t)}
    \\ \\
    \quad \quad \quad \quad \quad \quad \frac{- \bigintsss_{\mathcal{S}} x(s^j_{t+1}) p(s^j_{t+1}| s^j_t, a^j_t, \mu_t) \Big) \Big|}{w(s^j_t)}
    \\ \\
    = \sup_{(s^j_t) \in \mathcal{S}} \frac{\max_{a^j_t}  \Big|\beta\bigintsss_{\mathcal{S}}  p(s^j_{t+1}| s^j_t, a^j_t, \mu_t) (u(s^j_{t+1}) - x(s^j_{t+1})) \Big| }{w(s^j_t)}
    \\ \\ 
    \leq \sup_{(s^j_t) \in \mathcal{S}} \frac{ \alpha \beta  \Big |(u(s^j_{t}) - x(s^j_{t})) \Big| }{w(s^j_t)}
    \\ \\ 
    = \alpha \beta ||u - x||_w
 
 \end{array}   
\end{equation}

We apply Assumption~\ref{assumption:transitionbound} and the last step is from Eq.~\ref{eq:vnormsimple}. This proves the second part of the lemma as well. 

\end{proof}

Let us define a new set $\mathcal{C}$ from $C^t_w(\mathcal{S})$ as follows: 
\begin{equation}
    \mathcal{C} \triangleq \Pi_{t=0}^\infty C^t_w(\mathcal{S})
\end{equation}

Also, let us assign the following metric to $\mathcal{C}$: 
    \begin{equation}
        \rho(\boldsymbol{u}, \boldsymbol{v}) \triangleq \sum_{t=0}^\infty \sigma^{-t} || u_t - v_t||_w
    \end{equation}

where $\sigma > 0$ and the following assumption holds.
\begin{assumption}\label{assumption:variablebound}
The variables $\sigma$, $\alpha$, and $\beta$ satisfy $\alpha \sigma \beta < 1$.
\end{assumption}

This assumption guarantees that the metric $\mathcal{C}$ is complete w.r.t $\rho$, and $\rho(\boldsymbol{u}, \boldsymbol{v}) < \infty$ for all $\boldsymbol{u,v}\in \mathcal{C}$.

Using the operators $\{ T^{\boldsymbol{\mu}}_t \}_{t \geq 0}$, let us define a new operator $T^{\boldsymbol{\mu}}: \mathcal{C} \xrightarrow{} \mathcal{C}$ as follows (for all $t \geq 0$). 
\begin{equation}
    (T^{\boldsymbol{\mu}} \boldsymbol{u})_t  = T^{\boldsymbol{\mu}}_t u_{t+1}
\end{equation}

From Lemma~\ref{lemma:operatorlemma} we know that $T^{\boldsymbol{\mu}}$ is a well-defined operator that maps $\mathcal{C}$ to itself. Also, from Eq.~\ref{eq:contractionequation}, $T^{\boldsymbol{\mu}}$ is a contraction operator on $\mathcal{C}$ with constant of contraction $\alpha \beta \sigma < 1$ (from Assumption~\ref{assumption:variablebound}). Now, $T^{\boldsymbol{\mu}}$ will have a unique fixed point by the Banach fixed point theorem in $\mathcal{C}$. 

Now, we move to proving another lemma. Consider a mean field $\boldsymbol{\mu}$, the optimal value function for an agent $j$ can be given by 

\begin{equation}\label{eq:optimalvalue}
    J^{j,\boldsymbol{\mu}}_{*, t}(s^j) = \sup_{\pi^j \in \Pi^j} \E^{\pi^j} \Big[ \sum_{t=0}^\infty \beta^t r^j(s^j_t, a^j_t, \mu_t) | s^j_0 = s^j \Big]. 
\end{equation}

Let $\boldsymbol{J}^{j,\boldsymbol{\mu}}_* \triangleq (J^{j, \boldsymbol{\mu}}_{*, t})_{t \geq 0}$ denote the optimal value function for an agent $j \in \{ 1, \ldots, N \}$. This function is guaranteed to be continuous by Assumption \ref{assumption:rewardfunction}. 

\begin{lemm}
For any $\boldsymbol{\mu}$, the optimal point $\boldsymbol{J}^{j,\boldsymbol{\mu}}_*$, for an agent $j \in \{ 1, \ldots, N \}$, belongs to the Banach space $\mathcal{C}$. 
\end{lemm}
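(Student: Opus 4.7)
The plan is to verify, componentwise, that $J^{j,\boldsymbol{\mu}}_{*,t} \in C^t_w(\mathcal{S})$ for every $t \geq 0$, which will then give $\boldsymbol{J}^{j,\boldsymbol{\mu}}_* \in \prod_{t\geq 0} C^t_w(\mathcal{S}) = \mathcal{C}$. Two conditions need to be checked: (i) $J^{j,\boldsymbol{\mu}}_{*,t}$ is measurable with finite $w$-norm, so that it lies in $C_w(\mathcal{S})$; and (ii) the norm bound $\|J^{j,\boldsymbol{\mu}}_{*,t}\|_w \leq L_t$ holds. Condition (i) is standard under Assumption~\ref{assumption:rewardfunction} together with compactness of $\mathcal{A}^j$, via measurable-selection arguments; alternatively, since Lemma~\ref{lemma:operatorlemma} guarantees that the non-stationary Bellman operator $T^{\boldsymbol{\mu}}_t$ preserves $C_w(\mathcal{S})$, the optimality equation identifies $J^{j,\boldsymbol{\mu}}_{*,t}$ as an element of $C_w(\mathcal{S})$ by iteration from the terminal direction. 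The real work is (ii), which I would obtain by a uniform pathwise domination.

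For the norm bound, fix $t \geq 0$ and $s^j \in \mathcal{S}$. For any $\pi^j \in \Pi^j$, Assumption~\ref{assumption:rewardbound} gives $r^j(s^j_k, a^j_k, \mu_k) \leq M_k\, w(s^j_k)$ pointwise along any trajectory. Since all terms are nonnegative after taking absolute values, Tonelli's theorem yields
\begin{align*}
\Big|\mathbb{E}^{\pi^j}\Big[\sum_{k=t}^\infty \beta^{k-t} r^j(s^j_k, a^j_k, \mu_k)\,\Big|\, s^j_t = s^j\Big]\Big| \leq \sum_{k=t}^\infty \beta^{k-t} M_k\, \mathbb{E}^{\pi^j}\!\left[w(s^j_k) \,\middle|\, s^j_t = s^j\right].
\end{align*}
Iterating the one-step moment estimate of Assumption~\ref{assumption:transitionbound} through the tower property gives, uniformly in $\pi^j$, $\mathbb{E}^{\pi^j}[w(s^j_k) \mid s^j_t = s^j] \leq \alpha^{k-t}\, w(s^j)$. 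Substituting and then taking the supremum over $\pi^j$ produces $|J^{j,\boldsymbol{\mu}}_{*,t}(s^j)| \leq w(s^j)\sum_{k=t}^\infty (\beta\alpha)^{k-t} M_k = L_t\, w(s^j)$, so that $\|J^{j,\boldsymbol{\mu}}_{*,t}\|_w \leq L_t$.

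The main obstacle is ensuring that the bounding series $L_t = \sum_{k \geq t}(\beta\alpha)^{k-t} M_k$ actually converges; since $M_k = \gamma^k \mathcal{R}$ by Assumption~\ref{assumption:rewardbound}, summability requires $\alpha\beta\gamma < 1$, which is implicit in the standing parameter conditions (and compatible with Assumption~\ref{assumption:variablebound} used elsewhere). Once summability is in place, the rest is bookkeeping: the per-$t$ estimates place each $J^{j,\boldsymbol{\mu}}_{*,t}$ inside $C^t_w(\mathcal{S})$, and assembling across $t$ gives $\boldsymbol{J}^{j,\boldsymbol{\mu}}_* \in \mathcal{C}$, as required.
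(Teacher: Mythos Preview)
Your proposal is correct and follows essentially the same route as the paper: bound the running reward by $M_k\,w(s^j_k)$ via Assumption~\ref{assumption:rewardbound}, iterate the one-step moment estimate from Assumption~\ref{assumption:transitionbound} to get $\mathbb{E}^{\pi^j}[w(s^j_k)\mid s^j_t=s^j]\leq \alpha^{k-t}w(s^j)$, and sum to obtain $\|J^{j,\boldsymbol{\mu}}_{*,t}\|_w\leq L_t$. Your version is in fact more careful than the paper's, which fixes a policy and writes equalities rather than taking the supremum explicitly, and which does not spell out the measurability or summability issues you flag.
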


\begin{proof}
Let $\pi$ be a Markov policy. At any time $t \geq 0$, we have 
\begin{equation}
    \begin{array}{l}
         J^{j,\boldsymbol{\mu}}_{*,t}(s)
         \\ \\
         = \sum_{k=t}^\infty \beta^{k-t} \E^\pi(r^j(s^j_k, a^j_k, \mu_{k}) | s^j_t = s^j)
         \\ \\
         \leq  \sum_{k=t}^\infty \beta^{k-t} M_k \E^\pi( v(s_k) | s^j_t = s^j)
         \\ \\
         \leq  \sum_{k=t}^\infty \beta^{k-t} M_k \alpha^{k-t} v(s^j)
         \\ \\
          = L_t v(s^j)
    \end{array}
\end{equation}

The second step is from Assumption~\ref{assumption:rewardbound}. The third step is from Assumption~\ref{assumption:transitionbound}. Hence, the function $\boldsymbol{J}^{j,\boldsymbol{\mu}}_{*,t}$ belongs to the set $C^t_w(\mathcal{S})$ from the Eq.~\ref{eq:vnormbound}. 

\end{proof}




Let us define another set $\mathcal{D}$ as 

\begin{equation}
    \mathcal{D} \triangleq \Pi^\infty_{t=0} \mathcal{T}^t_w (\mathcal{S} \times \mathcal{A}^j)
\end{equation}

Before proving further results, we restate a result proved previously in non-homogeneous stochastic processes \citep{hinderer1970decision}. We will adapt the result to pertain to an agent $j \in \{1, \ldots, N \}$. For any $E$-valued random element $x$, we use the notation $\mathcal{L}(x)$ to denote the distribution of $x$ (i.e. $\mathcal{L}(x) \in P(E)$). 

Consider a variable $\boldsymbol{\nu} \in \mathcal{D}
$. From the operator $T^{\boldsymbol{\mu}}$, we will define another operator $T^{\boldsymbol{\nu}}$, where the relation is such that $\mu_t = \nu_{t,1}$. Recall that the subscript here refers to the state marginal of $\nu$ (refer Eq.~\ref{eq:thedefinitionofT}). From the result of $T^{\boldsymbol{\mu}}$, we know that the operator $T^{\boldsymbol{\nu}}$ is well-defined and has a unique fixed point. Now, we can state the following result.

\begin{lemm}\label{lemma:fixedpointoptimality}
For an agent $j \in \{ 1, \ldots, N \}$, for any $\boldsymbol{\nu} \in \mathcal{D}$, the collection of value functions $\boldsymbol{J}^{j, \boldsymbol{\nu}}_*$ is the unique fixed point of the operator $T^{\boldsymbol{\nu}}$. Furthermore, $\pi^j \in M$ is optimal if and only if 
\begin{equation}\label{eq:defofnu}
    \begin{array}{l}
    \nu^{\pi^j}_t(\{ (s^j_t,a^j_t): r^j(s^j_t, a^j_t, \nu_{t,1}) + \\ \\ \beta \bigintss_{\mathcal{S}}
    J^{j, \boldsymbol{\nu}}_{*,t+1}(s^j_{t+1})p(s^j_{t+1}|s^j_t,a^j_t,\nu_{t,1})   = T^\nu_t J^{j, \boldsymbol{\nu}}_{*, t+1}(s^j_t)\}) \\ \\ = 1
    \end{array}
\end{equation}

where $\nu^{\pi^j}_t = \mathcal{L}(s^j_t,a^j_t)$. 
\end{lemm}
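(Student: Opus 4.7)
The plan is to treat the problem, for fixed $\boldsymbol{\nu} \in \mathcal{D}$, as a non-stationary single-agent discounted MDP for agent $j$ in which the deterministic sequence $\{\nu_{t,1}\}_{t \ge 0}$ plays the role of an exogenous driving signal. Under this reduction, the claim is essentially the Bellman optimality equation plus a characterization of optimal Markov policies, adapted to the $w$-weighted norm setting required by the unbounded state space. I would split the proof into (i) showing $\boldsymbol{J}^{j,\boldsymbol{\nu}}_*$ is a fixed point of $T^{\boldsymbol{\nu}}$, (ii) getting uniqueness from the already-established contraction, and (iii) the iff characterization of optimality.

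For (i), by the principle of dynamic programming for non-stationary discounted MDPs applied to the definition of $J^{j,\boldsymbol{\nu}}_{*,t}$ in Eq.~\ref{eq:optimalvalue}, the value function satisfies
\[
J^{j,\boldsymbol{\nu}}_{*,t}(s^j) = \sup_{a^j \in \mathcal{A}^j}\Big[r^j(s^j,a^j,\nu_{t,1}) + \beta \bigintss_{\mathcal{S}} J^{j,\boldsymbol{\nu}}_{*,t+1}(s^j_{t+1}) p(s^j_{t+1}|s^j,a^j,\nu_{t,1})\Big],
\]
which is precisely $T^{\boldsymbol{\nu}}_t J^{j,\boldsymbol{\nu}}_{*,t+1}(s^j)$. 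Assumptions~\ref{assumption:rewardbound} and \ref{assumption:transitionbound}, together with the already-established membership $\boldsymbol{J}^{j,\boldsymbol{\nu}}_* \in \mathcal{C}$ proved in the preceding lemma, ensure the integrals are finite and the recursion is well-defined in the $w$-norm. For (ii), uniqueness is immediate from Lemma~\ref{lemma:operatorlemma}: $T^{\boldsymbol{\nu}}$ is a contraction on the complete metric space $(\mathcal{C},\rho)$ with modulus $\alpha\beta\sigma < 1$ (Assumption~\ref{assumption:variablebound}), so the Banach fixed point theorem yields exactly one fixed point in $\mathcal{C}$, which must equal $\boldsymbol{J}^{j,\boldsymbol{\nu}}_*$.

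For (iii), the compactness of $\mathcal{A}^j$ together with the continuity of $r^j$ and $p$ (Assumption~\ref{assumption:rewardfunction}) ensures the supremum defining $T^{\boldsymbol{\nu}}_t J^{j,\boldsymbol{\nu}}_{*,t+1}(s^j)$ is attained on a nonempty set $\mathcal{A}^*_t(s^j)\subseteq\mathcal{A}^j$. For the ``if'' direction, I would telescope the identity $J^{j,\boldsymbol{\nu}}_{*,t}(s^j_t) = r^j(s^j_t,a^j_t,\nu_{t,1}) + \beta\,\E[J^{j,\boldsymbol{\nu}}_{*,t+1}(s^j_{t+1})\mid s^j_t,a^j_t]$---which by Eq.~\ref{eq:defofnu} holds $\nu^{\pi^j}_t$-almost surely at every $t$---along a trajectory generated by $\pi^j$, using the moment bound from Assumption~\ref{assumption:transitionbound} and the growth bound from Assumption~\ref{assumption:rewardbound} to justify passage to the infinite sum. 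This yields $J^{j,\boldsymbol{\nu}}_{*,0}(s^j_0) = \E^{\pi^j}\big[\sum_{t=0}^\infty \beta^t r^j(s^j_t,a^j_t,\nu_{t,1})\big]$, so $\pi^j$ is optimal. For the ``only if'' direction, I would run the same telescoping in reverse: if Eq.~\ref{eq:defofnu} failed, there would exist a $t$ and a set of positive $\nu^{\pi^j}_t$-measure on which the selected action is strictly suboptimal in $T^{\boldsymbol{\nu}}_t$, and integrating the resulting gap produces a strict inequality between $J^{j,\boldsymbol{\nu}}_{*,0}(s^j_0)$ and the return of $\pi^j$, contradicting optimality.

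The main obstacle will be making the measurability and attainment arguments rigorous in the Polish, non-compact state-space setting: one must invoke a Berge-type maximum theorem to get upper semi-continuity of $s^j \mapsto T^{\boldsymbol{\nu}}_t J^{j,\boldsymbol{\nu}}_{*,t+1}(s^j)$, a measurable-selection theorem to produce a measurable maximizer, and careful control in the $w$-weighted norm so that the telescoping sum converges and exchange of sum and expectation is legal. Once these technicalities are handled---essentially by invoking the corresponding results used in \citet{hinderer1970decision} and \citet{saldi2018discrete} and replacing $\mu_t$ by $\nu_{t,1}$ throughout---both halves of the characterization reduce to the familiar one-step deviation calculation and the lemma follows.
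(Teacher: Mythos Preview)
Your proposal is correct and in fact considerably more detailed than what the paper does: the paper's entire proof of this lemma is a one-line citation, ``See \citet{hinderer1970decision},'' deferring to the classical treatment of non-stationary discounted dynamic programming. Your sketch---reducing to a non-homogeneous single-agent MDP with exogenous signal $\{\nu_{t,1}\}$, establishing the Bellman equation, invoking the contraction on $(\mathcal{C},\rho)$ for uniqueness, and then running the standard telescoping/one-step-deviation argument for the optimality characterization---is exactly the content of that reference specialized to the present setting, so you are effectively unpacking what the paper leaves implicit. The technical caveats you flag (measurable selection, Berge-type continuity, $w$-norm integrability for the telescoping limit) are precisely the points handled in \citet{hinderer1970decision} and \citet{saldi2018discrete}, so there is no gap.
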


\begin{proof}
See \citet{hinderer1970decision}.

\end{proof}

Now, let us define a set-valued mapping $\tau: \mathcal{D} \xrightarrow{} 2^{|\mathcal{P}(\mathcal{S} \times \mathcal{A})^\infty|}$, for an agent $j$ as follows: 

\begin{equation}
\tau(\boldsymbol{\nu}) = C(\boldsymbol{\nu}) \cap B(\boldsymbol{\nu})    
\end{equation}

where 

\begin{equation}
\begin{array}{l}
        C(\boldsymbol{\nu})   \triangleq \{  \boldsymbol{\nu}' \in \mathcal{P}(\mathcal{S} \times \mathcal{A}^j): \nu'_{0,1} = \mu_0 \textrm{ and } \\ \\ \nu'_{t+1, 1}(\cdot) = \bigintss_{\mathcal{S} \times \mathcal{A}^j} p(\cdot|s^j,a^j, \nu_{t,1}) \nu_t(s^j,a^j) \}
        \end{array}
\end{equation}

and 

\begin{equation}
\begin{array}{l}
    B(\boldsymbol{\nu}) \triangleq \Big\{ \boldsymbol{\nu}' \in \mathcal{P}(\mathcal{S} \times \mathcal{A}^j): \forall t\geq0,\\ \\
    \nu'_t(\big\{(s^j_t,a^j_t): r^j(s^j_t,a^j_t,\nu_{t,1}) + \\ \\ \beta \bigintss_{\mathcal{S}} J^{j, \boldsymbol{\nu}}_{*, t+1}(s^j_{t+1}) p(s^j_{t+1}|s^j_t, a^j_t, \nu_{t,1}) \big\} = 1
    \Big\}
    \end{array}
\end{equation}

Now in the next result we show that, the image of $\mathcal{D}$ under $\tau$ is contained in $2^\mathcal{D}$.

\begin{proposition}
For any $\nu \in \mathcal{D}$, we have $\tau(\nu) \subset \mathcal{D} $ 
\end{proposition}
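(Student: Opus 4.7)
The plan is to verify directly that every coordinate $\nu'_t$ of an element $\nu' \in \tau(\nu)$ lies in $\mathcal{T}^t_w(\mathcal{S}\times\mathcal{A}^j)$. By the definition in Eq.~\ref{eq:thedefinitionofT}, this reduces to checking that the state marginal $\nu'_{t,1}$ satisfies $\int_\mathcal{S} w(s)\,\nu'_{t,1}(ds) \leq \alpha^t M$ for every $t \geq 0$. The only information about $\nu'$ I actually need is that $\nu' \in C(\nu)$; the $B(\nu)$ constraint plays no role in this moment bookkeeping.

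For the base case $t = 0$, the constraint $\nu'_{0,1} = \mu_0$ built into $C(\nu)$ combined with Assumption~\ref{assumption:initialmeanfield} gives
\begin{equation*}
\int_\mathcal{S} w(s)\,\nu'_{0,1}(ds) = \int_\mathcal{S} w(s)\,\mu_0(ds) \leq M = \alpha^0 M,
\end{equation*}
so $\nu'_{0,1} \in \mathcal{T}^0_w(\mathcal{S})$ and hence $\nu'_0 \in \mathcal{T}^0_w(\mathcal{S}\times\mathcal{A}^j)$.

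For $t \geq 1$, I would use the recursion encoded in $C(\nu)$, namely
\begin{equation*}
\nu'_{t,1}(\cdot) = \int_{\mathcal{S}\times\mathcal{A}^j} p(\cdot \mid s, a, \nu_{t-1,1})\,\nu_{t-1}(ds,da),
\end{equation*}
integrate $w$ against both sides, and apply Tonelli to exchange the order of integration. The inner integral is bounded above by $\alpha\,w(s)$ uniformly in $(a,\mu)$ by Assumption~\ref{assumption:transitionbound}, which gives
\begin{equation*}
\int_\mathcal{S} w(s')\,\nu'_{t,1}(ds') \leq \alpha \int_\mathcal{S} w(s)\,\nu_{t-1,1}(ds).
\end{equation*}
Since $\nu \in \mathcal{D}$ already satisfies $\nu_{t-1,1} \in \mathcal{T}^{t-1}_w(\mathcal{S})$, the right-hand side is at most $\alpha \cdot \alpha^{t-1} M = \alpha^t M$, which delivers $\nu'_{t,1} \in \mathcal{T}^t_w(\mathcal{S})$ and therefore $\nu'_t \in \mathcal{T}^t_w(\mathcal{S}\times\mathcal{A}^j)$.

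I do not anticipate a genuine obstacle here; this is essentially a one-step moment-propagation argument. The only technical point worth attention is the measurability of $(s,a) \mapsto \int w(s')\,p(ds' \mid s,a,\nu_{t-1,1})$ needed to invoke Tonelli, but this is standard given that $p$ is a stochastic kernel and $w$ is nonnegative measurable. I want to emphasize that no induction on $\nu'$ itself is required: the recursion in $C(\nu)$ expresses $\nu'_{t,1}$ purely in terms of the fixed ingredient $\nu_{t-1}$ of $\nu \in \mathcal{D}$, so each coordinate bound can be read off independently rather than propagated through a chain.
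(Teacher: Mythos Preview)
Your proposal is correct and follows essentially the same moment-propagation argument as the paper: reduce to $C(\boldsymbol{\nu}) \subset \mathcal{D}$, then use Assumption~\ref{assumption:transitionbound} to bound $\int w\,d\nu'_{t,1} \leq \alpha \int w\,d\nu_{t-1,1} \leq \alpha^t M$. Your closing remark is a genuine sharpening of the exposition: the paper phrases the argument as induction on $t$ for $\nu'$, but as you point out the recursion in $C(\boldsymbol{\nu})$ feeds only on the fixed $\nu_{t-1}$, so the inductive hypothesis on $\nu'$ is never actually invoked and each coordinate bound stands on its own.
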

\begin{proof}
Fix a $\nu \in \mathcal{D}$. To prove the result, it is sufficient to prove that $C(\boldsymbol{\nu}) \subset \mathcal{D}$. Let $\boldsymbol{\nu}' \in C(\boldsymbol{\nu})$. We aim to prove by induction that $\nu'_{t,1} \in P^t_v(\mathcal{S})$ for all $t \geq 0$. The claim trivially holds for $t = 0$ as $\nu'_{0,1} = \mu_0$. Assume that the claim holds for $t$ and consider $t+1$. We have 
\begin{equation}
    \begin{array}{l}
    \bigintss_{\mathcal{S}} w(s_t^j)\nu'_{t+1, 1} (s^j_{t+1}) 
    \\ \\
    = \bigintss_{\mathcal{S} \times \mathcal{A}} \bigintss_{\mathcal{S}} w(s^j_{t+1}) p(s^j_{t+1}|s^j_t,a^j_t, \nu_{t,1}) \nu_t(s^j_t,a^j_t)
    \\ \\ 
    \leq \bigintss_{\mathcal{S}} \alpha w(s^j_t) \nu_{t,1}(s^j_t)
    \\ \\
    \leq \alpha^{t+1} M
    \end{array}
\end{equation}

The third step is from Assumption~\ref{assumption:transitionbound} and the last step is from the fact that $\nu_{t,1} \in P^t_v(\mathcal{S})$. Hence, we can conclude that $\nu'_{t+1,1} \in P^{t+1}_v(\mathcal{S})$. 
\end{proof}

Now we can say that $\boldsymbol{\nu} \in \mathcal{D}$ is a fixed point of $\tau$ if $\boldsymbol{\nu} \in \tau(\boldsymbol{\nu})$. The following lemma connects the mean field equilibrium and the fixed points of $\tau$. 

\begin{lemm}\label{lemm:sufficiency}
Suppose the set valued mapping $\tau$ has a fixed point $\boldsymbol{\nu^j} = (\nu^j_t)_{t \geq 0}$ for an agent $j \in \{ 1, \ldots, N\}$. Consider a Markov policy for the agent $j$ as $\pi^j = (\pi^j_t)_{t \geq 0}$, which is obtained by factoring as $\nu^j_t(s^j_t, a^j_t) = \nu^j_{t,1}(s^j_t) \pi^j_t(a^j_t | s^j_t)$, and let $\boldsymbol{\nu}^j_1 = (\nu^j_{t,1})_{t \geq 0}$. Then the pair $(\pi^j,\boldsymbol{\nu}^j_1)$ is a decentralized mean field equilibrium. 
\end{lemm}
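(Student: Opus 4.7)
The plan is to unpack the fixed-point condition $\boldsymbol{\nu}^j \in \tau(\boldsymbol{\nu}^j) = C(\boldsymbol{\nu}^j) \cap B(\boldsymbol{\nu}^j)$ into its two components and argue that the first gives the mean-field consistency requirement $\mu^j_* = \Psi(\pi^j_*)$ and the second gives the optimality requirement $\pi^j_* \in \Phi(\mu^j_*)$ of Definition~\ref{def:DMFE}. The Markov policy $\pi^j = (\pi^j_t)_{t\ge 0}$ is well defined by the disintegration $\nu^j_t(s^j_t,a^j_t) = \nu^j_{t,1}(s^j_t)\,\pi^j_t(a^j_t\mid s^j_t)$ (with an arbitrary selection where $\nu^j_{t,1}$ has zero mass, just as in Eq.~\ref{eq:definitionofmarkovpolicy}), and these are the two objects I need to verify against Definition~\ref{def:DMFE}.

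First I would address the consistency condition. Since $\boldsymbol{\nu}^j \in C(\boldsymbol{\nu}^j)$, by the definition of $C$ we have $\nu^j_{0,1} = \mu_0$ and
\begin{equation}
\nu^j_{t+1,1}(\cdot) = \bigintss_{\mathcal{S}\times\mathcal{A}^j} p(\cdot\mid s^j_t,a^j_t,\nu^j_{t,1})\,\nu^j_t(s^j_t,a^j_t).
\end{equation}
Substituting the factorization $\nu^j_t(s^j_t,a^j_t) = \nu^j_{t,1}(s^j_t)\pi^j_t(a^j_t\mid s^j_t)$ yields exactly the recursion that defines $\Psi(\pi^j)$ in Eq.~\ref{eq:meanfielddetermination}, with the identification $\mu^j_t \equiv \nu^j_{t,1}$. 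Together with the common initial condition $\mu_0$, iterating this recursion shows $\boldsymbol{\nu}^j_1 = \Psi(\pi^j)$, i.e.\ $\mu^j_* = \Psi(\pi^j_*)$.

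Next I would address the optimality condition. Since $\boldsymbol{\nu}^j \in B(\boldsymbol{\nu}^j)$, the measure $\nu^j_t$ puts mass one on the set of state-action pairs that attain the Bellman maximum of $T^{\boldsymbol{\nu}^j}_t J^{j,\boldsymbol{\nu}^j}_{*,t+1}$. This is precisely the sufficient (and necessary) optimality condition in Lemma~\ref{lemma:fixedpointoptimality}, applied to the mean field $\boldsymbol{\nu}^j_1$: any Markov policy $\pi^j$ consistent with the factorization of $\nu^j_t$ is optimal for $\boldsymbol{\nu}^j_1$. Combined with Theorem~\ref{theorem:markovpolicies}, which guarantees no loss of optimality by restricting to Markov policies, this establishes $\pi^j_* \in \Phi(\mu^j_*)$.

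The only delicate step is making sure the factorization used to extract $\pi^j$ from $\nu^j$ is compatible with \emph{both} of the above arguments simultaneously: the $C$-condition is stated in terms of the joint $\nu^j_t$, while Lemma~\ref{lemma:fixedpointoptimality} refers to the policy $\pi^j$ through $\nu^{\pi^j}_t = \mathcal{L}(s^j_t,a^j_t)$. I would resolve this by noting that the factorization is unique up to $\nu^j_{t,1}$-null sets, so any version of $\pi^j$ induces the same $\nu^j_t$ along the on-path trajectory and thus the same value function and same $C$/$B$ conditions. Once this routine measure-theoretic check is done, the two established properties together satisfy Definition~\ref{def:DMFE}, so $(\pi^j,\boldsymbol{\nu}^j_1)$ is a DMFE.
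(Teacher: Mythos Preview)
Your proposal is correct and follows essentially the same approach as the paper: split the fixed-point condition $\boldsymbol{\nu}^j \in C(\boldsymbol{\nu}^j)\cap B(\boldsymbol{\nu}^j)$, use membership in $C(\boldsymbol{\nu}^j)$ together with the factorization to obtain $\Psi(\pi^j)=\boldsymbol{\nu}^j_1$, and use membership in $B(\boldsymbol{\nu}^j)$ together with Lemma~\ref{lemma:fixedpointoptimality} to obtain $\pi^j\in\Phi(\boldsymbol{\nu}^j_1)$. Your write-up is in fact more careful than the paper's three-line argument (which simply asserts these two conclusions), particularly in flagging the identification $\nu^j_t = \nu^{\pi^j}_t$; the appeal to Theorem~\ref{theorem:markovpolicies} is unnecessary since Lemma~\ref{lemma:fixedpointoptimality} already yields optimality over all of $\Pi^j$, but it does no harm.
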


\begin{proof}
If $\boldsymbol{\nu}^j \in \tau(\boldsymbol{\nu}^j)$, then the corresponding Markov policy $\pi^j$ satisfies the Eq.~\ref{eq:defofnu} for $\boldsymbol{\nu}^j$. Thus, by Lemma~\ref{lemma:fixedpointoptimality}, $\pi^j \in \Phi(\boldsymbol{\nu}^j_1)$. Further, since $\boldsymbol{\nu}^j \in C(\boldsymbol{\nu}^j)$, we have $\Psi(\pi^j) = \boldsymbol{\nu}^j_1$, which completes the proof. 
\end{proof}

From the Lemma~\ref{lemm:sufficiency}, it can be seen that the set valued mapping (operator) $\tau$ having a fixed point is sufficient to guarantee the existence of the decentralized mean field equilibrium. Like several results in centralized multiagent systems \citep{nash20167} and centralized mean field games \citep{lasry2007mean, huang2006large, saldi2018discrete}, we will use the Kakutani's fixed point theorem \citep{kakutani1941generalization}, to guarantee the existence of a fixed point for the operator $\tau$. This theorem requires the set on which the set valued mapping $\tau$ operates to be a non-empty, compact and convex subset of some Euclidean space. Further, the operator $\tau(\boldsymbol{\nu}^j)$ is required to be non-empty and convex for all $\boldsymbol{\nu}^j$. Finally, the operator $\tau$ should have a closed graph. Given these three conditions, we can conclude that $\tau$ has a fixed point using the Kakutani's fixed point theorem.

For the first condition, we need to show that the set on which $\boldsymbol{\nu}^j$ resides is non-empty, compact and convex, i.e. we need to show that $\mathcal{D}$ is non-empty, compact and convex. First, note that the function $w$ can be expressed as a continuous moment function and hence the corresponding set $\mathcal{T}^t_v(\mathcal{S})$ is guaranteed to be compact \citep{hernandez2012discrete}. As a consequence, the set $\mathcal{T}^t_v(\mathcal{S} \times \mathcal{A}^j)$ is tight, since the action space $\mathcal{A}^j$ is compact. Also, since the set $\mathcal{T}^t_v(\mathcal{S} \times \mathcal{A}^j)$ is closed, it is compact.   Therefore, the set $\mathcal{D}$ is also compact. From Assumption~\ref{assumption:transitionbound} and Assumption~\ref{assumption:rewardbound} we can show that a line segment between any two points in $\mathcal{D}$ lies in $\mathcal{D}$ and hence the set $\mathcal{D}$ is convex. These assumptions also guarantee that the set $\mathcal{D}$ is non-empty.

For the third condition, we need to show that $\tau(\boldsymbol{\nu}^j)$ is non-empty and convex for any $\boldsymbol{\nu}^j \in \mathcal{D}$. Now, from Lemma~\ref{lemma:fixedpointoptimality} we know that $B(\boldsymbol{\nu}^j)$ is non-empty and hence $\tau(\boldsymbol{\nu}^j)$ is non-empty. Also, we can show that each of the sets $C(\boldsymbol{\nu}^j)$ and $B(\boldsymbol{\nu}^j)$ is convex (see \citet{saldi2018discrete}) and hence, their intersection is convex. This makes the set $\tau(\boldsymbol{\nu}^j)$ convex.

\begin{lemm}\label{lemm:closedgraph}
Using Assumptions~\ref{assumption:rewardfunction}\textemdash\ref{assumption:variablebound}, the graph of $\tau$, i.e. the set 
\begin{equation}
    Gr(\tau) \triangleq \{ (\boldsymbol{\nu}, \boldsymbol{\mathcal{E}}) \in \mathcal{D} \times \mathcal{D}: \boldsymbol{\mathcal{E}} \in \tau(\boldsymbol{\nu})\}
\end{equation}
is closed.
\end{lemm}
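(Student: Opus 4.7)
The plan is to verify the sequential closed graph characterization: take an arbitrary sequence $(\boldsymbol{\nu}_n,\boldsymbol{\mathcal{E}}_n) \in Gr(\tau)$ with $(\boldsymbol{\nu}_n,\boldsymbol{\mathcal{E}}_n) \to (\boldsymbol{\nu},\boldsymbol{\mathcal{E}})$ in $\mathcal{D}\times\mathcal{D}$ (so $\boldsymbol{\mathcal{E}}_n \in C(\boldsymbol{\nu}_n)\cap B(\boldsymbol{\nu}_n)$ for each $n$), and show that $\boldsymbol{\mathcal{E}} \in C(\boldsymbol{\nu})\cap B(\boldsymbol{\nu})$. Throughout I would exploit that convergence in $\mathcal{D}$ entails weak convergence of each coordinate in $\mathcal{P}(\mathcal{S}\times\mathcal{A}^j)$ together with the uniform moment bound $\int w\,d\nu_{t,1}\le\alpha^t M$ inherited from $\mathcal{T}^t_w(\mathcal{S}\times\mathcal{A}^j)$, which lets us pass limits against $w$-bounded continuous test functions rather than only bounded continuous ones.

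First I would handle membership in $C(\boldsymbol{\nu})$. The initial marginal condition $\mathcal{E}_{0,1}=\mu_0$ survives trivially under the limit. For the recursion $\mathcal{E}_{t+1,1}(\cdot)=\int p(\cdot\mid s^j,a^j,\nu_{t,1})\,\mathcal{E}_t(ds^j,da^j)$, I would test both sides against an arbitrary bounded continuous $g$ on $\mathcal{S}$ and invoke (a) weak convergence $\mathcal{E}_{n,t+1,1}\to\mathcal{E}_{t+1,1}$, (b) continuity of $p$ in its mean-field argument (implied by the continuity of the reward/transition data in Assumption~\ref{assumption:rewardfunction} together with the $w$-bounded continuity furnished by Assumption~\ref{assumption:transitionbound}), and (c) weak convergence $\mathcal{E}_{n,t}\to\mathcal{E}_t$. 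A standard Skorokhod/continuous mapping argument then gives the recursion in the limit.

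The main work, and the main obstacle, is showing $\boldsymbol{\mathcal{E}}\in B(\boldsymbol{\nu})$, because this requires continuity of the optimal value functions $J^{j,\boldsymbol{\nu}}_{*,t}$ in $\boldsymbol{\nu}$. Here I would lean on Lemma~\ref{lemma:fixedpointoptimality}, which identifies $\boldsymbol{J}^{j,\boldsymbol{\nu}}_*$ as the unique fixed point in $\mathcal{C}$ of the contractive operator $T^{\boldsymbol{\nu}}$ with contraction constant $\alpha\beta\sigma<1$. By a standard parametric-contraction argument (continuity of $T^{\boldsymbol{\nu}}u$ in $\boldsymbol{\nu}$ for fixed $u$, plus a uniform bound inside $C^t_w(\mathcal{S})$), one gets $J^{j,\boldsymbol{\nu}_n}_{*,t+1}\to J^{j,\boldsymbol{\nu}}_{*,t+1}$ in the appropriate $w$-weighted topology. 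With this in hand, the map
\[
(s^j,a^j,\boldsymbol{\nu})\mapsto r^j(s^j,a^j,\nu_{t,1})+\beta\!\int\! J^{j,\boldsymbol{\nu}}_{*,t+1}(s')\,p(ds'\mid s^j,a^j,\nu_{t,1})
\]
is jointly upper semicontinuous, and by Berge's maximum theorem the argmax correspondence is upper hemicontinuous. Weak convergence of $\mathcal{E}_{n,t}$, whose support lies in this argmax at stage $\boldsymbol{\nu}_n$, then places the support of the limit $\mathcal{E}_t$ in the corresponding argmax at $\boldsymbol{\nu}$, giving the defining condition of $B(\boldsymbol{\nu})$.

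Combining the two conclusions yields $\boldsymbol{\mathcal{E}}\in C(\boldsymbol{\nu})\cap B(\boldsymbol{\nu})=\tau(\boldsymbol{\nu})$, so $Gr(\tau)$ is sequentially closed, hence closed since $\mathcal{D}\times\mathcal{D}$ is metrizable. The delicate step is the uniform (in $n$) control of the $w$-weighted tails when transferring weak convergence from bounded continuous test functions to the unbounded reward and value functions; Assumptions~\ref{assumption:transitionbound}--\ref{assumption:rewardbound} together with the coordinate-wise bound inherent in $\mathcal{D}$ are exactly what is needed to make the dominated-convergence style arguments go through.
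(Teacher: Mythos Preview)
Your sketch is sound and follows the standard route for this type of result. Note, however, that the paper does not actually supply its own proof here: it simply defers to Proposition~3.9 of \citet{saldi2018discrete}. Your sequential closed-graph argument---splitting into the $C(\boldsymbol{\nu})$ recursion and the $B(\boldsymbol{\nu})$ optimality condition, then using a parametric-contraction continuity argument for the value functions together with $w$-moment control to pass weak limits against unbounded integrands---is precisely the strategy employed in that reference, so in substance you have reconstructed what the paper cites rather than offered an alternative.

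One small caution: your step~(b) invokes continuity of the transition kernel $p(\cdot\mid s,a,\mu)$ in the mean-field argument, but Assumptions~\ref{assumption:rewardfunction}--\ref{assumption:variablebound} as stated in this paper do not explicitly list weak continuity of $p$ (Assumption~\ref{assumption:rewardfunction} concerns only the reward). In \citet{saldi2018discrete} this weak continuity is part of their standing hypotheses, and the present paper is implicitly importing it; just be aware that the citation you give for it is not quite the right one within this paper's numbering.
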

\begin{proof}
See Proposition 3.9 in \citet{saldi2018discrete} for complete proof.
\end{proof}

Now, we are ready to give the final result. 

\begin{lemm}\label{lemm:finallemma}
Using Assumptions~\ref{assumption:rewardfunction}\textemdash\ref{assumption:variablebound}, for an agent $j \in \{1, \ldots, N\}$, there exists a fixed point $\boldsymbol{\nu}^j$ of the set valued mapping $\tau: \mathcal{D} \xrightarrow{} 2^\mathcal{D}$. Then, the pair $(\pi^j, \boldsymbol{\nu}_1^j)$ is a decentralized mean field equilibrium, where $\pi^j$ is the policy of agent $j$ and $\boldsymbol{\nu}_1^j$ is its mean field estimate constructed according to Lemma~\ref{lemm:sufficiency}.
\end{lemm}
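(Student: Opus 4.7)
The plan is to close out the existence argument by assembling all the pieces developed earlier in the section and invoking the Kakutani--Fan--Glicksberg fixed point theorem on the set-valued mapping $\tau$, then appealing to Lemma~\ref{lemm:sufficiency} to translate a fixed point into a DMFE. The preceding material has already done essentially all of the analytic work: tightness/compactness of $\mathcal{T}^t_w(\mathcal{S}\times\mathcal{A}^j)$, convexity and nonemptiness of $\mathcal{D}$, convexity and nonemptiness of each $\tau(\boldsymbol{\nu})$, and closedness of the graph $\mathrm{Gr}(\tau)$ via Lemma~\ref{lemm:closedgraph}. What remains is to verify that these properties fit together in the right ambient topology so that the fixed point theorem applies.

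First I would equip $\mathcal{P}(\mathcal{S}\times\mathcal{A}^j)$ with the $w$-weighted weak topology (or equivalently view elements through their integrals against functions in $C_w(\mathcal{S}\times\mathcal{A}^j)$), and give $\mathcal{D}=\prod_{t\geq 0}\mathcal{T}^t_w(\mathcal{S}\times\mathcal{A}^j)$ the product topology. Under Assumptions~\ref{assumption:rewardfunction}--\ref{assumption:variablebound}, $w$ is a continuous moment function, so each factor $\mathcal{T}^t_w(\mathcal{S}\times\mathcal{A}^j)$ is tight, closed, and convex, hence compact and convex; by Tychonoff's theorem, $\mathcal{D}$ is a nonempty, compact, convex subset of a locally convex Hausdorff topological vector space. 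Next, I would collect the already-proved facts: (a) $\tau(\boldsymbol{\nu})\subset\mathcal{D}$ for every $\boldsymbol{\nu}\in\mathcal{D}$; (b) $\tau(\boldsymbol{\nu})$ is nonempty (from Lemma~\ref{lemma:fixedpointoptimality} applied to $B(\boldsymbol{\nu})$) and convex (intersection of two convex sets $C(\boldsymbol{\nu})$ and $B(\boldsymbol{\nu})$); and (c) $\mathrm{Gr}(\tau)$ is closed by Lemma~\ref{lemm:closedgraph}.

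Now the Kakutani--Fan--Glicksberg theorem applies and yields $\boldsymbol{\nu}^j \in \mathcal{D}$ with $\boldsymbol{\nu}^j \in \tau(\boldsymbol{\nu}^j)$. To conclude, I would disintegrate each $\nu^j_t$ as
\begin{equation*}
\nu^j_t(s^j_t,a^j_t) = \nu^j_{t,1}(s^j_t)\,\pi^j_t(a^j_t\mid s^j_t),
\end{equation*}
producing a Markov policy $\pi^j=(\pi^j_t)_{t\geq 0}$ for agent $j$ and its state-marginal sequence $\boldsymbol{\nu}^j_1=(\nu^j_{t,1})_{t\geq 0}$. Applying Lemma~\ref{lemm:sufficiency} directly, the fact that $\boldsymbol{\nu}^j$ lies in both $C(\boldsymbol{\nu}^j)$ and $B(\boldsymbol{\nu}^j)$ gives $\Psi(\pi^j)=\boldsymbol{\nu}^j_1$ and $\pi^j\in\Phi(\boldsymbol{\nu}^j_1)$, so $(\pi^j,\boldsymbol{\nu}^j_1)$ is a DMFE in the sense of Definition~\ref{def:DMFE}.

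The main potential obstacle is the topological bookkeeping around the fixed point theorem: specifically, making sure the ``closed graph'' property proved in Lemma~\ref{lemm:closedgraph} is stated in the same topology under which $\mathcal{D}$ is compact and $\tau(\boldsymbol{\nu})$ is convex. Everything else is a direct consequence of the preceding lemmas, but one must be careful that Kakutani's upper hemicontinuity hypothesis (closed graph into a compact-valued map) is interpreted consistently with the $w$-weighted topology chosen on $\mathcal{P}(\mathcal{S}\times\mathcal{A}^j)$; the measurable-selection step implicit in disintegrating $\nu^j_t$ into $(\nu^j_{t,1},\pi^j_t)$ is routine since $\mathcal{S}$ is Polish and $\mathcal{A}^j$ is compact metric, but should be noted.
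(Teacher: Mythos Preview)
Your proposal is correct and follows essentially the same approach as the paper: assemble the previously established properties of $\mathcal{D}$ (nonempty, compact, convex) and of $\tau$ (nonempty convex values, closed graph), invoke the Kakutani-type fixed point theorem, and then apply Lemma~\ref{lemm:sufficiency} to conclude. If anything, your version is more careful than the paper's, which simply cites the finite-dimensional Kakutani theorem \citep{kakutani1941generalization}; your explicit use of the Kakutani--Fan--Glicksberg extension, the product topology via Tychonoff, and the remark on disintegration are the right refinements for the infinite-dimensional setting actually at hand.
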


\begin{proof}
Using the Lemma~\ref{lemm:closedgraph} and our previous results, we have proved that $\mathcal{D}$ is non-empty, compact and convex. Further, the set valued mapping $\tau$ has a closed graph, is non-empty and convex. Hence, by Kakutani's fixed point theorem \citep{kakutani1941generalization}, $\tau$ has a fixed point. Thus, from Lemma~\ref{lemm:sufficiency} this fixed point is the decentralized mean field equilibrium. 
\end{proof}

\begin{theorem2}
An agent $j \in \{1, \ldots, N\}$ in the DMFG admits a decentralized mean field equilibrium $(\pi^j_*, \mu^j_*) \in \Pi^j \times \mathcal{M}$. 
\end{theorem2}

\begin{proof}
Our result follows from Lemma~\ref{lemm:finallemma}. 
\end{proof}

\newpage

\section{Proof of Theorem \ref{theorem:operatormapping}}\label{sec:proofofoperatorexistence}

In this and the subsequent theorems, we follow the theoretical results and proofs in the work by \citet{anahtarci2019value}, and extend them to the decentralized setting.

First, we will start with some definition for the norms, similar to our previous proofs. Consider an agent $j \in \{ 1, \ldots, N\}$. Let $w: \mathcal{S} \times \mathcal{A}^j \xrightarrow{} \Re$ be a continuous weight function. For any measurable function  $v: \mathcal{S} \times \mathcal{A}^j \xrightarrow{} \Re$, the $w$-norm of $v$ is defined as, 
\begin{equation}
||v||_w \triangleq \sup_{s^j,a^j} \frac{|v(s^j,a^j) |}{w(s^j,a^j)}.
\end{equation}

Also, for any measurable function $u: \mathcal{S} \xrightarrow{} \Re$, the $w_{\max}$-norm is defined as 

\begin{equation}
    ||v||_{w_{\max}} \triangleq \sup_{s^j} \frac{u(s^j)}{w_{\max} (s^j)}. 
\end{equation}

Now we will state a set of assumptions needed for our results. 

\begin{assumption}\label{assumption:rewardboundondistance}
For all agents $j \in \{ 1, \ldots, N \}$, the reward function is continuous, and it satisfies the following Lipschitz bounds (for some Lipschitz constants $L_1$ and $L_2$): 
\begin{equation}
\begin{array}{l}
    ||r^j(\cdot, \cdot, \mu_t) - r^j(\cdot, \cdot, \hat{\mu}_t)||_w \leq  L_1 W_1(\mu_t, \hat{\mu}_t), \quad \forall \mu_t, \hat{\mu}_t
    \end{array}
\end{equation}
where $W_1$ is the Wasserstein distance of order 1. Also, 

\begin{equation}
\begin{array}{l}
    \sup_{(a^j_t, \mu_t) \in \mathcal{A}^j \times P(\mathcal{S})} |r^j(s^j_t,a^j_t,\mu_t) - r(\hat{s}^j_t, a^j_t, \mu_t) | \\ \\ \quad \quad \quad \quad \quad \quad \quad \quad \quad \quad \leq L_2 d_X (s^j_t, \hat{s}^j_t), \quad \forall{s^j_t, \hat{s}^j_t}
\end{array}
\end{equation}

\end{assumption}

\begin{assumption}\label{assumption:transitionboundondistance}
For an agent $j \in \{ 1, \ldots, N \}$, the transition function $p(\cdot | s^j,a^j,\mu)$ is weakly continuous in $(s^j, a^j, \mu)$ and satisfies the following Lipschitz bounds (for some Lipschitz constants $K_1$ and $K_2$): 

\begin{equation}
\begin{array}{l}
\sup_{s^j \in S} W_1(p(\cdot|s^j_t,a^j_t,\mu_t),  p(\cdot|s^j_t, \hat{a}^j_t, \hat{\mu}_t)) \\ \\ \leq K_1 (|| a^j_t - \hat{a}^j_t|| + W_1(\mu_t, \hat{\mu}_t)), \quad \forall \mu_t, \hat{\mu}_t, \forall a^j_t, \hat{a}^j_t.
\end{array}
\end{equation}

Also, we have, 

\begin{equation}
\begin{array}{l}
    \sup_{\mu \in P(S)} W_1(p(\cdot|s^j_t,a^j_t,\mu_t), p(\cdot|\hat{s}^j_t, \hat{a}^j_t, \mu_t)) \\ \\ \leq K_2 (d_X(s^j_t, \hat{s}^j_t) + || a^j_t - \hat{a}^j_t||), \quad  \forall s^j_t, \hat{s}^j_t,  \forall a^j_t, \hat{a}^j_t. 
    \end{array}
\end{equation}

\end{assumption}

\begin{assumption}\label{assumption:actionspacerestriction}
The action space $\mathcal{A}^j$ is convex for all agents $j \in \{ 1, \ldots, N\}$. 
\end{assumption}


\begin{assumption}\label{assumption:globalrewardbound}
For all agents $j$ and all time $t$, there exist non-negative real numbers $M$ and $\alpha$ such that for each $(s^j_t,a^j_t,\mu)~\in~\mathcal{S}\times~\mathcal{A}^j~\times~\mathcal{P}(\mathcal{S})$, we have 
\begin{equation}
    r^j(s^j_t,a^j_t,\mu) \leq M 
\end{equation}

\begin{equation}
    \int_S w_{\max} (s_{t+1}^j) p(s_{t+1}^j|s^j_t,a^j_t,\mu) \leq \alpha w(s^j_t,a^j_t) 
\end{equation}

\end{assumption}

\begin{assumption}\label{assumption:productassumption}
The product of constants $\beta \alpha < 1$. 
\end{assumption}

Let $C(\mathcal{S})$ denote the set of real-valued continuous functions on $\mathcal{S}$. Let $Lip(\mathcal{S})$ denote the set of all Lipshitz continuous function on $\mathcal{S}$, i.e., 

\begin{equation}
Lip(\mathcal{S}) \triangleq \{ g \in C(\mathcal{S}): ||g||_{Lip} < \infty \}    
\end{equation}

\noindent where $||g||_{Lip}$ is defined as 

\begin{equation}
    ||g||_{Lip} \triangleq \sup_{(x, y) \in \mathcal{S} \times \mathcal{S}} \frac{|g(x) - g(y)|}{d_X(x,y)}
\end{equation}

Here $g \in C(\mathcal{S})$. The finiteness of $||g||_{Lip}$ guarantees that $g$ is Lipschitz continuous with constant $||g||_{Lip}$. 



Also, let us define $B(\mathcal{S}, K)$ to be the set of all real valued measurable functions in $\mathcal{S}$ with $w_{\max}$-norm less than $K$. We use $Lip(\mathcal{S}, K)$ to denote the set of all Lipschitz continuous functions with $|| g||_{Lip} < \infty$. 

Finally, we define an operator $F: \mathcal{S} \times Lip(\mathcal{S}) \times \mathcal{P}(\mathcal{S}) \times \mathcal{A}^j \xrightarrow{} \Re$ as  

\begin{equation}
\begin{array}{l}
    F: \mathcal{S} \times Lip(\mathcal{S}) \times \mathcal{P}(\mathcal{S}) \times \mathcal{A}^j \ni (s^j_t, v^j_t ,\mu_t, a^j_t) \\ \\ \xrightarrow{} r^j(s^j_t,a^j_t,\mu_t) + \beta \bigintss_{\mathcal{S}} v^j_t(s^j_{t+1}) p(s^j_{t+1}|s^j_t,a^j_t,\mu_t) \in \Re
    \end{array}
\end{equation}

\begin{assumption}\label{assumption:strongconvexity}
Let $\mathcal{F}$ be the set of non-negative functions in 
\begin{equation}
    Lip \Big(\mathcal{S}, \frac{L_2}{1 - \beta K_2} \Big) \cap B \Big(\mathcal{S}, \frac{M}{1 - \beta \alpha} \Big)
\end{equation}

For an agent $j \in \{ 1, \ldots, N \}$, for any $v \in \mathcal{F}$, $\mu \in \mathcal{P}(\mathcal{S})$, and $s^j \in S$, $F(s^j, v^j, \mu, \cdot)$ is $\rho$-strongly convex; that is $F(s^j,v^j,\mu, \cdot)$ is differentiable and its gradient $\nabla F$ satisfies  

\begin{equation}
\begin{array}{l}
    F(s^j_t,v^j_t,\mu_t, a^j_t)  \geq  F(s^j_t,v^j_t,\mu_t, \hat{a}^j_t) \\ \\ + \nabla F(s^j_t,v^j_t,\mu_t, \hat{a}^j_t)^T \cdot (a^j_t - \hat{a}^j_t) + \frac{\rho}{2} ||a^j_t - \hat{a}_t^j||^2
    \\ \\ 
    \end{array}
\end{equation}

for some $\rho>0$ and for all $a^j_t,\hat{a}_t^j \in \mathcal{A}^j$. 

\end{assumption}

\begin{assumption}\label{assumption:gradientbound}
For an agent $j \in \{ 1, \ldots, N \}$, the gradient $\nabla F(s^j_t,v^j_t,\mu_t,a^j_t): \mathcal{S} \times \mathcal{F} \times \mathcal{P}(\mathcal{S}) \times \mathcal{A}^j \xrightarrow{} \Re$ satisfies the following Lipshitz bound (for some Lipschitz constant $K_F$): 

\begin{equation}
\begin{array}{l}
    \sup_{a^j_t \in A^j} ||\nabla F(s^j_t,v^j_t,\mu_t, a^j_t) - \nabla F(\hat{s}^j_t, \hat{v}^j_t, \hat{\mu}_t, a^j_t)|| \\ \\ \leq K_F (d_X(s^j_t, \hat{s}^j_t) + ||v^j_t - \hat{v}^j_t||_{w_{\max}} + W_1 (\mu^j_t, \hat{\mu}_t^j)) \\ \\ 
    \end{array}
\end{equation}

for every $s^j_t, \hat{s}^j_t \in \mathcal{S}; v^j, \hat{v}^j_t \in \mathcal{F}, \mu_t, \hat{\mu}_t \in \mathcal{P}(\mathcal{S})$.

\end{assumption}

 Assumption~\ref{assumption:strongconvexity} and Assumption~\ref{assumption:gradientbound} present a constraint on the change in the value function during each of the updates. These assumptions are generally considered in literature establishing the existence of Lyapunov functions that guarantee the presence of local and global optimal points (maxima or minima) as the case may be \citep{tanaka2003multiple}.

We also need to assume the following for all the constants,  

\begin{assumption}\label{assumption:kdefinition}
The following relation holds 
\begin{equation}
\begin{array}{l}
    k \triangleq \max \{ \beta \alpha + \frac{K_F}{\rho} K_1, \beta \frac{L_2}{1-\beta K_2} + \\ \\  \quad \quad \quad \quad \quad \quad \quad \Big(\frac{K_F}{\rho} + 1 \Big) K_1 +  K_2 + \frac{K_F}{\rho} \} < 1
    \end{array}
\end{equation}
\end{assumption}

For the rest of the proof we will also apply Assumption~\ref{assumption:limitofbelief}. Consider a mean field $\boldsymbol{\mu}$, the optimal value function for an agent $j$ is given by the Eq.~\ref{eq:optimalvalue}. Using the same procedure as the result in Lemma~\ref{lemma:fixedpointoptimality}, this optimal value function can be shown to be a unique fixed point of a Bellman operator $T_{\boldsymbol{\mu}}$ that is a contraction on the $w_{\max}$-norm with a constant of contraction $\beta \alpha$ (from Assumption~\ref{assumption:productassumption}).  Now we can write the following equation, 

\begin{equation}
\begin{array}{l}
    J^{j, \boldsymbol{\mu}}_{*}(s^j_t) =  \max_{a^j \in \mathcal{A}^j} \Big[ r^j(s^j_t, a^j_t, \mu_t) +  \\ \\ \quad \quad \quad \quad \quad \quad \quad \quad \quad  \beta \bigintss_{\mathcal{S}} J^{j, \boldsymbol{\mu}}_{*}(s^j_{t+1})  p(s^j_{t+1}| s^j_t, a^j_t, \mu_t) \Big] \\ \\ \triangleq T_{\boldsymbol{\mu}} J^{j, \boldsymbol{\mu}}_{*}(s^j_t)
    \end{array}
\end{equation}

Let us assume that this maximization is caused by a policy $\pi^j(a^j|s^j, \mu^j)$, which will be the optimal policy. In other words, 

\begin{equation}
    \begin{array}{l}
  \max_{a^j \in \mathcal{A}^j} \Big[ r^j(s^j_t, a^j_t, \mu_t) +  \\ \\ \quad \quad \quad \quad \quad \quad  \beta \bigintss_{\mathcal{S}} J^{j, \boldsymbol{\mu}}_{*}(s^j_{t+1})  p(s^j_{t+1}| s^j_t, a^j_t, \mu_t) \Big] = \\ \\ r^j(s^j_t, \pi^j(a^j_t|s^j_t, \mu^j_t), \mu_t) \\ \\ \quad \quad \quad \quad \quad  + \beta \bigintss_{\mathcal{S}} J^{j, \boldsymbol{\mu}}_{*}(s^j_{t+1})p(s^j_{t+1} | s^j_t, a^j_t, \mu_t) 
    \end{array}
\end{equation}

The objective is to obtain this optimal policy. Towards the same we use the $Q$-functions, where the optimal $Q$-function can be defined as 

\begin{equation}\label{eq:Qfunctiondefined}
\begin{array}{l}
    Q^{j,*}_{\boldsymbol{\mu}} (s^j_t, a^j_t) \\ \\ = r^j(s^j_t,a^j_t,\mu_t) + \beta \bigintss_{\mathcal{S}} J^{j, \boldsymbol{\mu}}_{*} p(s^j_{t+1}|s^j_t,a^j_t,\mu_t)
    \end{array}
\end{equation}

The optimal value function given by $J^{j, \boldsymbol{\mu}}_{*}$ is the maximum of the $Q$-function given by $Q^{j,*}_{\boldsymbol{\mu}, \max}$. Hence, we can rewrite Eq.~\ref{eq:Qfunctiondefined} as follows 

\begin{equation}
    \begin{array}{l}
    Q^{j,*}_{\boldsymbol{\mu}} (s^j_t, a^j_t)  = r^j(s^j_t,a^j_t,\mu_t) \\ \\ \quad \quad \quad \quad \quad  + \beta \bigintss_{\mathcal{S}} Q^{j,*}_{\boldsymbol{\mu}, \max_{a^j}}(s^j_{t+1}, a^j) p(s^j_{t+1}|s^j_t,a^j_t,\mu_t) \\ \\ 
    \triangleq H Q^{j,*}_{\boldsymbol{\mu}} (s^j_t,a^j_t)
    \end{array}
\end{equation}

Here, the operator $H$ is the optimality operator for the $Q$-functions. Our objective is to prove that this operator is a contraction and has a unique fixed point given by $Q_*$. 

First, let us define a set of all bounded $Q$-functions for an agent $j \in \{ 1, \ldots, N\}$: 

\begin{equation}
    \begin{array}{l}
        \mathcal{C} \triangleq 
        \Big \{ Q^j:\mathcal{S} \times
        \mathcal{A}^j \xrightarrow{} [0, \infty); ||Q^j_{\max}||_w \leq \frac{M}{1 - \beta \alpha}  
        \\ \\ \textrm{ and } \\ \\ 
        ||Q^j_{\max}||_{Lip} \leq \frac{L_2}{1 - \beta K_2} 
        \Big \}
    \end{array}
\end{equation}

From Assumption~\ref{assumption:limitofbelief}, Assumption~\ref{assumption:strongconvexity} and Assumption~\ref{assumption:gradientbound}, we know that there exists a unique maximiser $\pi^j(s^j_t, Q^j_t, \mu^j_t)$ for the function $F$ represented as,

\begin{equation}
    \begin{array}{l}
      r^j(s^j_t,a^j_t,\mu_t) \\ \\ + \beta \bigintss_{\mathcal{S}} Q^j_{\boldsymbol{\mu}, \max_{a^j}}(s^j_{t+1}, a^j) p(s^j_{t+1}|s^j_t,a^j_t,\mu_t) \\ \\  = F(s^j_t, Q^j_{\boldsymbol{\mu}, \max}, \mu_t, a^j_t).
    \end{array}
\end{equation}. 

This maximser makes the gradient of $F$, 0. 

\begin{equation}
    \nabla F(s^j_t, Q^j_{\boldsymbol{\mu}, \max}, \mu_t, \pi^j(s^j_t, Q^j_t, \mu^j_t))= 0 
\end{equation}

This shows that the maximiser for the operator $H_2$ in Eq.~\ref{eq:DMFGoperatordefinition} is unique, under the considered assumptions. Now, we are ready to prove the required theorem. 

\begin{theorem2}
The decentralized mean field operator $H$ is well-defined, i.e. this operator maps $\mathcal{C} \times \mathcal{P}(\mathcal{S})$ to itself. 
\end{theorem2}

\begin{proof}

We first apply Assumption~\ref{assumption:limitofbelief}. Now, consider the decentralized mean field operator as defined in Eq.~\ref{eq:DMFGoperatordefinition}. To prove the given theorem, we know that $H_2(Q^j, \boldsymbol{\mu}) \in \mathcal{P}(\mathcal{S})$. We need to prove that $H_1(Q^j, \boldsymbol{\mu}) \in \mathcal{C}$. Let us consider an element $(Q^j, \boldsymbol{\mu}) \in \mathcal{C} \times \mathcal{P}(\mathcal{S})$. Then we have, 

\begin{equation}
    \begin{array}{l}
    \sup_{s^j_t,a^j_t} \frac{\Big |H_1(Q^j, \boldsymbol{\mu}) (s^j_t,a^j_t) \Big|}{w(s^j_t,a^j_t)} \\ \\ 
    = \sup_{s^j_t,a^j_t} \frac{\Big|r^j(s^j_t,a^j_t,\mu) + \beta \bigintss_{\mathcal{S}} Q^j_{\max}(s^j_{t+1}, a^j, \mu^j_{t+1})p(s^j_{t+1}|s^j_t,a^j_t,\mu_t) \Big |}{w(s^j_t,a^j_t)}
    \\ \\
    \leq \sup_{s^j_t,a^j_t} \frac{ \Big|r^j(s^j_t,a^j_t,\mu) \Big|}{w(s^j_t,a^j_t)}  \\ \\ \quad \quad  + \beta \sup_{s^j_t,a^j_t} \frac{\Big| \bigintss_{\mathcal{S}} Q^j_{\max}(s^j_{t+1}, a^j_t,\mu_{t+1} )p(s^j_{t+1}|s^j_t,a^j_t,\mu_t) \Big|}{w(s^j_t,a^j_t)}
    \\ \\ 
    \leq M  \\ \\ + \beta ||Q^j_{\max}||_{w_{\max}} \sup_{s^j_t,a^j_t} \frac{\Big| \bigintss_{\mathcal{S}} w_{\max}(s^j_{t+1})p(s^j_{t+1}|s^j_t,a^j_t,\mu_t) \Big|}{w(s^j_t,a^j_t)}
    \\ \\ 
    \leq M + \beta \alpha ||Q^j_{\max}||_w
    \\ \\ 
    \leq M + \beta \alpha \frac{M}{1 - \beta \alpha} 
    \\ \\ 
    = \frac{M}{1 - \beta \alpha}
    \end{array}
\end{equation}

We used the Assumption~\ref{assumption:globalrewardbound} and the fact that $||Q^j||_{w_{\max}} \leq ||Q^j||_w$.

Also, we have, 

\begin{equation}
    \begin{array}{l}
        | H_1(Q^j, \boldsymbol{\mu})_{\max}(s^j_t) - H_1 (Q^j, \boldsymbol{\mu})_{\max}(\hat{s}^j_t)|  
        \\ \\
        = \max_{a^j \in \mathcal{A}^j} [r^j(s^j_t,a^j_t,\mu_t) \\ \\  \quad \quad + \beta \bigintss_{\mathcal{S}} Q^j_{\max}(s^j_{t+1}, a^j, \mu^j_{t+1})
        p(s^j_{t+1}|s^j_t,a^j_t,\mu_t)] \\ \\  \quad \quad - \max_{a^j \in \mathcal{A}^j}[r^j(\hat{s}^j_t,a^j_t,\mu_t)   \\ \\ \quad \quad + \beta \bigintss_{\mathcal{S}} Q^j_{\max}(s^j_{t+1}, a^j, \mu^j_{t+1})
        p(s^j_{t+1}|\hat{s}^j_t,a^j_t,\mu_t)]
        \\ \\
        \leq \sup_{a^j \in \mathcal{A}^j} \Big| r^j(s^j_t,a^j_t,\mu_t)  - r^j(\hat{s}^j_t,a^j_t,\mu_t) \Big|  \\ \\ + \beta \sup_{a^j \in A^j} \Big|\bigintss_{\mathcal{S}} Q^j_{\max}(s^j_{t+1}, a^j, \mu^j_{t+1})
        p(s^j_{t+1}|s^j_t,a^j_t,\mu_t) \\ \\  \quad \quad \quad - \bigintss_{\mathcal{S}} Q^j_{\max}(s^j_{t+1}, a^j, \mu^j_{t+1})
        p(s^j_{t+1}|\hat{s}^j_t,a^j_t,\mu_t) \Big|
        \\  \\
        \leq L_2 d_X(s, \hat{s}) + \beta K_2 ||Q^j_{\max}||_{Lip} d_X(s, \hat{s})
        \\  \\ 
        \leq \frac{L_2}{1 - \beta K_2} d_X(s, \hat{s})
    \end{array}
\end{equation}

Here we are using the Assumption~\ref{assumption:rewardboundondistance} and Assumption~\ref{assumption:transitionboundondistance}. This proves that $H_1(Q^j, \mu) \in \mathcal{C}$ and concludes our proof. 
\end{proof}

\newpage

\section{Proof of Theorem \ref{theorem:contractionofH}}\label{sec:contractionofH}

\begin{theorem2}

Let $\mathcal{B}$ represent the space of bounded functions in $\mathcal{S}$. Then the mapping $H: \mathcal{C} \times \mathcal{P}(\mathcal{S}) \xrightarrow{} \mathcal{C} \times \mathcal{P}(\mathcal{S})$ is a contraction in the norm of $\mathcal{B}(\mathcal{S})$. 

\end{theorem2}

In this section, we will continue to use the set of assumptions and definitions we introduced in Appendix~\ref{sec:proofofoperatorexistence}. 

\begin{proof}

Fix any $(Q^j, \mu)$ and $(\hat{Q}^j, \hat{\mu})$ in $\mathcal{C} \times \mathcal{P}(\mathcal{S})$, let us consider, 

\begin{equation}\label{eq:H1distancebound}
    \begin{array}{l}
         ||H_1(Q^j, \mu) - H_1 (\hat{Q}^j, \hat{\mu}) ||_w
         \\ \\ 
         = \sup_{s^j_t,a^j_t} \\ \\  \Big [ \frac{r^j(s^j_t,a^j_t,\mu_t) + \beta \bigintss_{\mathcal{S}} Q^j_{\max}(s^j_{t+1}, a^j, \mu^j_{t+1}) p(s^j_{t+1}|s^j_t,a^j_t,\mu_t)} {w(s^j_t,a^j_t)} \\ \\  \quad \quad \quad 
         \frac{- r^j(s^j_t,a^j_t,\hat{\mu}_t) - \beta \bigintss_{\mathcal{S}} \hat{Q}^j_{\max}(s^j_{t+1}, a^j, \mu^j_{t+1})p(s^j_{t+1}|s^j_t,a^j_t,\hat{\mu}_t)}{w(s^j_t,a^j_t)} \Big] 
         \\ \\ 
         \leq \sup_{s^j_t,a^j_t} \frac{|r^j(s^j_t,a^j_t,\mu_t) - r^j(s^j_t,a^j_t, \hat{\mu}_t) |}{w(s^j_t,a^j_t)} \quad \quad \quad \\ \\ +  \beta  \sup_{s^j_t,a^j_t} \frac{\Big| \bigintss_{\mathcal{S}} Q^j_{\max}(s^j_{t+1}, a^j, \mu^j_{t+1}) p(s^j_{t+1}|s^j_t,a^j_t,\mu_t)}{w(s^j_t,a^j_t)} \\ \\  \quad \quad \quad \frac{- \bigintss_{\mathcal{S}} \hat{Q}^j_{\max}(s^j_{t+1}, a^j, \mu^j_{t+1})p(s^j_{t+1}|s^j_t,a^j_t,\hat{\mu}_t) \Big|}{w(s^j_t,a^j_t)} 
         \\  \\

         \leq L_1 W_1(\mu, \hat{\mu}) \\ \\ \quad \quad + \beta  \sup_{s^j_t,a^j_t} \frac{\Big| \bigintss_{\mathcal{S}} Q^j_{\max}(s^j_{t+1}, a^j, \mu^j_{t+1}) p(s^j_{t+1}|s^j_t,a^j_t,\mu_t)}{w(s^j_t,a^j_t)} 
         \\ \\  \quad \quad \quad \frac{- \bigintss_{\mathcal{S}} \hat{Q}^j_{\max}(s^j_{t+1}, a^j, \mu^j_{t+1})p(s^j_{t+1}|s^j_t,a^j_t,\mu_t) \Big|}{w(s^j_t,a^j_t)} 
         \\ \\ 
          \quad \quad + \beta  \sup_{s^j_t,a^j_t} \frac{\Big| \bigintss_{\mathcal{S}} \hat{Q}^j_{\max}(s^j_{t+1}, a^j, \mu^j_{t+1}) p(s^j_{t+1}|s^j_t,a^j_t,\mu_t)\Big |}{w(s^j_t,a^j_t)} 
         \\ \\  \quad \quad \quad \frac{- \bigintss_{\mathcal{S}} \hat{Q}^j_{\max}(s^j_{t+1}, a^j, \mu^j_{t+1})p(s^j_{t+1}|s^j_t,a^j_t,\hat{\mu}_t) \Big|}{w(s^j_t,a^j_t)} 
         \\ \\

         \leq L_1 W_1(\mu, \hat{\mu}) +  \beta || Q^j_{\max} - \hat{Q}^j_{\max}||_{w_{\max}} \\ \\ \quad \quad \sup_{s^j_t,a^j_t} \frac{\Big| \bigintss_{\mathcal{S}} w_{\max}(s^j_{t+1}) p(s^j_{t+1}|s^j_t,a^j_t,\mu_t)\Big|}{w(s^j_t,a^j_t)} 
         \\ \\ 
         + \beta || \hat{Q}^j_{\max}||_{Lip}  \sup_{s^j_t,a^j_t} \frac{W_1(p(\cdot|s^j_t, a^j_t, \mu_t), p(\cdot| s^j_t, a^j_t, \hat{\mu}_t))}{w(s^j_t,a^j_t)} 
         
         \\ \\ 
        \leq L_1 W_1 (\mu, \hat{\mu}) + \beta \alpha ||Q^j - \hat{Q}^j||_w +  \beta \frac{L_2}{1 - \beta K_2} K_1 W_1 (\mu, \hat{\mu})

    \end{array}
\end{equation}

First we apply Assumption~\ref{assumption:rewardboundondistance}. In the last step we apply Assumption~\ref{assumption:transitionboundondistance} and Assumption~\ref{assumption:globalrewardbound}. 

Now consider the distance between $H_2(Q^j,\boldsymbol{\mu})$ and $H_2(\hat{Q}^j, \hat{\boldsymbol{\mu}})$. First, we will consider the difference between the unique maximiser $\pi^j(s^j,Q^j,\mu^j)$ of $H_1 (Q^j, \boldsymbol{\mu}) (s^j, a^j)$ and the unique maximiser $\pi^j(s^j, \hat{Q}^j, \hat{\mu}^j)$ of $H_1(\hat{Q}^j, \hat{\boldsymbol{\mu}})(s^j,a^j)$ with respect to the action (using the Assumption~\ref{assumption:limitofbelief}). Let us consider the function,

\begin{equation}
\begin{array}{l}
    F: \mathcal{S} \times \mathcal{C} \times \mathcal{P}(\mathcal{S}) \times \mathcal{A}^j \ni (s^j_t,v^j_t,\mu_t, a^j_t) \\ \\ \xrightarrow{} r^j(s^j_t,a^j_t,\mu_t) + \beta \bigintss_{\mathcal{S}} v(s^j_{t+1}) p(s^j_{t+1} | s^j_t,a^j_t,\mu_t) \in \Re
    \end{array}
\end{equation}

This is $\rho$-strongly convex by Assumption \ref{assumption:strongconvexity}, hence it satisfies the following equation \citep{hajek2019statistical}, 

\begin{equation}\label{eq:equationstrongconvexity}
    [\nabla F (s^j_t,v^j_t, \mu_t, a^j_t + b^j_t) - \nabla F(s^j_t,v^j_t,\mu_t, a^j_t)]^T \cdot r^j_t \geq \rho ||b^j_t||^2
\end{equation}

For any $a^j_t, b^j_t \in \mathcal{A}^j$ and for any $s^j_t \in \mathcal{S}$, we can use the notation $a^j_t = \pi^j(s^j_t,Q^j_t,\mu^j_t)$ and $b^j = \pi^j(s^j_{t+1},\hat{Q}^j_t,\hat{\mu}^j_t) - \pi^j(s^j_t,Q^j_t,\mu^j_t)$. 

Now, $a^j_t = \pi^j(s^j_t,Q^j_t,\mu^j_t)$ is the unique maximiser of the strongly convex function $F(s^j_t, Q_{max,t}^j, \mu_t, \cdot)$, we have 

\begin{equation}
    \nabla F(s^j_t, Q_{max,t}^j, \mu_t, \pi^j(s^j_t,Q^j_t,\mu^j_t)) = 0 
\end{equation}

Also, the term $a^j_t + b^j_t = \pi^j(s^j_{t+1},\hat{Q}^j_t,\hat{\mu}^j_t)$ is the unique maximiser of $F(s^j_{t+1}, 
   \hat{Q}_{max,t}^j, \hat{\mu}_t, \cdot)$. Now, using Assumption~\ref{assumption:strongconvexity} and Eq.~\ref{eq:equationstrongconvexity} we have, 

\begin{equation}\label{eq:boundonpho}
\begin{array}{l}
   - \nabla F(s^j_{t+1}, 
   \hat{Q}_{max,t}^j, \hat{\mu}_t, a^j_t)^T \cdot b^j_t \\ \\ \quad \quad \quad = - \nabla F(s^j_{t+1}, 
   \hat{Q}_{max,t}^j, \hat{\mu}_t, a^j_t)^T \cdot b^j_t  \\  \quad \quad \quad + \nabla F(s^j_{t+1}, \hat{Q}_{max,t}^j, \hat{\mu}_t, a^j_t + b^j_t)^T \cdot b^j_t \geq \rho ||b^j_t||^2
   \end{array}
\end{equation}

Similarly, using the Assumption~\ref{assumption:gradientbound} we also have, 

\begin{equation}
    \begin{array}{l}
         - \nabla F(s^j_{t+1}, 
   \hat{Q}_{max,t}^j, \hat{\mu}_t, a^j_t)^T \cdot b^j_t \\ \\ =  - \nabla F(s^j_{t+1}, 
   \hat{Q}_{max,t}^j, \hat{\mu}_t, a^j_t)^T \cdot b^j_t  \\ \quad \quad \quad  + \nabla F(s^j_t, 
   Q_{max,t}^j, \mu_t, a^j_t)^T \cdot b^j_t
        \\ \\
        \leq ||b^j_t|| || \nabla F(s^j_t, 
   Q_{max,t}^j, \mu_t, a^j_t) \\ \\ \quad \quad - \nabla F(s^j_{t+1}, 
   \hat{Q}_{max,t}^j, \hat{\mu}_t, a^j_t)||
        \\ \\ 
        \leq K_F ||b^j_t||(d_X(s^j_t, s^j_{t+1}) + ||Q_{max,t}^j - \hat{Q}_{max,t}^j||_{w_{max}}  \\ \\ \quad \quad \quad + W_1(\mu_t, \hat{\mu}_t) )
        \\ \\
        \leq K_F ||b^j_t|| (d_X(s^j_t, s^j_{t+1}) + ||Q_{t}^j - \hat{Q}_{t}^j||_{w}  \\ \\ \quad \quad \quad + W_1(\mu_t, \hat{\mu}_t) )
    \end{array}
\end{equation}

Therefore, from the above two equations, 

\begin{equation}\label{eq:policydifference}
\begin{array}{l}
    \pi^j(s^j_{t+1},\hat{Q}^j_t,\hat{\mu}^j_t) - \pi^j(s^j_t,Q^j_t,\mu^j_t) \\ \\ \leq \frac{K_F}{\rho} (d_X(s^j_{t+1},s^j_{t}) + ||Q_{t}^j - \hat{Q}_{t}^j||_{w} + W_1(\mu_t, \hat{\mu}_t) )
    \end{array}
\end{equation}

Now, consider the $W_1$ distance between $H_2(Q^j, \boldsymbol{\mu})$ and $H_2(\hat{Q}^j, \hat{\boldsymbol{\mu}})$.

\begin{equation}\label{eq:H2distancebound}
    \begin{array}{l}
    W_1(H_2(Q^j,\boldsymbol{\mu}), H_2(\hat{Q}^j, \boldsymbol{\hat{\mu}})) \\ \\
    = \sup_{||g||_{Lip} \leq 1}  \Big|\bigintss_{\mathcal{S} \times \mathcal{A}^j} \bigintss_{\mathcal{S}} g(s^j_{t+1}) \\ \\ \quad \quad \quad p\big(s^j_{t+1}|s^j_t, \pi^j(s^j_t,Q^j_t, \mu^j_t), \mu_t \big) \mu_t(s^j_t)
     \\ \\ \quad \quad 
    - \bigintss_{\mathcal{S} \times \mathcal{A}^j} \bigintss_{\mathcal{S}} g(s^j_{t+1}) \\ \\ \quad \quad \quad p\big(s^j_{t+1}|s^j_t, \pi^j(s^j_t, \hat{Q}^j_t, \hat{\mu}^j_t), \hat{\mu}_t\big) \hat{\mu}_t(s^j_t) \Big|
    \\ \\
    \leq 
       \sup_{||g||_{Lip} \leq 1}  \Big|\bigintss_{\mathcal{S} \times \mathcal{A}^j} \bigintss_{\mathcal{S}} g(s^j_{t+1}) \\ \\ \quad \quad \quad p\big(s^j_{t+1}|s^j_t, \pi^j(s^j_t,Q^j_t, \mu^j_t), \mu_t\big) \mu_t(s^j_t)\\ \\
     \quad \quad - \bigintss_{\mathcal{S} \times \mathcal{A}^j} \bigintss_{\mathcal{S}} g(s^j_{t+1}) \\ \\ \quad \quad \quad p\big(s^j_{t+1}|s^j_t, \pi^j(s^j_t, \hat{Q}^j_t, \hat{\mu}^j_t), \hat{\mu}_t\big) \mu_t(s^j_t) \Big|
    \\ \\
      + \sup_{||g||_{Lip} \leq 1}  \Big|\bigintss_{\mathcal{S} \times \mathcal{A}^j} \bigintss_{\mathcal{S}} g(s^j_{t+1}) \\ \\ \quad \quad \quad p\big(s^j_{t+1}|s^j_t, \pi^j(s^j_t,\hat{Q}^j_t, \hat{\mu}^j_t), \hat{\mu}_t\big) \mu_t(s^j_t)\\ \\ \quad \quad  
    - \bigintss_{\mathcal{S} \times \mathcal{A}^j} \bigintss_{\mathcal{S}} g(s^j_{t+1}) \\ \\ \quad \quad \quad  p\big(s^j_{t+1}|s^j_t, \pi^j(s^j_t, \hat{Q}^j_t, \hat{\mu}^j_t), \hat{\mu}_t\big) \hat{\mu}_t(s^j_t) \Big|
    \\ \\

  \overset{(1)}{\leq} 
       \bigintss_{\mathcal{S} \times 
       \mathcal{A}^j}\sup_{||g||_{Lip} \leq 1} \Big| \bigintss_{\mathcal{S}} g(s^j_{t+1}) \\ \\ \quad \quad \quad  p\big(s^j_{t+1}|s^j_t, \pi^j(s^j_t,Q^j_t, \mu^j_t), \mu_t\big) \mu_t(s^j_t)\\ \\
    - \bigintss_{\mathcal{S} \times \mathcal{A}^j} \bigintss_{\mathcal{S}} g(s^j_{t+1})  p\big(s^j_{t+1}|s^j_t, \pi^j(s^j_t, \hat{Q}^j_t, \hat{\mu}^j_t), \hat{\mu}_t\big)\Big| \\ \\ \quad \quad \quad  \mu_t(s^j_t) 
       + (K_2 + \frac{K_F}{\rho}) W_1(\mu_t, \hat{\mu_t})
      \\ \\ 
      \leq 
    \bigintss_{\mathcal{S} \times 
       \mathcal{A}^j}  W_1 \Big( p\big(\cdot|s^j_t, \pi^j(s^j_t,Q^j_t, \mu^j_t), \mu_t\big), \\ \\ \quad  \quad \quad p\big(\cdot| s^j_t, \pi^j(s^j_t, \hat{Q}^j_t, \hat{\mu}^j_t), \hat{\mu}_t \big) \Big)\mu_t(s^j_t)  
       \\ \\ \quad \quad \quad 
       + \Big( K_2 + \frac{K_F}{\rho} \Big) W_1(\mu_t, \hat{\mu_t})

    \\ \\ 
   \overset{(2)}{\leq}  
    \frac{K_F}{\rho} K_1 \Big ( ||Q^j_t - \hat{Q}^j_t||_w + W_1(\mu_t, \hat{\mu}_t) \Big) \\ \\ \quad \quad
    + K_1 W_1(\mu_t, \hat{\mu}_t)
    + \Big( K_2 + \frac{K_F}{\rho} \Big) W_1(\mu_t, \hat{\mu_t})
   \end{array} 
\end{equation}

In the above derivation, (1) and (2) are obtained from Assumption~\ref{assumption:transitionboundondistance} and Eq.~\ref{eq:policydifference} (also refer \citet{anahtarci2019value}). Combining Eq.~\ref{eq:H1distancebound} and Eq.~\ref{eq:H2distancebound}, and applying Assumption~\ref{assumption:kdefinition}, the required result is proved. The constant of contraction is $k$ defined in Assumption~\ref{assumption:kdefinition}.

\end{proof}

\section{Proof of Theorem \ref{theorem:convergencetoequilibrium}}\label{sec:convergencetoequilibrium}

\begin{theorem2}
Let the $Q$-updates in Algorithm~\ref{alg:Qldmfg} converge to $(Q^j_*, \mu_*^j)$ for an agent $j \in \{1, \ldots, N\}$. Then, we can construct a policy $\pi^j_*$ from $Q^j_*$ which is expressed as
\begin{equation*}
    \pi^j_*(s^j) = \arg \max_{a^j \in \mathcal{A}^j} Q^j_*(s^j, a^j, \mu^j_*).
\end{equation*}
Then the pair $(\pi^j_*, \mu_*^j)$ is a DMFE.
\end{theorem2}

\begin{proof}

From the Theorem~\ref{theorem:contractionofH} we know that $(Q^j_*, \mu^j_*)$ is a fixed point of $H$. Using the Assumption~\ref{assumption:limitofbelief}, we consider a $\mu_*$, where $\mu_*(s^j) = \mu^j_*(s^j)$, for all states $s^j \in \mathcal{N}^j$. Hence, we can construct the following equations.  
\begin{equation}
\begin{array}{l}
       Q^{j}_{*} (s^j_t, a^j_t, \mu^j_*)  = r^j(s^j_t,a^j_t,\mu_{*,t}) \\ \\ \quad \quad  \quad  + \beta \bigintss_{\mathcal{S}} Q^{j}_{*, \max}(s^j_{t+1}, a^j, \mu^j_{*, t+1}) p(s^j_{t+1}|s^j_t,a^j_t,\mu_{*,t}) 
    \end{array}
\end{equation}

\begin{equation}\label{eq:constructionofmu}
    \mu^j_{*, t+1}(\cdot) = \int_{\mathcal{S} \times \mathcal{A}^j} p(\cdot| s^j_t, a^j_t, \mu_{*,t}) \pi^j_*(a^j_t|s^j_t, \mu^j_{*,t}) \mu_{*,t}(s^j_t)
\end{equation}

From Eq.~\ref{eq:constructionofmu} and Assumption~\ref{assumption:limitofbelief}, we can construct the mean field estimate of agent$j$, $\boldsymbol{\mu}^j_*$.  Now, the above equations imply that $\pi^j_* \in \Phi(\mu^j_*)$ and $\mu^j_* = \Psi(\pi^j_*)$. Hence, $(\pi^j_*, \mu^j_*)$ is a decentralized mean field equilibrium.

\end{proof}

\section{Differences Between Mean Field Settings}\label{sec:differencesofmeanfieldsettings}
Table~\ref{table:differences} captures the differences between the three mean field frameworks, MFRL, MFG, and DMFG. Particularly, we show that the DMFG is different from other frameworks introduced previously, and it is more practical than prior methods as it relaxes some strong assumptions in them.

There are several disadvantages in using a centralized solution concept like Nash equilibrium (or, by extension, the mean field equilibrium) in multiagent systems. These are listed as follows: 1) Centralized nature of the equilibrium, though practical systems have a decentralized information structure. 2) The equilibrium computation is almost intractable for more than two agents \citep{neumann1928theorie}. 
3) Needs strong assumptions to give theoretical guarantees of convergence of learning systems in general sum games \cite{hu2003nash}, even in the stationary case. 
4) The equilibrium requires agreement between agents even in the competitive case. 5) It is hard to verify this point in practice. 
Our decentralized solution concept (DMFE) is more practical than the Nash equilibrium and the mean field equilibrium, since it does not suffer these limitations noted for the centralized methods. 

\begin{table}
\centering
\renewcommand{\arraystretch}{2}
\begin{tabular}{||p{0.20 \linewidth}  | p {0.18\linewidth}  | p {0.18\linewidth}  | p {0.21
\linewidth} ||} 
 \hline\hline
 Method & MFRL \citet{pmlr-v80-yang18d} & MFG \citet{lasry2007mean} & DMFG (ours) \\ [0.5ex] 
 \hline\hline
 Game formulation  & Stochastic Game & Mean Field Game & Decentralized Mean Field Game \\ 
 \hline
  Reward function & Same for all agents & Same for all agents & Can be different \\ 
 \hline
 Action space  & Discrete only & Can be continuous & Can be continuous \\ 
 \hline
  Action space  & Same for all agents & Same for all agents & Can be different \\ 
 \hline
 Solution Concept & Nash Equilibrium (centralized) & Mean Field Equilibrium (centralized) & Decentralized Mean Field Equilibrium \\
  \hline
 Complexity of obtaining the mean field  & Exponential in agents & Exponential in agents  & Constant in agents \\
 \hline
 Theoretical Guarantees & Needs very strong assumptions & Needs weak assumptions & Needs weak assumptions \\
 \hline
 Number of agents & Should be large but finite & Can be infinite in the limit & Can be infinite in the limit \\
 \hline
 Mean Field Information & Global & Global & Local \\
  \hline
 Agents & Identical and homogeneous & Identical and homogeneous & Non-identical\\
 \hline
 Policy & Stationary & Can be non-stationary & Can be non-stationary\\
 \hline
 State Observable & Global & Local & Local \\ [1ex] 
 \hline
 \hline
\end{tabular}
\caption{Table to capture the differences between Mean Field Reinforcement Learning (MFRL), Mean Field Games (MFG), and Decentralized Mean Field Games (DMFG).}
\label{table:differences}
\end{table}

\section{Mean Field Modelling in DMFG-QL}\label{sec:meanfieldmodelling}

In this section, we plot the mean square error between the estimated mean field action ($\mu^{j,a}_t$) (from the neural network representing Eq.~\ref{eq:Dupdatemeana}) and the true observed local mean  field action ($\hat{\mu}^{j,a}_t$) to show that the true local mean field action can be accurately modelled by the DMFG-QL algorithm. We will use the Battle game for this illustration. 

\begin{figure}[h]
\centering
\includegraphics[width=0.45\textwidth]{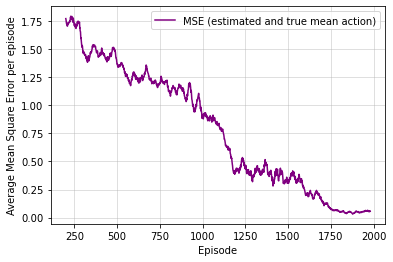}
\caption{This figure shows the average (per agent)  mean square error (MSE) between the true local mean field action and the estimated mean field action from the neural network in each episode (sum of MSE in the 500 steps) of the Battle game training experiment. The results show that the MSE steadily reduces, and hence DMFG-QL is able to accurately model the true mean field. The result shows the average of 10 runs (negligible standard deviation). }
\label{fig:mse}
\end{figure}

Fig.~\ref{fig:mse} shows the mean square errors (MSE) between the true mean field action and the estimated mean field action, averaged per agent, for each episode of the Battle game. The MSE converges to a small value close to 0 during training. This shows that the DMFG-QL algorithm can accurately model the mean field, which contributes to its superior performance in many of our experimental domains. As stated in Section~\ref{sec:algorithms}, DMFG-QL algorithm formulates best responses to the current mean field action. This contrasts with approaches such as MFQ that formulate best responses to the previous mean field, which leads to a loss in performance. We start the plot in Fig~\ref{fig:mse} from episode 200 (instead of 0), since the first few episodes have a very high MSE, that simply skews the axis of the resulting graph.

\section{Algorithm Pseudo-Code}\label{sec:algorithmimplementation}

Algorithm~\ref{alg:POGMFGQL} provides the pseudo-code of the DMFG-QL algorithm. Here, the neural networks are used as the function approximator for the $Q$-function. 
In addition, we include the use of a second target network and a replay buffer for training, as done in the popular Deep $Q$-learning (DQN) algorithm \citep{mnih2015human}. Also, similar to \citet{pmlr-v80-yang18d}, we specify that the policies satisfy the GLIE assumption and hence we use the max operator over the $Q$-value of the next state (i.e. choose the action that maximizes the Q value) to calculate the temporal difference (T.D.) error instead of maintaining a distinct value function as in Eq.~\ref{eq:Dvalueupdate}. 

Algorithm~\ref{alg:POGMFGAC} provides the pseudo-code of the DMFG-AC algorithm. This algorithm uses the policy as the actor, and the value function as the critic, as is common in practice \citep{konda1999actor}. We choose to use a stochastic actor that does not depend on the mean field. This is done to provide an algorithm that can work independent of the mean field during execution. Since the critic (not used during execution) can be dependent on the mean field, we additionally parameterize the value function with the mean field. The modified updates can be seen in Algorithm~\ref{alg:POGMFGAC}. For both Algorithm~\ref{alg:POGMFGQL} and Algorithm~\ref{alg:POGMFGAC} we initialize the estimated mean field distribution to a uniform distribution (arbitrarily).

\begin{algorithm}[ht!]
\caption{Q-learning for Decentralized Mean Field Games (DMFG-QL)}
\label{alg:POGMFGQL}
\begin{algorithmic}[1] 

\STATE Initialize the $Q$-functions (parameterized by weights) $ Q_{\phi^j}, Q_{\phi^j_{\_}}$, for all agents $j \in \{ 1, \ldots, N \}$
\STATE Initialize the mean field estimate (parameterized by weights) $\mu^{j,a}_{\eta}$ for each agent $j \in \{1, \ldots, N \}$
\STATE Initialize the estimated mean field $\mu^{j,a}_{0, \eta}$ for each $j$ to a uniform distribution
\STATE Initialize the total steps (T) and total episodes (E)
\STATE Obtain the current state $s^j_t$
\WHILE {Episode $<$ E}
\WHILE{Step $<$ T}

\STATE For each agent $j$, obtain action $a^j_t$ from the policy induced by $Q_{\phi^j}$ with the current estimated mean action $\mu^{j,a}_{t, \eta}$ and the exploration rate $\hat{\beta}$ according to Eq.~\ref{eq:Dupdatepolicy} 
\STATE Execute the joint action $\textbf{a}_t = [a^1_t,\ldots, a_t^N]$. Observe the rewards $\textbf{r}_t = [r^1_t,\ldots, r^N_t]$ and the next state $\textbf{s}_{t+1} = [s_{t+1}^1, \ldots, s_{t+1}^N]$
\STATE For each agent $j$, obtain the current observed local mean action ($\hat{\mu}^{j,a}_t$)
\STATE Update the parameter $\eta$ of the mean field network using a mean square error between the observed mean action $\hat{\mu}^{j,a}_t$ and the current estimated mean action $\mu^{j,a}_{t, \eta}$ for all $j$
\STATE For each $j$, obtain the mean field estimates $\mu^{j,a}_{t+1, \eta}$ for the next state $s_{t+1}^j$ using the mean field network according to Eq.~\ref{eq:Dupdatemeana} 
\STATE For each $j$, store $\langle s^j_t,a^j_t,r^j_t,s_{t+1}^j, \mu^{j,a}_{t, \eta}, \mu^{j,a}_{t+1, \eta}  \rangle$ 
in replay buffer $B$
\STATE Set the next mean field estimate as the current mean field estimate $\mu^{j,a}_{t, \eta}  = \mu^{j,a}_{t+1, \eta} $ and the next state as the current state $s_{t}^j = s_{t+1}^j$ for all agents $j \in \{ 1, \ldots, N\}$
\ENDWHILE
\WHILE{$j$ = $1$ to $N$}
\STATE Sample a minibatch of K experiences  $\langle s^j_t,a^j_t,r^j_t,s_{t+1}^j, \mu^{j,a}_{t, \eta}, \mu^{j,a}_{t+1, \eta}  \rangle$ from $B$
\STATE Set $y^j = r^j_t + \gamma \max_{a^j_{t+1}}Q_{\phi^j_{\_}}(s^j_{t+1}, a^j_{t+1},\mu^{j,a}_{t+1, \eta})$ according to Eq.~\ref{eq:DMFQ}
\STATE Update the Q network by minimizing the loss $L(\phi^j) = \frac{1}{K} \sum (y^j - Q_{\phi^j}(s^j_{t}, a^j_{t},\mu^{j,a}_{t, \eta}))^2$
\ENDWHILE
\STATE Update the parameters of the target network for each agent $j$ with learning rate $\tau$; $$\phi^{j}_{\_} \leftarrow \tau \phi^j + (1 - \tau) \phi^j_{\_}$$
\ENDWHILE
\end{algorithmic}
\end{algorithm}

\begin{algorithm}[ht!]
\caption{Actor-Critic for Decentralized Mean Field Game (DMFG-AC)}
\label{alg:POGMFGAC}
\begin{algorithmic}[1] 
\STATE Initialize the $V$-function or critic (parameterized by weights) $ V_{\phi^j}$ and the policy or actor (parameterized by weights) $\pi_{\theta^j}$ for all agents $j \in {1, \ldots, N}$
\STATE Initialize the mean field estimate (parameterized by weights) $\mu^{j,a}_{\eta}$ for each agent $j \in \{1, \ldots, N \}$
\STATE Initialize the estimated mean field $\mu^{j,a}_{0, \eta}$ for each $j$ to a uniform distribution
\STATE Initialize the total episodes (E)
\STATE Obtain the current state $s^j_t$ 
\WHILE {Episode $<$ E}
\STATE For each agent $j$, obtain action $a^j_t$ from the policy $\pi_{\theta^j}$ at the current state $s^j_t$
\STATE Execute the joint action $\textbf{a}_t = [a^1_t,\ldots, a_t^N]$. Observe the rewards $\textbf{r}_t = [r^1_t,\ldots, r^N_t]$ and the next state $\textbf{s}_{t+1} = [s_{t+1}^1, \ldots, s_{t+1}^N]$
\STATE For each agent $j$, obtain the current observed local mean field action ($\hat{\mu}^{j,a}_t$) 
\STATE Update the parameter $\eta$ of the mean field network using a mean square error between the observed mean action $\hat{\mu}^{j,a}_t$ and the current estimated mean action $\mu^{j,a}_{t, \eta}$ for all $j$
\STATE For each $j$, obtain the mean field estimates $\mu^{j,a}_{t+1, \eta}$ for the next state $s_{t+1}^j$ using the mean field network according to Eq.~\ref{eq:Dupdatemeana} 
\STATE Set $y^j = r^j_t + \gamma  V_{\phi^j}(s^j_{t+1},\mu^{j,a}_{t+1, \eta})$ as the T.D. target according to Eq.~\ref{eq:DMFQ}
\STATE For each $j$, update the critic by minimizing the loss $L(\phi^j) = (y^j - V_{\phi^j}(s^j_{t}, \mu^{j,a}_{t, \eta}))^2$ 
\STATE For each $j$, update the actor using the log loss 
$\mathcal{J}(\theta^j) = \log \pi_{\theta^j}(a^j_t | s^j_t) L(\phi^j) $
\STATE Set the next mean field estimate as the current mean field estimate $\mu^{j,a}_{t, \eta}  = \mu^{j,a}_{t+1, \eta} $ and the next state as the current state $s_{t}^j = s_{t+1}^j$ for all agents $j \in \{ 1, \ldots, N\}$
\ENDWHILE
\end{algorithmic}
\end{algorithm}

\section{Experimental Details}\label{sec:experimentaldetails}

In this section, we describe each of our game domains in detail, including the reward functions. We also provide the implementation details of our algorithms, especially the hyperparameters. We also discuss the wall-clock times of our algorithmic implementations. Each episode for the Petting Zoo environments has a maximum of 500 steps.  In our implementation of MFQ and MFAC, the agents learn in a decentralized fashion with independent networks and use the previous mean field of the local neighbourhood instead of the global mean field.

The first 5 domains are obtained from the Petting Zoo environment \citep{terry2020pettingzoo}, and the game parameters are mostly left unchanged from those given in \citet{terry2020pettingzoo}. In these games, as agents can die during game play, we use the agent networks saved in the last available episode during training for the execution experiments. Complete details of these domains are given in each of the sub-sections below. 

\subsection{Battle Domain}

This is the first domain from the Petting Zoo environment \citep{terry2020pettingzoo} that is mixed cooperative-competitive. This domain was originally defined in the MAgents simulator \citep{zheng2018magent}. We have two teams of 25 agents, each learning to cooperate against the members of the same team and compete against the members of its opponent team. The agent gets rewarded for attacking and killing agents of the opposing team. At the same time, the agent is penalized for being attacked. In our implementation, each agent learns using its local observation and cannot get global information. Each agent has a view range of a circular radius of 6 units around it. Most rewards are left as defaults, as given in \citet{terry2020pettingzoo}. The agents get a penalty of -0.005 for each step (step reward) and a penalty of -0.1 for being killed. The agents get a reward of 5 for attacking an opponent and a reward of 10 for killing an opponent. There is an attack penalty of -0.1 (penalty for attacking anything). Agents start with a specific number of hitpoints (HP) that are damaged upon being attacked. Agents lose 2 HP when attacked and recover an HP of 0.1 for every step that they are not attacked. They start with 10 HP, and when this HP is completely lost, the agent dies. Each agent can view a circular range of 6 units around it (view range) and can attack in a circular range of 1.5 units around it (attack range). The action space of each agent is a discrete set of 21 values, which corresponds to taking moving and attacking actions in the environment. In this game, we assume that agents can get perfect information about actions of other agents at a distance of 6 units from themselves and no information beyond this point. In the execution phase, a game is considered to be won by a team that kills more opponent agents. 
If both teams kill the same number of agents, the team having the higher cumulative reward is determined as the winner. 

\subsection{Gather Domain}

In this environment, all agents try to capture limited food available in the environment and gain rewards. Each food needs to be attacked before it can be captured (takes 5 attacks to capture food). This is a fully competitive game, where all agents try to outcompete others in the environment and gain more food for themselves. Agents can also kill other agents in the environment by attacking them (just a single attack). Each agent gets a step penalty of -0.01, an attack penalty (penalty for attack) of -2 and a death penalty (punishment for dying) of -20. Also, each agent gets a reward of 20 for attacking food and a reward of 60 for capturing the food. There are a total of 30 agents learning in our environment. The action space of each agent is a set of 33 values. The agents have a view range of 7 and an attack range of 1. All other conditions and rewards are the same as the Battle game.  

\subsection{Combined Arms Domain}

The Combined Arms environment is a heterogeneous large-scale team game with two types of agents in each team. The first type is a ranged agent, which can move fast and attack agents situated further away but has fewer HP. The second type is the melee agent that can only attack close-by agents and move slowly; however, they have more HP. The reward function for the agents is the same as that given in the battle game. The action space of the melee agents is a discrete set of 9 values, while the action space of the ranged agents is a discrete set of 25 values. The actions correspond to moving in the environment and attacking neighbouring agents. The ranged agents have a maximum HP of 3, and the melee agents have a maximum HP of 10. The maximum HP is the limit after which the HP of any agent cannot increase in this environment. Like the battle domain, agents lose 2 HP for each time they are attacked and gain 0.1 HP for each step without being attacked. In our experiments, each team consists of 25 agents, with 15 ranged and 10 melee agents at the start. The view range is 6 and the attack range is 1 for melee agents. The view range is 6 and attack range is 2 for ranged agents. All other conditions and rewards are the same as the Battle game. To create the mean field we choose to use the action space of the type which has the larger number of actions (i.e. we use the action space of the ranged agents). 

\subsection{Tiger-Deer Domain}

In the tiger-deer environment, tigers are the learning agents, cooperating with each other to kill deer in the environment. At least two tigers need to attack a deer together to get high rewards. The tigers start with an HP of 10 and gain an HP of 8 whenever they kill deer. The tigers lose an HP of 0.021 at each step they do not eat a deer and die when they lose all the HP. In this game, the deer move randomly in the environment, and start with an HP of 5 and lose 1 HP upon attack. The tiger gets a reward of +1 for attacking a deer alongside another tiger. In this game, each tiger is assumed to get perfect information about the actions of other tigers at a distance of 4 units from itself and no information beyond this point. The view range of the tiger is 4 and the attack range is 1. The tigers also get a shaping reward of 0.2 for attacking a deer alone. All other rewards and conditions are the same as the Battle game. 

\subsection{Waterworld Domain}

The Waterworld domain was first introduced as a part of the Stanford Intelligent Systems Laboratory (SISL) environments by \citet{gupta2017cooperative}. We use the same domain adapted by the Petting Zoo environment \citep{terry2020pettingzoo}. This is a continuous action space environment, where a group of pursuer agents aim to capture food and avoid poison. These pursuers are the learning agents, while both food and poison move randomly in the environment. This is a cooperative environment where pursuer agents need to work together to capture food. At least two agents need to attack a food particle together to be able to capture it. The action is a two-element vector, where the first element corresponds to horizontal thrust and the second element corresponds to vertical thrust. The agents choose to use a thrust to make themselves move in a particular direction with a desired speed. The action values are in the range [-1, +1]. Our domain contains 25 pursuer agents, 25 food particles and 10 poison particles. The food is not destroyed but respawned upon capture. The agents get a reward of +10 upon capturing food and a penalty of -1 for encountering poison. If a single agent encounters food alone, it gets a shaping reward of +1. Each time an agent applies thrust, the agent obtains a penalty of thrust penalty $\times$ ||action||, where the value of thrust penalty is -0.5. Since this is a cooperative environment, each agent is allowed access to the global mean field action which is composed of actions of all agents in the environment, at each time step. This is done to simplify this complex domain. 

\subsection{Ride-Sharing Domain}\label{sec:ridesharingdetails}

In this domain, our problem formulation and environment are the same as that described in \citet{shah2020neural}. The demand distribution is obtained from the publicly available New York Yellow Taxi Dataset \citep{newyork2016}. The overall approach follows six steps. First, the user requests are obtained from the dataset. Second, sets of feasible trips are generated (by an oracle) using the approach in \citet{javier2017ondemand}. These feasible trips keep the action space from exploding. Third, the feasible actions are scored by the individual agents using their respective value functions. Fourth, a mapping of requests takes place by checking different constraints. Fifth, the final mapping is used to update the rewards for the individual agents. Sixth, the motion of vehicles is simulated until the next decision epoch.

We consider a maximum of 120 vehicles in our experiments. Since we are learning in a decentralized fashion, each vehicle maintains its own network and is computationally intensive. However, in practical applications, the training can be parallelized across agents and the computational demands will not be a limitation of our proposed setting. 

Our goal in this experiment is to implement the mean field algorithm on a real-world problem and compare the performance to other state-of-the-art approaches. The experimental setup is along the lines of \citet{javier2017ondemand, shah2020neural}, where we restrict ourselves to the street networks of Manhattan, where the vast majority of requests are contained. The New York Yellow Taxi dataset contains information about ride requests at different times of the day during a given week. Similar to \citet{shah2020neural}, we use the pickup and drop-off locations, pickup times and travel times from the dataset. The dataset has about 330,000 requests per day. In our experiments, the taxis are assigned a random location at the start and react to incoming ride requests. We train the networks using data pertaining to 8 weekdays and validate it on a single day as done in \citet{shah2020neural}. We assume all vehicles have similar capacities for simplicity, although our decentralized set-up is completely applicable to environments with very different types of vehicles as against prior work that relied on centralized training \citep{lowalekar2019zac, shah2020neural}. 

In the experiments, a single day of training corresponds to one episode. Each episode has 1440 steps, where each step (decision unit) is considered to span one minute. The initial location of vehicles at the beginning of an episode (single day of training) is random. The state, action space and reward function in our system is the same as that in \citet{shah2020neural}. The state of the system can be described as a tuple $(c_t, u_t)$, where $c_t$ is the location of each vehicle $j$ denoted by $c^j_t = (p^j, t, L^j)$ representing its trajectory. This captures the current location and a list of future locations that each vehicle visits. The user request $i$ at the time $t$ is represented as $u^i_t = (o^i, e^i)$ which corresponds to the origin and destination of the requests. The action for each agent is to provide a score for all the requests with the objective of assigning the user requests to itself. The user request should satisfy the constraints at the vehicle level (maximum allowed wait time and capacity limits) and the constraints at the system level (each request is only assigned to a single agent). Since these constraints are environmental, we continue to use a central agent that performs the constraints check before assigning requests to individual agents. Each agent gets a reward based on the proportion of requests in its feasible action set that it is able to satisfy, as done in \citet{shah2020neural}.

\section{Hyperparameters and Implementation Details}\label{sec:Hyperparameters}

The implementation of DMFG-QL, IL and MFQ almost use the same hyperparameters, with the learning rate set as $\alpha = 10^{-2}$. The temperature for the Boltzmann policy is set as 0.1.  Additionally, we also conduct epsilon greedy exploration which is decayed from 20\% to 1\% during the training process. The discount factor $\gamma$ is equal to 0.9. The replay buffer size is $2 \times 10^{5}$, and the agents use a mini-batch size of 64. The target network is updated at the end of every episode. 

Our implementations of DMFG-AC and MFAC almost use the same hyperparameters, where the learning rate of the critic is $10^{-2}$ and the learning rate of the actor is $10^{-4}$. The mean field network of the DMFG-AC uses a learning rate of $10^{-2}$. The discount factor is the same as the other three algorithms. In our implementation of MFAC, we do not use replay buffers unlike the implementation of \citet{pmlr-v80-yang18d}. We found this version of the actor-critic algorithm using the current updates (instead of delayed updates through the replay buffer) is more stable and performs better than the implementation of \citet{pmlr-v80-yang18d} in our experiments. Additionally, our implementations of both MFAC and DMFG-AC use complete decentralization during execution where the actors only need to use their local states (mean field does not need to be maintained anymore). Also since a separate network is being maintained for the stochastic policy (actor) we do not use the Boltzmann policy for the actor-critic based methods (MFAC and DMFG-AC).   

For the continuous action space Waterworld environment, every component of the obtained estimated mean field is normalized to be in the range $[-1,1]$ (the range of the action values) and we do not use a softmax for the output layer (since this a mixture of Dirac deltas as discussed in Section~\ref{sec:experiments}).  

Most hyperparameters are the same or closely match those used by prior works \citep{pmlr-v80-yang18d, Srirammtmfrl2020, guo2019learning, sriram2021partially} in many agent environments.

For the ride-sharing experiments, DMFG-QL uses the same network architecture as given in \citet{shah2020neural} for the NeurADP algorithm. The implementation of CO and NeurADP uses the implementation provided by \citet{shah2020neural} except that all agents are fully decentralized as mentioned before. Unlike the approach in \citet{shah2020neural}, each of our agents train their independent neural network using their local experiences. This network learns a suitable value function that can assign an appropriate score to each of the ride requests. This is done for both our implementations of NeurADP and DMFG-QL. For our baseline of CO \citep{javier2017ondemand}, we used the implementation from \citet{shah2020neural}, which used the immediate rewards as the score for the given requests along with a small bonus term pertaining to the time taken to process a request (faster processing of requests is encouraged). The mean field for the DMFG-QL implementation is obtained by processing a distribution of the ride requests across every node in the environment at each step. This mean field is made available to all agents during both training and testing. Also, we use a slightly different architecture for estimating the mean field in this domain (3 Relu layers of 50 nodes and an output softmax layer). 

The DDPG hyperparameters are based on \citet{lillicrap2015continuous} and the PPO hyperparameters are based on \citet{schulman2017proximal}. DDPG uses the learning rate of the actor as 0.001 and that of the critic as 0.002 and a discount factor of 0.9. We use the soft replacement strategy with learning rate 0.01. The batch size is 32. The PPO implementation uses the same batch size and discount factor. The actor learning rate is 0.0001 and critic learning rate is 0.0002. Independent PPO uses a single thread implementation, since the data correlations are already being broken by the non-stationary nature of the environment induced by the multiple agents. This is also computationally efficient. 

We use a set of 30 random seeds (1 -- 30) for all training experiments and a new set of 30 random seeds (31 -- 60) for all execution experiments.

\section{Complexity Analysis}\label{sec:complexityanalysisappendix}

A tabular version of our DMFG-QL algorithm is guaranteed to be linear in the total number of states, polynomial in the total number of actions, and constant in the number of agents. These guarantees are the same for both space complexity and time complexity. Comparatively, a tabular version of mean field $Q$-learning (MFQ) algorithm from \citet{pmlr-v80-yang18d} has a space complexity that is linear in the number of states, polynomial in the number of actions and exponential in the number of agents. This is due to Eq.~\ref{eq:updatemeana} requiring each agent to maintain $Q$-tables of all other agents. The time complexity is also the same as the space complexity in this case, since each entry in the table may need to be accessed in the worst case.

\section{More Related Work} \label{sec:morerealtedwork}

Mean field games have been used in the inverse reinforcement learning paradigm, where the objective is to determine a suitable reward function given expert demonstrations \citep{yang2017deep}. Model-based learning solutions have also been proposed for mean field games \citep{arman2013mean, yin2014learning}, though these methods suffer from restrictive assumptions on the model and the analysis does not generalize to other models. Methods such as \citet{cardaliaguet2017learning} perform a very computationally expensive computation using the full knowledge of the environment, which does not scale to large real-world environments. The work by \citet{fu2019actor} provides a mean field actor-critic algorithm along with theoretical analysis in the linear function approximation setting, unlike other works which only analyze the tabular setting \citep{guo2019learning, pmlr-v80-yang18d}. However, it is not clear if this algorithm is strong empirically since it does not contain empirical experiments that illustrate its performance. Further, this work considered the linear-quadratic setting that contains restrictions on the type of reward function.   \citet{mguni2018proceedings} explore the connection between MARL and mean field games in the model-free setting.

The mean field setting under the cooperative case can be handled using a mean field control model \citep{carmona2019linear}. In the linear-quadratic setting, \citet{carmona2019linear} prove that policy-gradient methods converge to local equilibrium. In further work, the same authors prove that $Q$-learning algorithms also converge \citep{carmona2019model}.

In the experiments, we consider a real-world application of our methods on the Ride-Pool Matching Problem (RMP) as originally defined in \citet{javier2017ondemand}. This is the problem considered by top ride-sharing platforms such as UberPool and Lyft-Line. The problem studies the efficiency of accepting ride requests by individual vehicles in such a way that the vehicles make more money (cater to more requests) per trip and the ride-sharing platform improves its ability to serve more orders. Previous approaches to solving the RMP problem have used standard optimization techniques \citep{ropke2009branch}. However, these methods are not scalable to large environments. One example from this class of methods is the zone path construction approach (ZAC) \citep{lowalekar2019zac}. The ZAC solves for an optimization objective where the environment is abstracted into zones and each zone path represents a set of trips. The available vehicles are assigned to suitable zone paths using the ZAC algorithm. Another proposed solution was to make greedy assignments \citep{javier2017ondemand}, which considers maximizing the immediate returns and not the long term discounted gains traditionally studied in RL. Prior work has also considered RL approaches for this problem \citep{Xu2018large, wang2018deep}. However, these works consider very restrictive settings, which do not model multiagent interactions between the different vehicles. The work by \citet{Li2019efficient} used a mean field approach for order dispatching, however, it assumes vehicles serve only one order at a time. The recent work by \citet{shah2020neural} introduced a very general formulation for the RMP where vehicles of arbitrary capacity are designed to serve batches of requests, with the whole solution scalable to thousands of vehicles and requests. They introduced a Deep $Q$-Network (DQN) \citep{mnih2015human} based RL approach (called neural approximate dynamic programming or NeurADP) that learns efficient assignment of ride requests. However, the proposed approach assumes a set of identical vehicles and uses centralized training, where all vehicles learn the same policy using centralized updates. This is not practical in real-world environments, where the individual vehicles are typically heterogeneous (many differences in vehicular capacity and preferences). We propose a mean field based decentralized solution to this problem, which is more practical than the approach by \citet{shah2020neural}.      

\section{Wall Clock Times}\label{sec:wallclocktimes}

The training for all the experiments on the simulated Petting Zoo domains was run on a 2 GPU virtual machine with 16 GB GPU memory per GPU. We use Nvidia Volta-100 (V100) GPUs for all these experiments. The CPUs use Skylake as the processor microarchitecture. We have a CPU memory of 178 GB. The Battle, Combined Arms and Gather experiments take an average of 5 days wall clock time to complete for all the considered algorithms. The Tiger-Deer experiments take an average of 4 days wall clock time to complete and the Waterworld experiments take an average of 2 days wall clock time to complete. 

The majority of our experiments on the RMP problem were run on a virtual machine with 4 Ampere-100 GPUs with a GPU memory of 40 GB each. The CPUs use Skylake as the processor microarchitecture. Each training takes an average of 5 days to complete execution.

\section{Statistical Significance Tests} 

We run a statistical significance test for all the main results in our paper. In the MAgent environments, for the training results we conduct an unpaired two-sided t-test at the last episode (2000) of training. For the execution results, we conduct a Fischer's exact test for the average performances. The tests are between the best performing algorithm (DMFG-QL or DMFG-AC) and the best performing baseline (IL, MFQ, MFAC). In the ride-sharing experiments we conduct an unpaired two-sided t-test for the average and standard deviation of performances in  Figure~\ref{fig:ridesharing}. The test is conducted between DMFG-QL and NeurADP (second best performing algorithm). We report the p-values for all the tests. As convention, we treat all p-values less than 0.05 as statistically significant outcomes. 

From the results in Table~\ref{table:statisticalsignificance}, we see that the better performance given by our algorithms (DMFG-QL or DMFG-AC) is statistically significant in all domains except the Tiger-Deer domain. As noted in Section~\ref{sec:experiments}, the MFQ and MFAC algorithms perform as well as the DMFG-QL and DMFG-AC algorithms in this cooperative domain. Though we observe that DMFG-QL gives the best overall average performance in both training and execution in the Tiger-Deer environment, the results are not statistically significant as noted from the p-values in Table~\ref{table:statisticalsignificance}. 

The statistical significance results for the ride-sharing experiment in Table~\ref{table:statisticalsignificanceridesharing} shows that our observations regarding the superior performance of DMFG-QL are statistically significant.

\begin{table}
\centering
\renewcommand{\arraystretch}{1.5}
\begin{tabular}{||p{0.20 \linewidth}  | p {0.18\linewidth}  | p {0.18\linewidth}  | p {0.21
\linewidth} ||} 
 \hline\hline
 Experimental Domain & Training (p-value) & Testing (p-value) & Statistically Significant \\ [0.5ex] 
 \hline\hline
 Battle  & p < 0.01 & p < 0.01 & Yes  \\ 
 \hline
  Combined Arms & p < 0.01 & p < 0.01 & Yes \\ 
 \hline
 Gather & p < 0.01 & p < 0.01 & Yes  \\ 
 \hline
  Tiger-Deer & p < 0.5 & p < 0.7 & No  \\
 \hline
   Waterworld & p < 0.01 & p < 0.01 & Yes \\
 \hline
 \hline
\end{tabular}
\caption{Statistical significance of our results in simulated experiments.}
\label{table:statisticalsignificance}
\end{table}

\begin{table}
\centering
\renewcommand{\arraystretch}{1.5}
\begin{tabular}{||p{0.11 \linewidth} | p{0.11 \linewidth} | p{0.11 \linewidth}  | p {0.18\linewidth}  | p {0.18\linewidth} ||} 
 \hline \hline
 \# of Vehicles ($N$) & Maxi-mum Pickup Delay ($\tau$) (sec) &  Capa-city ($c$) & p-value & Statistically Significant \\ [0.5ex] 
 \hline\hline
 80 & 580 &  10  & p < 0.01 & Yes   \\ 
 \hline
 100 & 580 & 10  &  p < 0.01 &  Yes \\ 
 \hline
 120 & 580 & 10  & p < 0.01 & Yes \\ 
 \hline \hline
   100 & 520 & 10 & p < 0.01 & Yes  \\
 \hline
   100 & 640 & 10 & p < 0.01 &  Yes \\
 \hline \hline
     100 & 580 & 8 & p < 0.01 &  Yes \\
 \hline
 100 & 580 & 12 & p < 0.02 &  Yes \\
 \hline
 \hline
\end{tabular}
\caption{Statistical significance of our results in the ride-sharing domain.}
\label{table:statisticalsignificanceridesharing}
\end{table}

\end{document}